\definecolor{teal}{HTML}{77C4D3}
\definecolor{newred}{HTML}{E66100}
\definecolor{newgreen}{HTML}{44AA99}
\definecolor{newyellow}{HTML}{E899D5}
\let\@msm@th@eqref\eqref
\renewcommand{\eqref}[1]{%
	\begingroup
	\leavevmode
	\color{newred}%
	\hypersetup{linkbordercolor=[named]{newred}}%
	\@msm@th@eqref{#1}%
	\endgroup
}
\DeclareMathOperator*{\E}{\mathbb{E}}
\newcommand{\argmin}{\arg\!\min}
\date{April 18, 2021 \vspace{.1cm}}
\def\defeq{\equiv}
\newcommand{\EE}[2]{\mathbb{E}_{#1\!\!}\left[#2\right]}
\def\E#1{\EE{\,}{#1}}
\def\bR{{\mathbf R}}
\def\bI{{\mathbf I}}
\def\bA{{\mathbf A}}
\def\br{{\mathbf r}}
\def\bf{{\mathbf f}}
\def\bg{{\mathbf g}}
\def\bF{{\mathbf F}}
\def\bu{{\mathbf u}}
\def\bw{{\mathbf w}}
\def\bB{{\mathbf B}}
\def\bU{{\mathbf U}}
\def\bV{{\mathbf V}}
\def\bX{{\mathbf X}}
\def\bD{{\mathbf D}}
\def\bT{{\mathbf T}}
\def\bJ{{\mathbf J}}
\def\bE{{\mathbf E}}
\def\bC{{\mathbf C}}
\def\bx{{\mathbf x}}
\def\be{{\mathbf e}}
\def\by{{\mathbf y}}
\def\bgamma{{\bm \gamma}}
\def\bLambda{{\bm \Lambda}}
\def\bOmega{{\bm \Omega}}
\def\biota{{\bm \iota}}
\def\balpha{{\bm \alpha}}
\def\bSigma{{\bm \Sigma}}
\def\bGamma{{\bm \Gamma}}
\def\bTheta{{\bm \Theta}}
\def\boldm{{\mathbf m}}
\def\boldeta{{\bm \eta}}
\def\bvarepsilon{{\bm \varepsilon}}
\DeclareMathAlphabet{\mathcal}{OMS}{cmsy}{m}{n}
\DeclarePairedDelimiter\abs{\lvert}{\rvert}%
\DeclarePairedDelimiter\norm{\lVert}{\rVert}%
\let\oldabs\abs
\def\abs{\@ifstar{\oldabs}{\oldabs*}}
\let\oldnorm\norm
\def\norm{\@ifstar{\oldnorm}{\oldnorm*}}
\newcommand{\1}[1]{\mathds{1}\left[#1\right]}
\newcommand*{\rom}[1]{\expandafter\@slowromancap\romannumeral #1@}
\theoremstyle{plain}
\newtheorem{thm}{Theorem}
\newtheorem{lem}{Lemma}
\newcommand{\vertiii}[1]{{\left\vert\kern-0.25ex\left\vert\kern-0.25ex\left\vert #1 
		\right\vert\kern-0.25ex\right\vert\kern-0.25ex\right\vert}}
{
	\theoremstyle{plain}
	
}
\def\mathcolor#1#{\@mathcolor{#1}}
\def\@mathcolor#1#2#3{%
	\protect\leavevmode
	\begingroup
	\color#1{#2}#3%
	\endgroup
}
\numberwithin{equation}{section}
\newcommand{\PreserveBackslash}[1]{\let\temp=\\#1\let\\=\temp}
\newcolumntype{C}[1]{>{\PreserveBackslash\centering}p{#1}}
\newcolumntype{R}[1]{>{\PreserveBackslash\raggedleft}p{#1}}
\newcolumntype{L}[1]{>{\PreserveBackslash\raggedright}p{#1}}
\definecolor{awesome-pink}{HTML}{EF4089}
\definecolor{awesome-coral}{HTML}{FF4B6B}
\normalfont\fontsize{15.5}{15}\scshape}{\thesection}{1em}{}
\normalfont\fontsize{13}{13}\scshape}{\thesubsection}{1em}{}
\begin{document}	
	\title{{\textsc{A Basket Half Full: Sparse Portfolios}}	}
\author{
	Ekaterina Seregina\thanks{\scriptsize{Correspondence to: Department of Economics, University of California, Riverside, CA 92521, USA. \newline E-mail: \texttt{ekaterina.seregina@email.ucr.edu}. \newline This research did not receive any specific grant from funding agencies in the public, commercial, or not-for-profit sectors. Declarations of interest: none. }}}
	\maketitle
	\thispagestyle{empty}
	
	\begin{abstract}

\linespread{1.3} \selectfont  
\noindent The existing approaches to sparse wealth allocations (1) are limited to low-dimensional setup when the number of assets is less than the sample size; (2) lack theoretical analysis of sparse wealth allocations and their impact on portfolio exposure; (3) are suboptimal due to the bias induced by an $\ell_1$-penalty. We address these shortcomings and develop an approach to construct sparse portfolios in high dimensions. Our contribution is twofold: from the theoretical perspective, we establish the oracle bounds of sparse weight estimators and provide guidance regarding their distribution. From the empirical perspective, we examine the merit of sparse portfolios during different market scenarios. We find that in contrast to non-sparse counterparts, our strategy is robust to recessions and can be used as a hedging vehicle during such times.

		\vskip 2mm
		\noindent \textit{Keywords}: High-dimensionality, Portfolio Optimization, Factor Investing, De-biasing, Post-Lasso, Approximate Factor Model
		\vskip 2mm
		
		\noindent \textit{JEL Classifications}: C13, C55, C58, G11, G17
		
		\newpage
	\end{abstract} 
	
	\newpage 
	\setlength{\baselineskip}{20pt}
	\setcounter{page}{1}	
	\section{Introduction}
	The search for the optimal portfolio weights reduces to the questions (i) which stocks to buy and (ii) how much to invest in these stocks. Depending on the strategy used to address the first question, the existing allocation approaches can be further broken down into the ones that invest in all available stocks, and the ones that select a subset out of the stock universe. The latter is referred to as a \textit{sparse portfolio}, since some assets will be excluded and get a zero weight leading to sparse wealth allocations. Any portfolio optimization problem requires the inverse covariance matrix, or \textit{precision} matrix, of excess stock returns as an input. In the era of big data, a search for the optimal portfolio becomes a high-dimensional problem: the number of assets, $p$, is comparable to or greater than the sample size, $T$. Constructing non-sparse portfolios in high dimensions has been the main focus of the existing research on asset management for a long time. In particular, many papers focus on developing an improved covariance or precision estimator to achieve desirable statistical properties of portfolio weights. In contrast, the literature on constructing sparse portfolio is scarce: it is limited to a low-dimensional framework and lacks theoretical analysis of the resulting sparse allocations. In this paper we fill this gap and propose a novel approach to construct sparse portfolios in high dimensions. We obtain the oracle bounds of sparse weight estimators and provide guidance regarding their distribution. From the empirical perspective, we examine the merit of sparse portfolios during the periods of economic growth, moderate market decline and severe economic downturns. We find that in contrast to non-sparse counterparts, our strategy is robust to recessions and can be used as a hedging vehicle during such times.
	
	As pointed out above, estimating high-dimensional covariance or precision matrix to improve portfolio performance of non-sparse strategies has received a lot of attention in the existing literature. \cite{Ledoit2004,Ledoit2017} developed linear and non-linear shrinkage estimators of covariance matrix, \cite{fan2013POET,fan2018elliptical} introduced a covariance matrix estimator when stock returns are driven by common factors under the assumption of a spiked covariance model. Once the covariance estimator is obtained, it is then inverted to get a precision matrix, the main input to any portfolio optimization problem. A parallel stream of literature has focused on estimating precision matrix directly, that is, avoiding the inversion step that leads to additional estimation errors, especially in high dimensions. \cite{GLASSO} developed an iterative algorithm that estimates the entries of precision matrix column-wise using penalized Gaussian log-likelihood (Graphical Lasso); \cite{meinshausen2006} used the relationship between regression coefficients and the entries of precision matrix to estimate the elements of the latter column by column (nodewise regression). \cite{cai2011constrained} use constrained $\ell_1$-minimization for inverse matrix estimation (CLIME). \cite{Caner2019} examined the performance of high-dimensional portfolios constructed using covariance and precision estimators and found that precision-based models outperform covariance-based counterparts in terms of the out-of-sample (OOS) Sharpe Ratio and portfolio return.
	
	From a practical perspective, apart from enjoying favorable statistical properties a successful wealth allocation strategy should be easy to maintain and monitor and it should be robust to economic downturns such that investors could use it as a hedging vehicle. Having this motivation in mind, we chose several popular covariance and precision-based estimators to construct non-sparse portfolios and explore their performance during the recent COVID-19 outbreak. Using daily returns of 495 constituents of the S\&P500 from May 25, 2018 -- September 24, 2020 (588 obs.), Table \ref{table1} reports the performance of the selected strategies: we included equal-weighted (EW) and Index portfolios, as well as precision-based nodewise regression estimator by \cite{meinshausen2006} (motivated by the recent application of this statistical technique to portfolio studied in \cite{Caner2019}), linear shrinkage covariance estimator by \cite{Ledoit2004} and CLIME by \cite{cai2011constrained}. We use May 25, 2018 -- October 23, 2018 (105 obs.) as a training period and October 24, 2018 -- September 24, 2020 (483 obs.) as the out-of-sample test period. We roll the estimation window over the test sample to rebalance the portfolios monthly. The left panel of Table \ref{table1} shows return, risk and Sharpe Ratio of portfolios over the training period, and the right panel reports cumulative excess return (CER) and risk over two sub-periods of interest: before the pandemic (January 2, 2019 -- December 31, 2019) and during the first wave of COVID-19 outbreak in the US (January 2, 2020 -- June 30, 2020). As evidenced by Table \ref{table1}, none of the  portfolios was robust to the downturn brought by pandemic and yielded negative CER. We noticed that similar pattern pertained in several other historic episodes of mild and severe downturns, such as the Global Financial Crisis (GFC) of 2007-09.\footnote{Please see the Empirical Application section for more details.}
	
	Studies that examine the relationship between portfolio performance and the number of stock holdings are scarce. \cite{Tidmore2019} used active US equity funds' quarterly data from January 2000 to December 2017 from Morningstar, Inc. to study the impact of concentration (measured by the number of holdings) on fund excess returns: they found that the effect was significant and fluctuated considerably over time. Notably, the relationship became negative in the period preceding and including the GFC. This indicates that holding sparse portfolios might be the key to hedging during downturns. To support this hypothesis, we further compare the performance of sparse vs non-sparse strategies in terms of utility gain to investors. Suppose we observe $i=1,\ldots,p$ excess returns over $t=1,\ldots,T$ period of time:  $\br_t=(r_{1t},\ldots,r_{pt})' \ {\sim}\ \mathcal{D} (\boldm, \bSigma)$.
	Consider the following mean-variance utility problem: $\text{min}_{\bw} \ -U \defeq \frac{\gamma}{2}\bw\bSigma\bw - \bw'\boldm, \ \text{s.t.} \ \bw'\biota = 1, \ \abs{\text{supp}(\bw)}\leq \bar{p}, \ \bar{p} \leq p$, where $\bw$ is a $p\times 1$ vector of portfolio weights, $\text{supp}(\bw)=\{i:w_i>0\}$ is the cardinality constraint that controls sparsity, and $\gamma$ determines the risk of an investor under the assumption of a normal distribution. When $\bar{p}=p$ the portfolio is non-sparse and the respective utility is denoted as $U^{\text{Non-Sparse}}$, while when $\bar{p}<p$ the utility of such sparse portfolio is denoted as $U^{\text{Sparse}}$. Figure \ref{f00} reports the ratio of utilities using monthly data from 2003:04 to 2009:12 on the constituents of the S\&P100 as a function of $\bar{p}$: we set $\gamma=3$ and vary $\bar{p}=\{5,10,15,20,30,\ldots,90\}$\footnote{Since the optimization problem with a cardinality constraint is not convex, we find a solution using Lagrangian relaxation procedure of \cite{Shaw2008}}. Our test sample includes two periods of particular interest: before the GFC (2004:01-2006:12) and during the GFC (2007:01-2009:12) As evidenced from Figure \ref{f00}: (1) for both time periods there exists a lower-dimensional subset of stocks which brings greater utility compared to non-sparse portfolios; (2) the number of stocks minimizing the ratio of utilities is smaller during the GFC compared to the period preceding it. Both findings are consistent with the empirical result of \cite{Tidmore2019} that including more stocks does not guarantee better performance and suggesting that holding a ``basket half full" instead can help achieve superior performance even in stressed market scenarios.	 
\begin{table}[htb!]
	\centering
	\resizebox{\textwidth}{!}{%
		\begin{tabular}{@{}cccccccc@{}}
			\toprule
			& \multicolumn{3}{c}{\begin{tabular}[c]{@{}c@{}}Total OOS Performance\\ 10/24/19--09/24/20\end{tabular}}            & \multicolumn{2}{c}{\begin{tabular}[c]{@{}c@{}}Before the Pandemic\\ 01/02/19--12/31/19\end{tabular}} & \multicolumn{2}{c}{\begin{tabular}[c]{@{}c@{}}During the Pandemic\\ 01/02/20--06/30/20\end{tabular}} \\ \midrule
			& \textbf{\begin{tabular}[c]{@{}c@{}}Return\\($\times 100$)\end{tabular}} & \textbf{\begin{tabular}[c]{@{}c@{}}Risk\\($\times 100$)\end{tabular}} & \textbf{\begin{tabular}[c]{@{}c@{}}Sharpe Ratio\\ {}\end{tabular}} & \textbf{\begin{tabular}[c]{@{}c@{}}CER\\($\times 100$)\end{tabular}}      & \textbf{\begin{tabular}[c]{@{}c@{}}Risk\\($\times 100$)\end{tabular}}     & \textbf{\begin{tabular}[c]{@{}c@{}}CER\\($\times 100$)\end{tabular}}      & \textbf{\begin{tabular}[c]{@{}c@{}}Risk\\($\times 100$)\end{tabular}}     \\ \midrule
			EW & 0.0108 & 1.8781 & \multicolumn{1}{c|}{0.0058} & 28.5420 & 0.8010 & -19.7207 & 3.3169 \\
			Index & 0.0351 & 1.7064 & \multicolumn{1}{c|}{0.0206} & 27.8629 & 0.7868 & -9.0802 & 2.9272 \\
			Nodewise Regr'n & 0.0322 & 1.6384 & \multicolumn{1}{c|}{0.0196} & 29.6292 & 0.6856 & -11.7431 & 2.8939\\
			CLIME & 0.0793 & 3.1279 & \multicolumn{1}{c|}{0.0373} & 31.5294 & 1.0215 & -25.3004 & 3.8972 \\
			LW & 0.0317 & 1.7190 & \multicolumn{1}{c|}{0.0184} & 29.5513 & 0.7924 & -14.9328 & 3.0115 \\
			Our Post-Lasso-based & 0.1247 & 1.7254 & \multicolumn{1}{c|}{0.0723} & 45.2686 & 1.0386 & 12.4196 & 2.8554 \\
			Our De-biased Estimator & 0.0275 & 0.5231 & \multicolumn{1}{c|}{0.0526} & 23.7629 & 0.4972 & 6.5813 & 0.5572 \\
			\bottomrule
		\end{tabular}%
	}
	\caption{\textit{Performance of non-sparse and sparse portfolios: return ($\times 100$), risk ($\times 100$) and Sharpe Ratio over the training period (left), CER ($\times 100$) and risk ($\times 100$) over two sub-periods (right). Weights are estimated using the standard Global Minimum Variance formula. In-sample: May 25, 2018 -- October 23, 2018 (105 obs.), Out-of-sample (OOS): October 24, 2018 -- September 24, 2020 (483 obs.)}}
	\label{table1}
\end{table}		
		\begin{figure}[!htbp]
		\centering
		\includegraphics[width=0.77\textwidth]{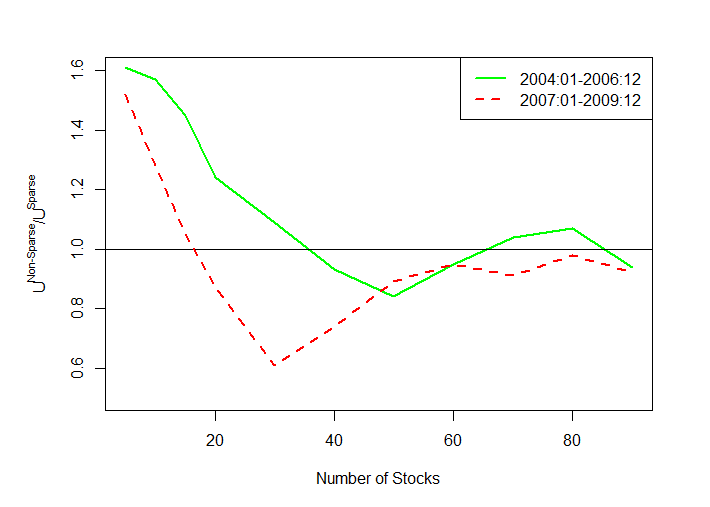}
		\caption{\linespread{1.3} \selectfont\textit{The ratio of non-sparse and sparse portfolio utilities averaged over the test window.}}
		\label{f00}
	\end{figure}

	In order to create a sparse portfolio, that is, a portfolio with many zero entries in the weight vector, we can use an $\ell_1$-penalty (Lasso) on the portfolio weights which shrinks some of them to zero (see \cite{Fan2019Sparse},  \cite{Ao2019}, \cite{Li2015sparse}, \cite{Brodie2009sparse} among others). \cite{caccioli2016liquidity} proved the mathematical equivalence of adding an $\ell_1$-penalty and  
	controlling transaction costs associated with the bid-ask spread impact of single and sequential trades executed in a very short time. This indicates another advantage of sparse portfolios: market liquidity dries up during economic downturns which increases bid-ask spreads, a measure of liquidity costs. Henceforth, regularizing portfolio positions accounts for the increased liquidity risk associated with acquiring and liquidating positions.
	The existing literature on sparse wealth allocations is scarce and has several drawbacks: (1) it is limited to low-dimensional setup when $p<T$, whereas sparsity becomes especially important in high-dimensional scenarios; (2) it lacks theoretical analysis of sparse wealth allocations and their impact on portfolio exposure; (3) the use of an $\ell_1$-penalty produces biased estimates (see \cite{Zhang2014,Javanmard2014confidence,Javanmard2014hypothesis,Buhlmann2014,Belloni2015uniform,Javanmard2018debiasing} among others), however, this issue has been overlooked in the context of portfolio allocation. This paper addresses the aforementioned drawbacks and develops an approach to construct sparse portfolios in high dimensions. Our contribution is twofold: from the theoretical perspective, we establish the oracle bounds of sparse weight estimators and provide guidance regarding their distribution. From the empirical perspective, we examine the merit of sparse portfolios during different market scenarios. We find that in contrast to non-sparse counterparts, our strategy is robust to recessions and can be used as a hedging vehicle during such times. To illustrate, the last two rows of Table \ref{table1} show the performance of two sparse strategies proposed in this paper: both approaches outperform non-sparse counterparts in terms of total OOS Sharpe Ratio, and they produce positive CER during the pandemic, as well as in the period preceding it. Figure \ref{network} shows the stocks selected by post-Lasso in August, 2019 and in May, 2020: the colors serve as a visual guide to identify groups of closely-related stocks (stocks of the same color do not necessarily correspond to the same sector). 
	Our framework makes use of the tool from the network theory called nodewise regression which not only satisfies desirable statistical properties, but also allows us to study whether certain industries could serve as safe havens during recessions. We find that such non-cyclical industries as consumer staples, healthcare, retail and food were driving the returns of the sparse portfolios during both GFC and COVID-19 outbreak, whereas insurance sector was the least attractive investment in both periods.

	This paper is organized as follows: Section 2 introduces sparse de-biased portfolio and sparse portfolio using post-Lasso. Section 3 develops a new high-dimensional precision estimator called Factor Nodewise regression. Section 4 develops a framework for factor investing. Section 5 contains theoretical results and Section 6 validates these results using simulations. Section 7 provides empirical application. Section 8 concludes.
	 	 \phantomsection
	 \subsection*{Notation}
	 \addcontentsline{toc}{section}{Notation}
	 For the convenience of the reader, we summarize the notation to be used throughout the paper. Let $\mathcal{S}_p$ denote the set of all $p \times p$ symmetric matrices. For any matrix $\bC$, its $(i,j)$-th element is denoted as $c_{ij}$.
	Given a vector $\bu\in \mathbb{R}^d$ and parameter $a\in \lbrack1,\infty)$, let $\norm{\bu}_a$ denote $\ell_a$-norm. Given a matrix $\bU \in\mathcal{S}_p$, let $\Lambda_{\text{max}}(\bU) \defeq \Lambda_1(\bU) \geq \Lambda_2(\bU)\geq \ldots \Lambda_{\text{min}}(\bU) \defeq \Lambda_p(\bU)$ be the eigenvalues of $\bU$, and $\text{eig}_K(\bU) \in \mathbb{R}^{K\times p}$ denote the first $K\leq p$ normalized eigenvectors corresponding to $\Lambda_1(\bU), \ldots \Lambda_K(\bU)$. Given parameters  $a,b\in \lbrack1,\infty)$, let $\vertiii{\bU}_{a,b}=\max_{\norm{\by}_a=1}\norm{\bU\by}_{b}$ denote the induced matrix-operator norm. The special cases are $\vertiii{\bU}_1\defeq \max_{1\leq j\leq p}\sum_{i=1}^{p}\abs{u_{i,j}}$ for the $\ell_1/\ell_1$-operator norm; the operator norm ($\ell_2$-matrix norm) $\vertiii{\bU}_{2}^{2}\defeq\Lambda_{\text{max}}(\bU\bU')$ is equal to the maximal singular value of $\bU$; $\vertiii{\bU}_{\infty}\defeq \max_{1\leq j\leq p}\sum_{i=1}^{p}\abs{u_{j,i}}$ for the $\ell_{\infty}/\ell_{\infty}$-operator norm. Finally, $\norm{\bU}_{\text{max}}=\max_{i,j}\abs{u_{i,j}}$ denotes the element-wise maximum, and $\vertiii{\bU}_{F}^{2}=\sum_{i,j}u_{i,j}^{2}$ denotes the Frobenius matrix norm. We also use the following notations: $a\vee b=\max\{a,b\}$, and $a\wedge b=\min\{a,b\}$. For an event $A$, we say that $A \  \text{wp} \rightarrow 1$ when $A$ occurs with probability approaching $1$ as $T$ increases.
	\section{Sparse Portfolios}

There exist several widely used portfolio weight formulations depending on the type of optimization problem solved by an investor. Suppose we observe $p$ assets (indexed by $i$) over $T$ period of time (indexed by $t$). Let $\br_t=(r_{1t}, r_{2t},\ldots,r_{pt})' \sim  \mathcal{D} (\boldm, \bSigma)$ be a $p \times 1$ vector of \textit{excess} returns drawn from a distribution $\mathcal{D}$, where $\boldm$ and $\bSigma$ are unconditional mean and covariance of excess returns, and $\mathcal{D}$ belongs to either sub-Gaussian or elliptical families. When $\mathcal{D} = \mathcal{N}$, the precision matrix $\bSigma^{-1}\defeq \bTheta$ contains information about conditional dependence between the variables. For instance, if $\theta_{ij}$, which is the $ij$-th element of the precision matrix, is zero, then the variables $i$ and $j$ are conditionally independent, given the other variables. The goal of the Markowitz theory is to choose assets weights in a portfolio \textit{optimally}. We will study two criteria of optimality: the first is a well-known Markowitz weight-constrained optimization problem, and the second formulation relaxes constraints on portfolio weights.

 The first optimization problem, which will be referred to as \textit{Markowitz weight-constrained problem (MWC)}, searches for assets weights such that the portfolio achieves a desired expected rate of return with minimum risk, under the restriction that all weights sum up to one.
The aforementioned goal can be formulated as the following quadratic optimization problem:
\begin{equation} \label{sys1}
	\min_{\bw}\frac{1}{2} \bw'\bSigma\bw,  \ \text{s.t.} \ \bw'\biota =1 \ \text{and} \ \boldm'\bw\geq\mu, 
\end{equation}
where $\bw$ is a $p \times 1$ vector of assets weights in the portfolio, $\biota$ is a $p \times 1$ vector of ones, and $\mu$ is a desired expected rate of portfolio return. The constraint in \eqref{sys1} requires portfolio weights to sum up to one - this assumption can be easily relaxed and we will demonstrate the implications of this constraint on portfolio weights.

If $\boldm'\bw>\mu$, then the solution to \eqref{sys1} yields the \textit{global minimum-variance (GMV)} portfolio weights $\bw_{GMV}$:
\begin{equation} \label{eq2}
\bw_{GMV}=(\biota'\bTheta\biota)^{-1}\bTheta\biota.
\end{equation}

If $\boldm'\bw=\mu$, the solution to \eqref{sys1} is
\begin{align}
&\bw_{MWC}=(1-a_1)\bw_{GMV}+a_1 \bw_{M},\label{eq3}\\
&\bw_{M}=(\biota'\bTheta\boldm)^{-1}\bTheta\boldm,  \label{eq4}\\
&a_1=\dfrac{\mu(\boldm'\bTheta\biota)(\biota'\bTheta\biota)-(\boldm'\bTheta\biota)^2}{(\boldm'\bTheta\boldm)(\biota'\bTheta\biota)-(\boldm'\bTheta\biota)^2}, \label{eq5}
\end{align}
where $\bw_{MWC}$ denotes the portfolio allocation with the constraint that the weights need to sum up to one and $\bw_{M}$ captures all mean-related market information.

	The second optimization problem, which will be referred to as \textit{Markowitz risk-constrained (MRC)} problem, has the same objective as in \eqref{sys1}, but portfolio weights are not required to sum up to one:	
\begin{equation} \label{sys2}
\min_{\bw}\frac{1}{2} \bw'\bSigma\bw \quad \text{s.t.} \ \boldm'\bw\geq\mu.
\end{equation}
It can be easily shown that the solution to \eqref{sys2} is:
\begin{equation}\label{w1}
\bw^{*}_{1}=\frac{\mu\bTheta\boldm}{\boldm'\bTheta\boldm}.
\end{equation}
Alternatively, instead of searching for a portfolio with a specified desired expected rate of return and minimum risk, one can maximize expected portfolio return given a maximum risk-tolerance level:
\begin{equation} \label{sys3} 
\max_{\bw} \bw'\boldm  \quad  \text{s.t.} \ \bw'\bSigma\bw\leq \sigma^2.
\end{equation}
In this case, the solution to \eqref{sys3} yields:
\begin{equation}\label{w2}
\bw^{*}_{2}= \frac{\sigma^2}{\bw'\boldm}\bTheta\boldm= \frac{\sigma^2}{\mu}\bTheta\boldm.
\end{equation}
To get the second equality in \eqref{w2} we used the definition of $\mu$ from \eqref{sys1} and \eqref{sys2}. It follows that if $\mu=\sigma\sqrt{\theta}$, where $\theta\defeq \boldm'\bTheta\boldm$ is the squared Sharpe Ratio, then the solution to \eqref{sys2} and \eqref{sys3} admits the following expression:
\begin{equation}\label{ee20}
\bw_{MRC}=\frac{\sigma}{\sqrt{\boldm'\bTheta\boldm}}\Theta\boldm=\frac{\sigma}{\sqrt{\theta}}\balpha,
\end{equation}
where  $\balpha\defeq \bTheta\boldm$. Equation \eqref{ee20} tells us that once an investor specifies the desired return, $\mu$, and maximum risk-tolerance level, $\sigma$, this pins down the Sharpe Ratio of the portfolio which makes the optimization problems of minimizing risk and maximizing expected return of the portfolio in \eqref{sys2} and \eqref{sys3} identical.

This brings us to three alternative portfolio allocations commonly used in the existing literature: Global Minimum-Variance Portfolio in \eqref{eq2}, weight-constrained Markowitz Mean-Variance in \eqref{eq3} and maximum-risk-constrained Markowitz Mean-Variance in \eqref{ee20}. Below we summarize the aforementioned portfolio weight expressions:
\begin{align}
&\text{GMV:} \quad &&\bw_{GMV}=(\biota'\bTheta\biota)^{-1}\bTheta\biota,\label{e156}\\
&\text{MWC} \quad &&\bw_{MWC}=(1-a_1)\bw_{GMV}+a_1 \bw_{M},\label{e157}\\
&\text{where} \quad &&\bw_{M}=(\biota'\bTheta\boldm)^{-1}\bTheta\boldm, \nonumber\\
& &&a_1=\dfrac{\mu(\boldm'\bTheta\biota)(\biota'\bTheta\biota)-(\boldm'\bTheta\biota)^2}{(\boldm'\bTheta\boldm)(\biota'\bTheta\biota)-(\boldm'\bTheta\biota)^2}, \nonumber\\
&\text{MRC:} \quad &&\bw_{MRC}=\frac{\sigma}{\sqrt{\theta}}\balpha, \label{e158}\\
&\text{where} \quad 
&&\balpha =\bTheta\boldm, \quad \theta = \boldm'\bTheta\boldm \nonumber
\end{align}
	
So far we have considered allocation strategies that put non-zero weights to all assets in the financial portfolio. As an implication, an investor needs to buy a certain amount of each security even if there are a lot of small weights. However, oftentimes investors are interested in managing a few assets which significantly reduces monitoring and transaction costs and was shown to outperform equal weighted and index portfolios in terms of the Sharpe Ratio and cumulative return (see \cite{Fan2019Sparse},  \cite{Ao2019}, \cite{Li2015sparse}, \cite{Brodie2009sparse} among others). This strategy is based on holding a \textit{sparse portfolio}, that is, a portfolio with many zero entries in the weight vector. 

\subsection{Sparse De-Biased Portfolio}
Let us first introduce some notations. The sample mean and sample covariance matrix have standard formulas: $ \widehat{\boldm}=\dfrac{1}{T}\sum_{t=1}^{T}\br_{t}$ and $\widehat{\bSigma}=\dfrac{1}{T}\sum_{t=1}^{T}(\br_t-\widehat{\boldm})(\br_t-\widehat{\boldm})^{'}$. Our empirical application shows that risk-constrained Markowitz allocation in \eqref{e158} outperforms GMV and MWC portfolios in \eqref{e156}-\eqref{e157}. Therefore, we first study sparse MRC portfolios.
Our goal is to construct a sparse vector of portfolio weights given by \eqref{e158}. To achieve this we use the following equivalent and unconstrained regression representation of the mean-variance optimization in \eqref{sys2} and \eqref{sys3}:
\begin{equation}\label{ee65}
\bw_{MRC}=\argmin_{\bw}\E{y-\bw'\br_t}, \quad \text{where} \quad y\defeq \dfrac{1+\theta}{\theta}\mu \equiv \sigma\dfrac{1+\theta}{\sqrt{\theta}}.
\end{equation}
The sample counterpart of \eqref{ee65} is written as:
\begin{equation}
\bw_{MRC}=\argmin_{\bw}\frac{1}{T}\sum_{t=1}^{T}(y-\bw'\br_{t})^{2}.
\end{equation}
\cite{Ao2019} prove that the weight allocation from \eqref{ee65} is equivalent to \eqref{e158}. The sparsity is introduced through Lasso which yields the following constrained optimization problem:
\begin{equation}\label{ee67}
\bw_{\text{MRC, SPARSE}}=\argmin_{\bw}\frac{1}{T}\sum_{t=1}^{T}(y-\bw'\br_{t})^{2}+2\lambda\norm{\bw}_1.
\end{equation}

Now we propose two extensions to the setup \eqref{ee67}. First, the estimator $\bw_{\text{MRC, SPARSE}}$ is infeasible since $\theta$ used for constructing $y$ is unknown. \cite{Ao2019} construct an estimator of $\theta$ under normally distributed excess returns, assuming $p/T \rightarrow \rho \in (0,1)$ and the sample size $T$ is required to be larger than the number of assets $p$. Their paper uses an unbiased estimator proposed in \cite{Kan2007optimal}: $\hat{\theta}=((T-p-2)\widehat{\boldm}'\widehat{\bSigma}^{-1}\widehat{\boldm}-p)/T$, where $\widehat{\boldm}$ and $\widehat{\bSigma}^{-1}$ are sample mean and inverse of the sample covariance matrix respectively. One of the limitations of the model studied by \cite{Ao2019} is that it cannot handle high dimensions. In both simulations and empirical application the maximum number of stocks used by the authors is limited to 100. Another limitation of \cite{Ao2019} approach is that they do not correct the bias introduced by imposing $\ell_1$-constraint in \eqref{ee67}. However, it is well-known that the estimator in \eqref{ee67} is biased and the existing literature proposes several de-biasing techniques (see \cite{Zhang2014,Javanmard2014confidence,Javanmard2014hypothesis,Buhlmann2014,Belloni2015uniform,Javanmard2018debiasing} among others).

To address the first aforementioned limitation, we propose to use an estimator of a high-dimensional precision matrix discussed in the next section. The suggested estimator is appropriate for high-dimensional settings, it can handle cases when the sample size is less than the number of assets, and it is always non-negative by construction\footnote{Our empirical results suggest that the unbiased estimator $\hat{\theta}=((T-p-2)\widehat{\boldm}'\widehat{\bSigma}^{-1}\widehat{\boldm}-p)/T$ is oftentimes negative even after using the adjusted estimator defined in \cite{Kan2007optimal} (p. 2906).}. Consequently, the estimator of $y$ is
\begin{equation}\label{e2.17}
\widehat{y}\defeq \dfrac{1+\hat{\theta}}{\hat{\theta}}\mu \equiv \sigma\dfrac{1+\hat{\theta}}{\sqrt{\hat{\theta}}}.
\end{equation}

To approach the second limitation, motivated by \cite{Buhlmann2014}, we propose the de-biasing technique that uses the nodewise regression estimator of the precision matrix. First, let $\bR$ be a $T \times p$ matrix of excess returns stacked over time and $	\widehat{\by}$ be a $T \times 1$ constant vector. Consider a high-dimensional linear model
\begin{equation} \label{e4.9}
\widehat{\by}=\bR\bw+\be,\quad \text{where} \quad \be \sim \mathcal{D}(\bm{0},\sigma_{e}^{2}\bI).
\end{equation}
 We study high-dimensional framework $p\geq T$ and in the asymptotic results we require $\log p/T=o(1)$. Let us rewrite \eqref{ee67}:
\begin{equation}\label{ee69}
\bw_{\text{MRC, SPARSE}}=\argmin_{\bw \in \mathbb{R}^p}\frac{1}{T}\norm{\widehat{\by}-\bR\bw}_{2}^{2}+2\lambda\norm{\bw}_1.
\end{equation}
The estimator in \eqref{ee67} satisfies the following KKT conditions:
\begin{equation}\label{ee70}
-\bR'(\widehat{\by}-\bR\widehat{\bw})/T + \lambda\widehat{\bg}=0,
\end{equation}
\begin{equation}
\norm{\widehat{\bg}}_{\infty}\leq 1 \quad \text{and} \quad \hat{g_i}=\text{sign}(\hat{w}_{i}) \quad \text{if} \quad \hat{w}_{i}\neq 0.
\end{equation}
where $\widehat{\bg}$ is a $p \times 1$ vector arising from the subgradient of $\norm{\bw}_{1}$.
Let $\widehat{\bSigma}=\bR'\bR/T$, then we can rewrite the KKT conditions:
\begin{equation}\label{ee72}
\widehat{\bSigma}(\widehat{\bw}-\bw)+\lambda \widehat{\bg}=\bR'\be/T.
\end{equation}
Multiply both sides of \eqref{ee72} by $\widehat{\bTheta}$ obtained from Algorithm \ref{alg3}, add and subtract $(\widehat{\bw}-\bw)$, and rearrange the terms:
\begin{equation}\label{ee73}
\widehat{\bw}-\bw+\widehat{\bTheta}\lambda\widehat{\bg} = \widehat{\bTheta}\bR'\be/T-\underbrace{\sqrt{T}(\widehat{\bTheta}\widehat{\bSigma}-\bI_p)(\widehat{\bw}-\bw)}_{\Delta}/\sqrt{T}.
\end{equation}
In the section with the theoretical results we show that $\Delta$ is asymptotically negligible under certain sparsity assumptions\footnote{Note that we cannot directly apply Theorem 2.2 of \cite{Buhlmann2014} since $\br_c$ needs to be estimated and we first need to show consistency of the respective estimator.}. Combining \eqref{ee70} and \eqref{ee73} brings us to the de-biased estimator of portfolio weights:
\begin{equation}\label{ee74}
\widehat{\bw}_{\text{MRC, DEBIASED}}=\widehat{\bw}+\widehat{\bTheta}\lambda\widehat{\bg} = \widehat{\bw}+\widehat{\bTheta}\bR'(\widehat{\by}-\bR\widehat{\bw})/T.
\end{equation}
The properties of the proposed de-biased estimator are examined in Section 5.

\subsection{Sparse Portfolio Using Post-Lasso}
One of the drawbacks of the de-biased portfolio weights in \eqref{ee74} is that the weight formula is tailored to a specific portfolio choice that maximizes an unconstrained Sharpe Ratio (i.e. MRC in \eqref{e158}). However, it is desirable to accommodate preferences of different types of investors who might be interested in weight allocations corresponding to GMV \eqref{e156} or MWC \eqref{e157} portfolios. At the same time, we are willing to stay within the framework of sparse allocations. One of the difficulties that precludes us from pursuing a similar technique as in \eqref{ee67} is the fact that once the weight constraint is added, the optimization problem in \eqref{ee67} has two solutions depending on whether $\biota'\bTheta\boldm$ is positive or negative. As shown in \cite{maller2003new}, when $\biota'\bTheta\boldm <0$, the minimum value cannot be achieved exactly for a specified portfolio allocation that satisfies the full investment constraint. Hence, one can design an approximate solution to approach the supremum as closely as desired.

 To overcome this difficulty, we propose to use Lasso regression in \eqref{ee69} for selecting a subset of stocks, and then constructing a financial portfolio using any of the weight formulations in \eqref{e156}-\eqref{e158}. The procedure to estimate sparse portfolio using post-Lasso is described in Algorithm \ref{algPL}.
	\begin{algorithm}[H]
	\caption{Sparse Portfolio Using Post-Lasso}
	\label{algPL}
	\begin{algorithmic}[1]
		\STATE Use Lasso regression in \eqref{ee69} to select the model $\widehat{\Xi}\defeq \text{support}(\widehat{\bw})$
		\begin{itemize}
			\item Apply additional thresholding to remove stocks with small estimated weights:
			\begin{equation*}
			\widehat{\bw}(t)=(\widehat{w}_j\1{\abs{\widehat{w}_j }>t}, \ j=1,\ldots,p),
			\end{equation*}
			where $t\geq 0$ is the thresholding level.
			\item The corresponding selected model is denoted as $\widehat{\Xi}(t)\defeq \text{support}(\widehat{\bw}(t))$. When $t=0$, $\widehat{\Xi}(t)=\widehat{\Xi}$.
		\end{itemize}
		\STATE Choose a desired portfolio formulation in \eqref{e156}-\eqref{e158} and apply it to the selected subset of stocks $\widehat{\Xi}(t)$.
		\begin{itemize}
			\item When $\text{card}(\widehat{\Xi}(t)) < \widetilde{t}$, use the inverse of the sample covariance matrix as an estimator of $\bTheta$. Otherwise, apply the estimator of precision matrix described in Section 3.
		\end{itemize}
	\end{algorithmic}
\end{algorithm}	

	\section{Factor Nodewise Regression}
	In this section we first review a nodewise regression (\cite{meinshausen2006}), a popular approach to estimate a precision matrix. After that we propose a novel estimator which accounts for the common factors in the excess returns.
	
	In the high-dimensional settings it is necessary to regularize the precision matrix, which means that some of the entries $\theta_{ij}$  will be zero. In other words, to achieve consistent estimation of the inverse covariance, the estimated precision matrix should be sparse.
	\subsection{Nodewise Regression}
One of the approaches to induce sparsity in the estimation of precision matrix is to solve for $\widehat{\bTheta}$ one column at a time via linear regressions, replacing population moments by their sample counterparts. When we repeat this procedure for each variable $j=1,\ldots,p$, we will estimate the elements of $\widehat{\bTheta}$ column by column using $\{\br_t\}_{t=1}^{T}$ via $p$ linear regressions. \cite{meinshausen2006} use this approach to incorporate sparsity into the estimation of the precision matrix. They fit $p$ separate Lasso regressions using each variable (node) as the response and the others as predictors to estimate $\widehat{\bTheta}$. This method is known as the \enquote{nodewise} regression and it is reviewed below based on \cite{Buhlmann2014} and \cite{Caner2019}.
	
	Let $\br_j$ be a $T \times 1$ vector of observations for the $j$-th regressor, the remaining covariates are collected in a $T \times (p-1)$ matrix $\bR_{-j}$.  For each $j=1,\ldots,p$ we run the following Lasso regressions:
	\begin{equation}\label{e21}
	\widehat{\bgamma}_j= \argmin_{\bgamma\in \mathbb{R}^{p-1}}\Big(\norm{\br_j-\bR_{-j}\bgamma}_{2}^{2}/T+2\lambda_j\norm{\bgamma}_1\Big),
	\end{equation}
	where $\widehat{\bgamma}_j=\{\widehat{\gamma}_{j,k}; j=1,\ldots,p, k\neq j\}$ is a $(p-1)\times 1$ vector of the estimated regression coefficients that will be used to construct the estimate of the precision matrix, $\widehat{\bTheta}$. Define
	\begin{equation}
	\widehat{\bC}=\begin{pmatrix}
	1&-\widehat{\gamma}_{1,2}&\cdots&-\widehat{\gamma}_{1,p}\\
	-\widehat{\gamma}_{2,1}&1&\cdots&-\widehat{\gamma}_{2,p}\\
	\vdots&\vdots&\ddots&\vdots\\
	-\widehat{\gamma}_{p,1}&-\widehat{\gamma}_{p,2}&\cdots&1\\
	\end{pmatrix}.
	\end{equation} 
	For $j=1,\ldots,p$, define
	\begin{equation}
	\hat{\tau}_{j}^{2} = \norm{\br_j-\bR_{-j}\widehat{\bgamma}_j}_{2}^{2}/T+\lambda_j\norm{\widehat{\bgamma}_j}_1
	\end{equation}
	and write
	\begin{equation}
	\widehat{\bT}^2 = \text{diag}(\hat{\tau}_{1}^{2},\ldots,\hat{\tau}_{p}^{2}).
	\end{equation}
	The approximate inverse is defined as
	\begin{equation} \label{e25}
	\widehat{\bTheta} = \widehat{\bT}^{-2} \widehat{\bC}.
	\end{equation}
	The procedure to estimate the precision matrix using nodewise regression is summarized in Algorithm \ref{alg1b}.
	\begin{algorithm}[H]
		\caption{Nodewise Regression by \cite{meinshausen2006} (MB)}
		\label{alg1b}
		\begin{algorithmic}[1]
			\STATE Repeat for $j=1,\ldots,p$ :
			\begin{itemize}
				\item Estimate $\widehat{\bgamma}_j$ using \eqref{e21} for a given $\lambda_j$.
				\item  Select $\lambda_j$ using a suitable information criterion.
			\end{itemize}
			\STATE Calculate $\widehat{\bC}$ and $\widehat{\bT}^2$ .
			\STATE Return $\widehat{\bTheta} = \widehat{\bT}^{-2}  \widehat{\bC}$.
		\end{algorithmic}
	\end{algorithm}	
	One of the caveats to keep in mind when using the nodewise regression method is that the estimator in \eqref{e25} is not self-adjoint. \cite{Caner2019} show (see their Lemma A.1) that $\widehat{\bTheta}$ in \eqref{e25} is positive definite with high probability, however, it could still occur that $\widehat{\bTheta}$ is not positive definite in finite samples. To resolve this issue we use the matrix symmetrization procedure as in \cite{fan2018elliptical} and then use eigenvalue cleaning as in \cite{Callot2017} and \cite{Hautsch2012}. First, the symmetric matrix is constructed as
	\begin{equation}\label{symmetr}
	\widehat{\theta}^{s}_{ij}=\widehat{\theta}_{ij} \1{\abs{\widehat{\theta}_{ij}}\leq \abs{\widehat{\theta}_{ji}}}+\widehat{\theta}_{ji} \1{\abs{\widehat{\theta}_{ij}} > \abs{\widehat{\theta}_{ji}}},
	\end{equation}
	where $\widehat{\theta}_{ij}$ is the $(i,j)$-th element of the estimated precision matrix from \eqref{e25}. Second, we use eigenvalue cleaning to make $\widehat{\bTheta}^{s}$ positive definite: write the spectral decomposition $\widehat{\bTheta}^{s}=\widehat{\bV}'\widehat{\bLambda}\widehat{\bV}$, where $\widehat{\bV}$ is a matrix of eigenvectors and $\widehat{\bLambda}$ is a diagonal matrix with $p$ eigenvalues $\widehat{\bLambda}_{i}$ on its diagonal. Let $\bLambda_{m}\defeq\min\{\widehat{\bLambda}_{i}|\widehat{\bLambda}_{i}>0\}$. We replace all $\widehat{\bLambda}_{i}<\bLambda_{m}$ with $\bLambda_{m}$ and define the diagonal matrix with cleaned eigenvalues as $\widetilde{\bLambda}$. We use $\widetilde{\bTheta}=\widehat{\bV}'\widetilde{\bLambda}\widehat{\bV}$ which is symmetric and positive definite.
	
	\subsection{Factor Nodewise Regression}
	The arbitrage pricing theory (APT), developed by \cite{APTRoss}, postulates that expected returns on securities should be related to their covariance with the common components or factors only. The goal of the APT is to model the tendency of asset returns to move together via factor decomposition. Assume that the return generating process ($\br_t$) follows a $K$-factor model:
	\begin{align} \label{equ1}
	&\underbrace{\br_t}_{p \times 1}=\bB \underbrace{\bf_t}_{K\times 1}+ \ \bvarepsilon_t,\quad t=1,\ldots,T
	\end{align}
	where $\bf_t=(f_{1t},\ldots, f_{Kt})'$ are the factors, $\bB$ is a $p \times K$ matrix of factor loadings, and $\bvarepsilon_t$ is the idiosyncratic component that cannot be explained by the common factors. Factors in \eqref{equ1} can be either observable, such as in \cite{Fama3Factor,Fama5Factor}, or can be estimated using statistical factor models.
	
	In this subsection we examine how to approach the portfolio allocation problems in \eqref{e156}-\eqref{e158} using a factor structure in the returns. Our approach, called \textit{Factor Nodewise Regression}, uses the estimated common factors to obtain sparse precision matrix of the idiosyncratic component. The resulting estimator is used to obtain the precision of the asset returns necessary to form portfolio weights.
	
	As in \cite{fan2013POET}, we consider a spiked covariance model when the first $K$ principal eigenvalues of $\bSigma$ are growing with $p$, while the remaining $p-K$ eigenvalues are bounded and grow slower than $p$.
	
	Rewrite equation \eqref{equ1} in matrix form:
	\begin{equation} \label{5.2n}
	\underbrace{\bR}_{p\times T} = \underbrace{\bB}_{p\times K} \bF + \bE.
	\end{equation}
	
	\noindent Let $\bSigma=T^{-1}\bR\bR'$, $\bSigma_{\varepsilon}=T^{-1}\bE\bE'$ and $\bSigma_{f}=T^{-1}\bF\bF'$ be covariance matrices of stock returns, idiosyncratic components and factors, and let $\bTheta=\bSigma^{-1}$, $\bTheta_{\varepsilon}=\bSigma_{\varepsilon}^{-1}$ and $\bTheta_{f}=\bSigma_{f}^{-1}$ be their inverses. The factors and loadings in \eqref{5.2n} are estimated by solving $	(\widehat{\bB},\widehat{\bF})=\argmin_{\bB,\bF}\norm{\bR-\bB\bF}^{2}_{F}$ s.t. $\frac{1}{T}\bF\bF'=\bI_K, \ \bB'\bB\ \text{is diagonal}$. The constraints are needed to identify the factors (\cite{fan2018elliptical}). It was shown (\cite{Stock2002}) that $\widehat{\bF}=\sqrt{T}\text{eig}_K(\bR'\bR)$ and $\widehat{\bB}=T^{-1}\bR\widehat{\bF}'$. Given $\widehat{\bF},\widehat{\bB}$, define $\widehat{\bE}=\bR-\widehat{\bB}\widehat{\bF}$.
	
	Since our interest is in constructing portfolio weights, our goal is to estimate a precision matrix of the excess returns. However, as pointed out by \cite{koike2019biased}, when common factors are present across the excess returns, the precision matrix cannot be sparse because all pairs of the returns are partially correlated given other excess returns through the common factors. Therefore, we impose a sparsity assumption on the precision matrix of the idiosyncratic errors, $\bTheta_{\varepsilon}$, which is obtained using the estimated residuals after removing the co-movements induced by the factors (see \cite{Brownlees2018EJS,Brownlees2018JAE,koike2019biased}).
	
	We use the nodewise regression as a shrinkage technique to estimate the precision matrix of residuals. Once the precision $\bTheta_{f}$ of the low-rank component is also obtained, similarly to \cite{Fan2011}, we use the Sherman-Morrison-Woodbury formula to estimate the precision of excess returns:
	\begin{equation}\label{equa18}
	\bTheta=\bTheta_{\varepsilon}-\bTheta_{\varepsilon}\bB\lbrack\bTheta_f+\bB'\bTheta_{\varepsilon}\bB\rbrack^{-1}\bB'\bTheta_{\varepsilon}.
	\end{equation}
	To obtain $\widehat{\bTheta}_{f}=\widehat{\bSigma}_{f}^{-1}$, we use the inverse of the sample covariance of the estimated factors $\widehat{\bSigma}_{f}=T^{-1}\widehat{\bF}\widehat{\bF}'$. To get $\widehat{\bTheta}_{\varepsilon}$, we apply Algorithm \ref{alg1b} to the estimated idiosyncratic errors, $\widehat{\bvarepsilon}_t$.  Once we have estimated $\widehat{\bTheta}_{f}$ and $\widehat{\bTheta}_{\varepsilon}$, we can get $\widehat{\bTheta}$ using a sample analogue of \eqref{equa18}. The proposed procedure is called \textit{Factor Nodewise Regression} and is summarized in Algorithm \ref{alg3}.

	\begin{algorithm}[H]
		\caption{Factor Nodewise Regression by \cite{meinshausen2006} (FMB)}
		\label{alg3}
		\begin{algorithmic}[1]
			\STATE Estimate factors, $\widehat{\bF}$, and factor loadings, $\widehat{\bB}$, using PCA. Obtain $\widehat{\bSigma}_{f}=T^{-1}\widehat{\bF}\widehat{\bF}'$, $\widehat{\bTheta}_{f}=\widehat{\bSigma}_{f}^{-1}$ and $\widehat{\bvarepsilon}_{t}=\br_t-\widehat{\bB}\widehat{\bf_t}$.
			\STATE Estimate a sparse $\bTheta_{\varepsilon}$ using nodewise regression: run Lasso regressions in \eqref{e21} for $\widehat{\bvarepsilon}_t$			
			\begin{equation}\label{e21a}
			\widehat{\bgamma}_j= \argmin_{\bgamma\in \mathbb{R}^{p-1}}\Big(\norm{\widehat{\bvarepsilon}_j-\widehat{\bE}_{-j}\bgamma}_{2}^{2}/T+2\lambda_j\norm{\bgamma}_1\Big),
			\end{equation}
			to get $\widehat{\bTheta}_{\varepsilon}$.
			\STATE Use $\widehat{\bTheta}_f$ from Step 1 and $\widehat{\bTheta}_{\varepsilon}$ from Step 2 to estimate $\bTheta$ using the sample counterpart of the Sherman-Morrison-Woodbury formula in \eqref{equa18}:
	\begin{equation}\label{3.11}
	\widehat{\bTheta}=\widehat{\bTheta}_{\varepsilon}-\widehat{\bTheta}_{\varepsilon}\widehat{\bB}\lbrack\widehat{\bTheta}_f+\widehat{\bB}'\widehat{\bTheta}_{\varepsilon}\widehat{\bB}\rbrack^{-1}\widehat{\bB}'\widehat{\bTheta}_{\varepsilon}.
	\end{equation}
		\end{algorithmic}
	\end{algorithm}	
	Algorithm \ref{alg3} involves a tuning parameter $\lambda_j$ in \eqref{e21a}: we choose shrinkage intensity by minimizing the generalized information criterion (GIC). Let $\abs{\widehat{S}_j(\lambda_j)}$ denote the estimated number of nonzero parameters in the vector $\widehat{\bgamma}_j$:
	\begin{equation*}
	\text{GIC}(\lambda_j) = \log\Big(\norm{\widehat{\bvarepsilon}_j-\widehat{\bE}_{-j}\widehat{\bgamma}_j}_{2}^{2}/T\Big)+\abs{\widehat{S}_j(\lambda_j)}\frac{\log(p)}{T}\log(\log(T)).
	\end{equation*}

	We can use $\widehat{\bTheta}$ obtained in \eqref{3.11} to estimate $y$ in equation \eqref{e2.17} and obtain sparse portfolio weights in \eqref{ee74} and Algorithm \ref{algPL}.

	\section{Factor Investing is Allowed}
	In this section we allow an investor to hold a portfolio of assets and factors, in other words, factors are assumed to be tradable. Note that in contrast with \cite{Ao2019}, the distinction between tradable and non-tradable factors is not pinned down by the fact that the excess returns are driven by the common factors. That is, factor structure of returns is allowed independently of whether factors are tradable or not. We assume that only observable factors can be tradable. Denote a $K_1 \times 1$ vector of observable factors as $\widetilde{\bf}_t$, and $K_2 \times 1$ vector of unobservable factors as $\bf^{PCA}_t$, where $K1+K2=K$.
	The goal of factor investing is to decide how much weight is allocated to factors $\widetilde{\bf}_t$ and stocks $\br_t$.
	Let $r_{t,all}$ be the return of portfolio at time $t$:
	\begin{equation} \label{e5.2}
	r_{t,all} = \underbrace{\bw'_{all,t}}_{1\times (p+K_1)} \bx_t.
	\end{equation}
	where $\bx_t=(\widetilde{\bf}'_t,\br'_t)'$ is a $(p+K_1)\times 1$ vector of excess returns of observable factors and stocks and $\bw_{all,t}=(\bw'_{ft}, \bw'_t)'$ is a vector of weights with $\bw_{ft}$ invested in $\widetilde{\bf}_t$ and $\bw_t$ invested in stocks. We treat $\widetilde{\bf}_t$ as additional $K_1$ investments vehicles which will contribute to the return of the total portfolio. Now consider $K_2$-factor model for $\bx_t$:
	\begin{align} \label{e6.2}
	&\bx_t=\bB \underbrace{\bf^{PCA}_t}_{K_2\times 1}+ \ \be_t,\quad t=1,\ldots,T
	\end{align}
	
	Rewrite equation \eqref{e6.2} in matrix form:
	\begin{equation} \label{e6.3}
	\underbrace{\bX}_{(p+K_1)\times T} = \bB \underbrace{\bF^{PCA}}_{K_2\times T} + \ \bE,
	\end{equation}
	which can be estimated using the standard PCA techniques as in \eqref{5.2n}:\\ $\widehat{\bF}^{PCA}=\sqrt{T}\text{eig}_{K_2}(\bX'\bX)$ and $\widehat{\bB}=T^{-1}\bX\widehat{\bF}^{'PCA}$. Given $\widehat{\bF}^{PCA},\widehat{\bB}$, define $\widehat{\bE}=\bX-\widehat{\bB}\widehat{\bF}^{PCA}$.
	
	Similarly to Algorithm \ref{alg3}, we use \eqref{equa18} to estimate the precision of the augmented excess returns, $\bTheta_x$. To get $\widehat{\bTheta}_{f^{PCA}}=\widehat{\bSigma}_{f^{PCA}}^{-1}$, we use the inverse of the sample covariance of the estimated factors $\widehat{\bSigma}_{f^{PCA}}=T^{-1}\widehat{\bF}^{PCA}\widehat{\bF}^{'PCA}$. To get $\widehat{\bTheta}_{e}$, we first apply Algorithm \ref{alg1b} to the estimated idiosyncratic errors, $\widehat{\be}_t$ in \eqref{e6.2}.  Once we have estimated $\widehat{\bTheta}_{f^{PCA}}$ and $\widehat{\bTheta}_{e}$, we can get $\widehat{\bTheta}_x$ using a sample analogue of \eqref{equa18}. This procedure is summarized in Algorithm \ref{alg4}.
		\begin{algorithm}[H]
		\caption{Factor Investing Using FMB}
		\label{alg4}
		\begin{algorithmic}[1]
		\STATE Estimate the residuals from equation \eqref{e6.2}: $\widehat{\be}_{t}=\bx_t-\widehat{\bB}\widehat{\bf}_{t}^{PCA}$ using PCA.
		\STATE Estimate a sparse $\bTheta_{e}$ using nodewise regression: apply Algorithm \ref{alg1b} to $\widehat{e}_t$.
		\STATE Estimate $\bTheta_x$ using the Sherman-Morrison-Woodbury formula in \eqref{equa18}.
	\end{algorithmic}
\end{algorithm}	
	We can use $\widehat{\bTheta}_x$ obtained from Algorithm \ref{alg4} to estimate portfolio weights $\bw_{all,t}$ using either a de-biased technique from section 2.1 (\eqref{ee74}), or post-Lasso (Algorithm \ref{algPL}). Once we obtain $\widehat{\bw}_{all,t}=(\widehat{\bw}'_{ft}, \widehat{\bw}'_t)'$, we can test whether factor investing significantly contributes to the portfolio return by testing whether $\bw_{ft}=0$.

	\section{Asymptotic Properties}
In this section we study asymptotic properties of the de-biased estimator of weights for sparse portfolio in \eqref{ee74} and post-Lasso estimator from Algorithm \ref{algPL}.

Denote $S_0\defeq \{ j; \bw_j\neq 0 \}$ to be the active set of variables, where $\bw$ is a vector of true portfolio weights in equation \eqref{e4.9}. Also, let $s_0\defeq \abs{S_0}$. Further, let  $S_j\defeq \{ k; \gamma_{j,k}\neq 0 \}$ be the active set for row $\bgamma_{j}$ for the nodewise regression in \eqref{e21}, and let $s_j\defeq \abs{S_j}$. Define $\bar{s}\defeq \max_{1\leq j\leq p}s_j$.

Consider a factor model from equation \eqref{equ1}:
\begin{align} \label{e5.1}
&\underbrace{\br_t}_{p \times 1}=\bB \underbrace{\bf_t}_{K\times 1}+ \ \bvarepsilon_t,\quad t=1,\ldots,T
\end{align}
 We study the case when the factors are not known, i.e. the only observable variable in equation \eqref{e5.1} is the excess returns $\br_t$. In this paper our main interest lies in establishing asymptotic properties of sparse portfolio weights and the out-of-sample Sharpe Ratio for the high-dimensional case. We assume that the number of common factors, $K$, is fixed.

\subsection{Assumptions}
 We now list the assumptions on the model \eqref{e5.1}:
\begin{enumerate}[\textbf{({A}.1)}]
	\item \label{A1} (Spiked covariance model)
As $p \rightarrow \infty$, $\Lambda_1(\bSigma)>\Lambda_2(\bSigma)>\ldots>\Lambda_K(\bSigma)\gg \Lambda_{K+1}(\bSigma)\geq \ldots \geq \Lambda_p(\bSigma) \geq 0$, where $\Lambda_j(\bSigma)=\mathcal{O}(p)$ for $j \leq K$, while the non-spiked eigenvalues are bounded, $\Lambda_j(\bSigma)=o(p)$ for $j > K$.
\end{enumerate}
\begin{enumerate}[\textbf{({A}.2)}]
	\item \label{A2}(Pervasive factors)
	There exists a positive definite $K \times K$ matrix $\breve{\bB}$ such that\\ $\vertiii{p^{-1}\bB'\bB-\breve{\bB}}_{2}\rightarrow 0$ and $\Lambda_{\text{min}}(\breve{\bB})^{-1}=\mathcal{O}(1)$  as $p \rightarrow \infty$.
\end{enumerate}
Similarly to \cite{CHANG2018} and \cite{Caner2019}, we also impose beta mixing condition.
\begin{enumerate}[\textbf{({A}.3)}]
	\item \label{A3} (Beta mixing)
 Let $\mathcal{F}_{-\infty}^{t}$ and $\mathcal{F}_{t+k}^{\infty}$ denote the $\sigma$-algebras that are generated by $\{\bvarepsilon_u:u\leq t\}$ and $\{\bvarepsilon_u:u\geq t+k\}$ respectively. Then $\{\bvarepsilon\}_u$ is $\beta$-mixing in the sense that $\beta_k\rightarrow 0$ as $k \rightarrow \infty$, where the mixing coefficient is defined as
\begin{equation}
\beta_k=\sup_t\E{\sup_{B\in \mathcal{F}_{t+k}^{\infty}} \abs{\Pr\Big(B| \mathcal{F}_{-\infty}^{t}\Big)-\Pr\Big(B\Big)} }.
\end{equation}
\end{enumerate}

Some comments regarding the aforementioned assumptions are in order. Assumptions \ref{A1}-\ref{A2} are the same as in \cite{fan2018elliptical}, and assumption \ref{A3} is required to consistently estimate precision matrix for de-biasing portfolio weights. Assumption \ref{A1} divides the eigenvalues into the diverging and bounded ones. Without loss of generality, we assume that $K$ largest eigenvalues have multiplicity of 1. The assumption of a spiked covariance model is common in the literature on approximate factor models, however, we note that the model studied in this paper can be characterized as a \enquote{very spiked model}. In other words, the gap between the first $K$ eigenvalues and the rest is increasing with $p$. As pointed out by \cite{fan2018elliptical}, \ref{A1} is typically satisfied by the factor model with pervasive factors, which brings us to the assumption \ref{A2}: the factors impact a non-vanishing proportion of individual time-series. Assumption \ref{A3} allows for weak dependence in the residuals of the factor model in \ref{e5.1}: causal ARMA processes, certain stationary Markov chains and stationary GARCH models with finite second moments satisfy this assumption. We note that our Assumption \ref{A3} is much weaker than in \cite{Caner2019}, the latter requires weak dependence of the returns series, whereas we only restrict dependence of the idiosyncratic components.

Let $\bSigma=\bGamma\bLambda\bGamma^{'}$, where $\bSigma$ is the covariance matrix of returns that follow factor structure described in equation \eqref{e5.1}. Define $\widehat{\bSigma}, \widehat{\bLambda}_K,\widehat{\bGamma}_K$ to be the estimators of $\bSigma,\bLambda,\bGamma$. We further let $\widehat{\bLambda}_K=\text{diag}(\hat{\lambda}_1,\ldots,\hat{\lambda}_K)$ and $\widehat{\bGamma}_K=(\hat{v}_1,\ldots,\hat{v}_K)$ to be constructed by the first $K$ leading empirical eigenvalues and the corresponding eigenvectors of $\widehat{\bSigma}$ and $\widehat{\bB}\widehat{\bB}'=\widehat{\bGamma}_K\widehat{\bLambda}_K\widehat{\bGamma}_{K}^{'}$. Similarly to \cite{fan2018elliptical}, we require the following bounds on the componentwise maximums of the estimators:
\begin{enumerate}[\textbf{({B}.1)}]
	\item \label{B1} $\norm{\widehat{\bSigma}-\bSigma}_{\text{max}}=\mathcal{O}_P\Big(\sqrt{\log p/T}\Big)$,
\end{enumerate}

\begin{enumerate}[\textbf{({B}.2)}]
	\item \label{B2} $\norm{(\widehat{\bLambda}_K-\bLambda)\bLambda^{-1}}_{\text{max}}=\mathcal{O}_P\Big(\sqrt{\log p/T}\Big)$,
\end{enumerate}
\begin{enumerate}[\textbf{({B}.3)}]
	\item \label{B3} $\norm{\widehat{\bGamma}_K-\bGamma}_{\text{max}}=\mathcal{O}_P\Big(\sqrt{\log p/(Tp)})$.
\end{enumerate}

Let $\widehat{\bSigma}^{SG}$ be the sample covariance matrix, with $\widehat{\bLambda}_{K}^{SG}$ and $\widehat{\bGamma}_{K}^{SG}$ constructed with the first $K$ leading empirical eigenvalues and eigenvectors of $\widehat{\bSigma}^{SG}$ respectively. Also, let $\widehat{\bSigma}^{EL1} = \widehat{\bD}\widehat{\bR}_1\widehat{\bD}$, where $\widehat{\bR}_1$ is obtained using the Kendall's tau correlation coefficients and $\widehat{\bD}$ is a robust estimator of variances constructed using the Huber loss. Furthermore, let $\widehat{\bSigma}^{EL2} = \widehat{\bD}\widehat{\bR}_2\widehat{\bD}$, where $\widehat{\bR}_2$ is obtained using the spatial Kendall's tau estimator. Define $\widehat{\bLambda}_{K}^{EL}$ to be the matrix of the first $K$ leading empirical eigenvalues of $\widehat{\bSigma}^{EL1}$, and $\widehat{\bGamma}_{K}^{EL}$ is the matrix of the first $K$ leading empirical eigenvectors of $\widehat{\bSigma}^{EL2}$. For more details regarding constructing $\widehat{\bSigma}^{SG}$, $\widehat{\bSigma}^{EL1}$ and $\widehat{\bSigma}^{EL2}$ see \cite{fan2018elliptical}, Sections 3 and 4.
\begin{thm}\label{theor1Fan} (\cite{fan2018elliptical})\\
	For sub-Gaussian distributions, $\widehat{\bSigma}^{SG}$, $\widehat{\bLambda}_{K}^{SG}$ and $\widehat{\bGamma}_{K}^{SG}$ satisfy \ref{B1}-\ref{B3}.\\
	For elliptical distributions, $\widehat{\bSigma}^{EL1}$, $\widehat{\bLambda}_{K}^{EL}$ and $\widehat{\bGamma}_{K}^{EL}$ satisfy \ref{B1}-\ref{B3}.
\end{thm}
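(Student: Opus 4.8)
The statement is quoted verbatim from \cite{fan2018elliptical}, so the plan is to reproduce the structure of their argument rather than invent a new one; the proof splits cleanly into the sub-Gaussian and the elliptical regimes, and within each regime the three bounds \ref{B1}--\ref{B3} are established in the order B.1 $\Rightarrow$ B.2 $\Rightarrow$ B.3, because the eigenvalue and eigenvector controls are perturbation-theoretic consequences of the entrywise covariance bound together with the spiked structure \ref{A1}--\ref{A2}.

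For the sub-Gaussian case I would first prove \ref{B1}. Writing $\widehat{\sigma}_{ij}-\sigma_{ij}$ as a centered average of products $r_{it}r_{jt}$, each summand is a product of sub-Gaussian variables and hence sub-exponential; a Bernstein inequality gives $\Prob{}{\abs{\widehat{\sigma}_{ij}-\sigma_{ij}}>t}\leq 2\exp(-cTt^2)$ in the relevant range, and a union bound over the $p^2$ pairs yields $\norm{\widehat{\bSigma}-\bSigma}_{\text{max}}=\mathcal{O}_P(\sqrt{\log p/T})$. Given \ref{B1}, the eigenvalue bound \ref{B2} follows from Weyl's inequality, which controls $\abs{\hat\lambda_k-\lambda_k}$ by a perturbation term, combined with \ref{A1}, which forces the leading $\lambda_k$ to diverge at rate $p$; dividing by $\lambda_k$ converts the absolute error into the relative rate $\sqrt{\log p/T}$. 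The eigenvector bound \ref{B3} is the delicate one: under pervasiveness \ref{A2} the leading eigenvectors have entries of order $p^{-1/2}$ and the associated eigen-gap is of order $p$, so a componentwise Davis--Kahan / $\sin\theta$ analysis controls each coordinate of $\widehat{\bGamma}_K-\bGamma$ at rate $p^{-1/2}\sqrt{\log p/T}=\sqrt{\log p/(Tp)}$, which is exactly the extra factor $\sqrt{p}$ in the denominator.

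For the elliptical case the sample covariance is unreliable under heavy tails, so I would instead bound the robust surrogates entrywise. The variance estimator $\widehat{\bD}$ built from the Huber loss concentrates at rate $\sqrt{\log p/T}$ by the standard robustification argument, while the Kendall's-tau correlation $\widehat{\bR}_1$ is a U-statistic with bounded sign-based kernel, so Hoeffding's inequality for U-statistics gives entrywise concentration at the same rate with essentially no moment assumption beyond those implicit in the elliptical model; their product $\widehat{\bSigma}^{EL1}=\widehat{\bD}\widehat{\bR}_1\widehat{\bD}$ then satisfies \ref{B1}, and \ref{B2} follows exactly as above via Weyl plus \ref{A1}. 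For \ref{B3} the eigenvectors are taken from the spatial Kendall's tau matrix $\widehat{\bR}_2$, whose population version shares the leading eigenvectors of the correlation matrix; one shows $\widehat{\bR}_2$ concentrates entrywise and then reapplies the componentwise Davis--Kahan argument with the order-$p$ gap.

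The main obstacle, in both regimes, is the eigenvector estimate \ref{B3}, and specifically the gain of a full factor $\sqrt{p}$ relative to \ref{B1}--\ref{B2}: this forces the $\sin\theta$ bound to be exploited coordinatewise, using both the $p^{-1/2}$ size of the eigenvector entries and the \emph{diverging} gap delivered by \ref{A2}, rather than crudely through an operator-norm perturbation. In the elliptical case there is the additional subtlety that eigenvalues and eigenvectors come from two different estimators ($\widehat{\bSigma}^{EL1}$ and $\widehat{\bSigma}^{EL2}$), so one must verify that the spatial-Kendall matrix $\widehat{\bR}_2$ is eigenvector-consistent with the correlation structure even though it is not itself a consistent estimator of $\bSigma$. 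All remaining steps are the routine concentration and perturbation calculations carried out in \cite{fan2018elliptical}.
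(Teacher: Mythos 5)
Your proposal is correct and takes essentially the same approach as the paper: the paper offers no self-contained proof, stating explicitly that this theorem is just a repackaging of the results in Sections 3 and 4 of \cite{fan2018elliptical}, and your sketch faithfully reconstructs those cited arguments (Bernstein plus union bound for \ref{B1}, Weyl combined with the order-$p$ spiked eigenvalues for \ref{B2}, the coordinatewise $\sin\theta$-type perturbation analysis for \ref{B3}, and the Huber/Kendall's-tau and spatial Kendall's-tau surrogates in the elliptical case). You also correctly identify the two genuinely delicate points---exploiting the diverging eigen-gap entrywise to gain the extra $\sqrt{p}$ in \ref{B3}, and reconciling the use of two different elliptical estimators for eigenvalues versus eigenvectors---which is precisely why the paper defers to \cite{fan2018elliptical} rather than reproving these results.
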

Theorem \ref{theor1Fan} is essentially a rephrasing of the results obtained in \cite{fan2018elliptical}, Sections 3 and 4. Since there is no separate statement of these results in their paper (it is rather a summary of several theorems), we separated it as a Theorem for the convenience of the reader. As evidenced from the above Theorem, $\widehat{\bSigma}^{EL2}$ is only used for estimating the eigenvectors. This is necessary due to the fact that, in contrast with $\widehat{\bSigma}^{EL2}$, the theoretical properties of the eigenvectors of $\widehat{\bSigma}^{EL}$ are mathematically involved because of the sin function. 

In addition, the following structural assumption on the model is imposed:
\begin{enumerate}[\textbf{({C}.1)}]
	\item \label{C1} $\norm{\bSigma}_{\text{max}}=\mathcal{O}(1)$ and $\norm{\bB}_{\text{max}}=\mathcal{O}(1)$,
\end{enumerate}

which is a natural assumption on the population quantities.

In contrast to \cite{fan2018elliptical}, instead of estimating and inverting covariance matrix, we focus on obtaining precision matrix directly since it is the ultimate input to any portfolio optimization problem.

\subsection{Asymptotic Properties of Non-Sparse Portfolio Weights}
Recall that we used equation \eqref{equa18} to estimate $\bTheta$.
Therefore, in order to establish consistency of the estimator in \eqref{equa18}, we first show consistency of $\widehat{\bTheta}_{\varepsilon}$. Proofs of all the theorems are in Appendix.
\begin{thm}\label{theor2}
	Suppose that Assumptions \ref{A1}-\ref{A3}, \ref{B1}-\ref{B3} and \ref{C1} hold. Let $\omega_{T}\defeq \sqrt{\log p/T} +1/\sqrt{p}$. Then $\max_{i\leq p} (1/T)\sum_{t=1}^{T} \abs{\hat{\varepsilon}_{it}-\varepsilon_{it}}=\mathcal{O}_P(\omega_{T}^2)$ and $\max_{i,t}\abs{\hat{\varepsilon}_{it}-\varepsilon_{it}}=\mathcal{O}_P(\omega_{T})=o_P(1)$. Under the sparsity assumption $\bar{s}^2\omega_{T}=o(1)$, with $\lambda_j \asymp \omega_{T}$, we have
	\begin{align*}
	&\max_{1\leq j \leq p} \norm{\widehat{\bTheta}_{\varepsilon,j}-\bTheta_{\varepsilon,j}}_1=\mathcal{O}_P(\bar{s}\omega_{T}),\\
	&\max_{1\leq j \leq p} \norm{\widehat{\bTheta}_{\varepsilon,j}-\bTheta_{\varepsilon,j}}_{2}^{2}=\mathcal{O}_P(\bar{s}\omega_{T}^2)
	\end{align*}
\end{thm}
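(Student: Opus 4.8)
The plan is to prove the three assertions in sequence, since each feeds into the next: first the two residual-accuracy bounds, which are purely factor-model statements, and then the nodewise-regression (Lasso) bounds, for which the main work is transferring the standard theory from the unobservable errors $\bvarepsilon_t$ to the estimated residuals $\widehat{\bvarepsilon}_t$.

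\textbf{Step 1 (residual bounds).} I would write $\widehat{\varepsilon}_{it}-\varepsilon_{it}=(\bB_i-\widehat{\bB}_i)\bf_t+\widehat{\bB}_i(\bf_t-\widehat{\bf}_t)$, where $\bB_i$ is the $i$-th row of $\bB$, and control the loading and factor estimation errors using PCA consistency. Under \ref{A1}--\ref{A2}, the pervasiveness of the factors separates the spiked part of the spectrum so that $\widehat{\bB}$ and $\widehat{\bF}$ converge at the standard rates of \cite{fan2013POET}, and the componentwise bounds \ref{B1}--\ref{B3} (valid by Theorem \ref{theor1Fan} for both families) supply the explicit constants. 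Averaging over $t$ yields the squared rate $\mathcal{O}_P(\omega_T^2)$ for $\max_{i}(1/T)\sum_t\abs{\widehat{\varepsilon}_{it}-\varepsilon_{it}}$, while taking the uniform-in-$(i,t)$ maximum yields $\mathcal{O}_P(\omega_T)$; here $\omega_T=\sqrt{\log p/T}+1/\sqrt{p}$ is exactly the sum of the serial error $\sqrt{\log p/T}$ and the cross-sectional PCA error $1/\sqrt{p}$.

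\textbf{Step 2 (transfer to the sample covariance).} Let $\widehat{\bSigma}_\varepsilon=T^{-1}\sum_t\widehat{\bvarepsilon}_t\widehat{\bvarepsilon}_t'$. Splitting $\widehat{\sigma}_{\varepsilon,ij}-\sigma_{\varepsilon,ij}$ into $(\widehat{\bSigma}_\varepsilon-T^{-1}\sum_t\bvarepsilon_t\bvarepsilon_t')$ plus $(T^{-1}\sum_t\bvarepsilon_t\bvarepsilon_t'-\bSigma_\varepsilon)$, the first bracket is bounded by the cross-products of the residual errors of Step 1 via Cauchy--Schwarz and \ref{C1}, and the second is the ordinary sample-covariance concentration $\mathcal{O}_P(\sqrt{\log p/T})$, available under the sub-Gaussian/elliptical assumptions and the $\beta$-mixing condition \ref{A3}. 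This gives $\norm{\widehat{\bSigma}_\varepsilon-\bSigma_\varepsilon}_{\text{max}}=\mathcal{O}_P(\omega_T)$, the key max-norm input to the Lasso analysis.

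\textbf{Step 3 (nodewise Lasso on estimated residuals).} This is the part I expect to be the main obstacle, and it is precisely why Theorem 2.2 of \cite{Buhlmann2014} cannot be invoked directly: the regressions \eqref{e21a} use $\widehat{\bvarepsilon}_t$ in place of $\bvarepsilon_t$, perturbing both the Gram matrix and the empirical score. I would (i) use Step 2 to show the score term $\norm{\widehat{\bE}_{-j}'(\widehat{\bvarepsilon}_j-\widehat{\bE}_{-j}\bgamma_j)/T}_{\infty}$ stays below $\lambda_j\asymp\omega_T$ with probability approaching $1$, and (ii) establish a compatibility/restricted-eigenvalue condition for $\widehat{\bSigma}_\varepsilon$ by perturbing the population one: since $\bSigma_\varepsilon$ has bounded eigenvalues in the approximate factor model, the quadratic-form perturbation on the restricted cone is at most $\norm{\widehat{\bSigma}_\varepsilon-\bSigma_\varepsilon}_{\text{max}}\norm{\bv}_1^2\lesssim\bar{s}\,\mathcal{O}_P(\omega_T)\norm{\bv}_2^2$, and the sparsity assumption $\bar{s}^2\omega_T=o(1)$ guarantees this (and the analogous perturbations entering the score) is asymptotically negligible. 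The standard Lasso oracle inequality then gives $\norm{\widehat{\bgamma}_j-\bgamma_j}_1=\mathcal{O}_P(s_j\lambda_j)=\mathcal{O}_P(\bar{s}\omega_T)$ and $\norm{\widehat{\bgamma}_j-\bgamma_j}_2^2=\mathcal{O}_P(s_j\lambda_j^2)=\mathcal{O}_P(\bar{s}\omega_T^2)$, uniformly in $j$. Finally, combining the definition $\widehat{\bTheta}_\varepsilon=\widehat{\bT}^{-2}\widehat{\bC}$ in \eqref{e25} with consistency of $\hat{\tau}_j^2$ converts these coefficient bounds into the stated column-wise $\ell_1$ and $\ell_2$ bounds on $\widehat{\bTheta}_{\varepsilon,j}-\bTheta_{\varepsilon,j}$.
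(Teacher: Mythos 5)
Your proposal is correct and takes essentially the same route as the paper, which likewise obtains the residual bounds from the PCA analysis of \cite{fan2018elliptical} (their Theorem 2.1), imports the coefficient bounds $\max_{1\leq j\leq p}\norm{\widehat{\bgamma}_j-\bgamma_j}_1=\mathcal{O}_P(\bar{s}\omega_T)$ and $\max_{1\leq j\leq p}\norm{\widehat{\bgamma}_j-\bgamma_j}_2=\mathcal{O}_P(\sqrt{\bar{s}}\omega_T)$ from the mixing-robust Lasso theory of \cite{CHANG2018} (exactly the restricted-eigenvalue/oracle-inequality argument you reconstruct self-containedly, with \ref{A3} invoked at the same point for the score and concentration rates), and then converts them into the stated column-wise bounds on $\widehat{\bTheta}_{\varepsilon,j}$. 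The one place your sketch compresses into \enquote{consistency of $\hat{\tau}_j^2$} is where the paper does its main explicit work --- a four-term triangle-inequality decomposition yielding $\max_{1\leq j\leq p}\abs{\hat{\tau}_j^2-\tau_j^2}=\mathcal{O}_P(\sqrt{\bar{s}}\omega_T)$, with the last term controlled via the nodewise KKT conditions --- but every ingredient needed for that bound (the score bound, the coefficient bounds, and $\lambda_j\asymp\omega_T$) is already present in your plan, so there is no genuine gap.
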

Some comments are in order. First, the sparsity assumption $\bar{s}^2\omega_{T}=o(1)$ is stronger than that required for convergence of $\widehat{\bTheta}_{\varepsilon}$: this is necessary to ensure consistency for $\widehat{\bTheta}$ established in Theorem \ref{theor3}, so we impose a stronger assumption at the beginning. We also note that at the first glance, our sparsity assumption in Theorem \ref{theor3} is stronger than that required by \cite{Buhlmann2014} and \cite{Caner2019}, however, recall that we impose sparsity on $\bTheta_{\varepsilon}$, not $\bTheta$ as opposed to the two aforementioned papers. Hence, this assumption can be easily satisfied once the common factors have been accounted for and the precision of the idiosyncratic components is expected to be sparse. The bounds derived in Theorem \ref{theor2} help us establish the convergence properties of the precision matrix of stock returns in equation \eqref{equa18}.
\begin{thm} \label{theor3}
Under the assumptions of Theorem \ref{theor2} and, in addition, assuming $\norm{\bTheta_{\varepsilon,j}}_2=\mathcal{O}(1)$, we have 
\begin{align*}
&\max_{1\leq j \leq p} \norm{\widehat{\bTheta}_{j}-\bTheta_{j}}_1=\mathcal{O}_P(\bar{s}^2\omega_{T}),\\
&\max_{1\leq j \leq p} \norm{\widehat{\bTheta}_{j}-\bTheta_{j}}_{2}^{2}=\mathcal{O}_P(\bar{s}\omega_{T}^2).
\end{align*}
\end{thm}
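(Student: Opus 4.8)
The plan is to start from the population and sample Sherman--Morrison--Woodbury representations \eqref{equa18} and \eqref{3.11} and to control their difference column by column. Writing $\bM\defeq\bTheta_f+\bB'\bTheta_{\varepsilon}\bB$ and $\bG\defeq\bB\bM^{-1}\bB'$, with sample analogues $\widehat{\bM},\widehat{\bG}$, the low-rank correction collapses to $\bTheta_{\varepsilon}\bG\bTheta_{\varepsilon}$, so that
\begin{equation*}
\widehat{\bTheta}-\bTheta=(\widehat{\bTheta}_{\varepsilon}-\bTheta_{\varepsilon})-\big(\widehat{\bTheta}_{\varepsilon}\widehat{\bG}\widehat{\bTheta}_{\varepsilon}-\bTheta_{\varepsilon}\bG\bTheta_{\varepsilon}\big).
\end{equation*}
The first summand is handled directly by Theorem \ref{theor2}. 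For the second I would telescope into
\begin{equation*}
(\widehat{\bTheta}_{\varepsilon}-\bTheta_{\varepsilon})\widehat{\bG}\widehat{\bTheta}_{\varepsilon}+\bTheta_{\varepsilon}(\widehat{\bG}-\bG)\widehat{\bTheta}_{\varepsilon}+\bTheta_{\varepsilon}\bG(\widehat{\bTheta}_{\varepsilon}-\bTheta_{\varepsilon}),
\end{equation*}
and bound the $j$-th column of each piece in both $\ell_1$ and $\ell_2$.

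The engine of the argument is a set of auxiliary rates. Pervasiveness (\ref{A2}) together with boundedness of $\Lambda_{\text{max}}(\bSigma_{\varepsilon})$ forces $\Lambda_{\text{min}}(\bM)\asymp p$, hence $\vertiii{\bM^{-1}}_2=\mathcal{O}(1/p)$; consequently every entry of $\bG$ is $\mathcal{O}(1/p)$ and $\vertiii{\bG}_1=\vertiii{\bG}_2=\mathcal{O}(1)$. From (\ref{B2})--(\ref{B3}) and the $\sqrt p$-scaling of the loadings I would derive $\norm{\widehat{\bB}-\bB}_{\text{max}}=\mathcal{O}_P(\omega_{T})$, and then, through $\widehat{\bM}^{-1}-\bM^{-1}=-\widehat{\bM}^{-1}(\widehat{\bM}-\bM)\bM^{-1}$ with $\vertiii{\widehat{\bM}-\bM}_2=\mathcal{O}_P(p\bar{s}\omega_{T})$ (the dominant contribution being $\bB'(\widehat{\bTheta}_{\varepsilon}-\bTheta_{\varepsilon})\widehat{\bB}$), the bound $\vertiii{\widehat{\bM}^{-1}-\bM^{-1}}_2=\mathcal{O}_P(\bar{s}\omega_{T}/p)$. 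The sparsity condition $\bar{s}^2\omega_{T}=o(1)$ inherited from Theorem \ref{theor2} is precisely what guarantees $\Lambda_{\text{min}}(\widehat{\bM})\asymp p$ with probability tending to one, so these perturbation identities are licit. From Theorem \ref{theor2} I also carry $\norm{\widehat{\bTheta}_{\varepsilon,j}-\bTheta_{\varepsilon,j}}_1=\mathcal{O}_P(\bar{s}\omega_{T})$ and $\norm{\widehat{\bTheta}_{\varepsilon,j}-\bTheta_{\varepsilon,j}}_2=\mathcal{O}_P(\sqrt{\bar{s}}\omega_{T})$, the operator bound $\vertiii{\widehat{\bTheta}_{\varepsilon}-\bTheta_{\varepsilon}}_1=\mathcal{O}_P(\bar{s}\omega_{T})$ (hence $\vertiii{\cdot}_2=\mathcal{O}_P(\bar{s}\omega_{T})$ by symmetry of the cleaned estimator), $\vertiii{\bTheta_{\varepsilon}}_1=\mathcal{O}(\bar{s})$, and the new hypothesis $\norm{\bTheta_{\varepsilon,j}}_2=\mathcal{O}(1)$.

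With these in hand the $\ell_2$ bound falls out cleanly: the dominant piece is $\bTheta_{\varepsilon}\bG(\widehat{\bTheta}_{\varepsilon,j}-\bTheta_{\varepsilon,j})$, whose $\ell_2$ norm is at most $\vertiii{\bTheta_{\varepsilon}\bG}_2\norm{\widehat{\bTheta}_{\varepsilon,j}-\bTheta_{\varepsilon,j}}_2=\mathcal{O}_P(\sqrt{\bar{s}}\omega_{T})$, while the two ``error-in-$\bG$'' pieces carry an extra factor $\bar{s}/\sqrt p=o(1)$ coming from the $\mathcal{O}(1/p)$ scaling of $\bM^{-1}$; squaring and adding the Theorem \ref{theor2} term gives $\mathcal{O}_P(\bar{s}\omega_{T}^2)$. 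The $\ell_1$ bound is the delicate part and the main obstacle. Bounding the dense pieces by naive sub-multiplicativity of $\vertiii{\cdot}_1$ overcounts: because the factors are pervasive, $\widehat{\bG}-\bG$ is spread over all $p$ coordinates, and $\vertiii{\bTheta_{\varepsilon}}_1\vertiii{\widehat{\bG}-\bG}_1$ would inflate the rate to $\mathcal{O}_P(\bar{s}^3\omega_{T})$. The fix is to bound those two pieces through $\ell_2$ and then convert with $\norm{\bv}_1\le\sqrt p\,\norm{\bv}_2$, which leverages the $\mathcal{O}(1/p)$ entries of $\bM^{-1}$ to bring them down to $\mathcal{O}_P(\bar{s}^{3/2}\omega_{T})$; the genuinely binding term is again $\bTheta_{\varepsilon}\bG(\widehat{\bTheta}_{\varepsilon,j}-\bTheta_{\varepsilon,j})$, controlled by $\vertiii{\bTheta_{\varepsilon}\bG}_1\norm{\widehat{\bTheta}_{\varepsilon,j}-\bTheta_{\varepsilon,j}}_1=\mathcal{O}(\bar{s})\cdot\mathcal{O}_P(\bar{s}\omega_{T})=\mathcal{O}_P(\bar{s}^2\omega_{T})$. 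Collecting the three pieces and the Theorem \ref{theor2} term yields the stated $\mathcal{O}_P(\bar{s}^2\omega_{T})$ rate. Throughout, $\widehat{\bB}$ is identified with $\bB$ only up to a $K\times K$ rotation, but every quantity above appears in the rotation-invariant combination $\widehat{\bB}\widehat{\bM}^{-1}\widehat{\bB}'$, so this is harmless.
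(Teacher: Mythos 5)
Your overall architecture is, up to notation, the paper's own proof: the paper also starts from the Sherman--Morrison--Woodbury identities \eqref{equa18} and \eqref{3.11}, writes your $\bG$ in its spectral form $\widetilde{\bGamma}\widetilde{\bLambda}^{1/2}(\bI_{K}+\widetilde{\bJ})^{-1}\widetilde{\bLambda}^{1/2}\widetilde{\bGamma}'$, and telescopes $\Delta_{\text{inv}}$ into exactly your three pieces; your auxiliary rates ($\vertiii{\bM^{-1}}_2=\mathcal{O}(1/p)$, $\vertiii{\widehat{\bM}-\bM}_2=\mathcal{O}_P(p\bar{s}\omega_{T})$, $\vertiii{\widehat{\bM}^{-1}-\bM^{-1}}_2=\mathcal{O}_P(\bar{s}\omega_{T}/p)$) correspond to its ingredients, and your $\ell_2$ accounting is sound. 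The gap is in the $\ell_1$ half, and it is quantitative, not cosmetic. The conversion $\norm{\bv}_1\le\sqrt{p}\,\norm{\bv}_2$ applied to a column whose $\ell_2$ norm is $\mathcal{O}_P(\bar{s}\omega_{T})$ yields $\mathcal{O}_P(\sqrt{p}\,\bar{s}\omega_{T})$, not the claimed $\mathcal{O}_P(\bar{s}^{3/2}\omega_{T})$; and since $\omega_{T}=\sqrt{\log p/T}+1/\sqrt{p}\geq 1/\sqrt{p}$ by construction, this bound is at least of order $\bar{s}$ and never vanishes. The $\mathcal{O}(1/p)$ scaling of $\bM^{-1}$ cannot be ``leveraged'' a second time at the conversion step: it is already fully spent in producing $\vertiii{\widehat{\bG}-\bG}_2=\mathcal{O}_P(\bar{s}\omega_{T})$ (without it that norm would be of order $p\bar{s}\omega_{T}$), so no spare factor $p^{-1/2}$ remains to cancel the $\sqrt{p}$ from the norm inequality.

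The missing idea is precisely the role of the hypothesis you list but never deploy: $\norm{\bTheta_{\varepsilon,j}}_2=\mathcal{O}(1)$ together with the column sparsity $s_j\le\bar{s}$ gives $\norm{\bTheta_{\varepsilon,j}}_1\le\sqrt{s_j}\norm{\bTheta_{\varepsilon,j}}_2=\mathcal{O}(\sqrt{\bar{s}})$, hence $\vertiii{\bTheta_{\varepsilon}}_1=\mathcal{O}(\sqrt{\bar{s}})$ rather than the $\mathcal{O}(\bar{s})$ you carry --- these are exactly the $\sqrt{\bar{s}}$ factors visible in the paper's displayed rates for the three pieces of $\Delta_{\text{inv}}$. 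With that sharpening, the ``naive'' submultiplicative route you rejected closes the argument directly: grouping asymmetrically, $\vertiii{\widehat{\bG}-\bG}_1$ is dominated by $\vertiii{\bB}_1\,\vertiii{\widehat{\bM}^{-1}-\bM^{-1}}_1\,\vertiii{\bB'}_1=\mathcal{O}_P\big(p\cdot(\bar{s}\omega_{T}/p)\cdot 1\big)=\mathcal{O}_P(\bar{s}\omega_{T})$, where $\vertiii{\bB'}_1=\mathcal{O}(1)$ follows from $\norm{\bB}_{\text{max}}=\mathcal{O}(1)$ in \ref{C1} with $K$ fixed, and $\vertiii{\bB}_1=\mathcal{O}(p)$; the middle piece then satisfies $\vertiii{\bTheta_{\varepsilon}}_1\vertiii{\widehat{\bG}-\bG}_1\norm{\widehat{\bTheta}_{\varepsilon,j}}_1=\mathcal{O}_P(\sqrt{\bar{s}}\cdot\bar{s}\omega_{T}\cdot\sqrt{\bar{s}})=\mathcal{O}_P(\bar{s}^2\omega_{T})$, matching the theorem. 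In other words, the feared inflation to $\bar{s}^3\omega_{T}$ was an artifact of the loose column-$\ell_1$ bound, not a genuine obstacle requiring an $\ell_2$ detour; repair the bound on $\vertiii{\bTheta_{\varepsilon}}_1$ (and correspondingly on $\norm{\widehat{\bTheta}_{\varepsilon,j}}_1$, which inherits $\mathcal{O}_P(\sqrt{\bar{s}})$ from Theorem \ref{theor2}), delete the $\sqrt{p}$-conversion step, and your proof coincides with the paper's.
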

Using Theorem \ref{theor3} we can then establish the consistency of the non-sparse counterpart of the estimated MRC portfolio weight in \eqref{ee69}.
\begin{thm} \label{theor3a}
Under the assumptions of Theorem \ref{theor3}, Algorithm \ref{alg3} consistently estimates non-sparse MRC portfolio weights such that $\norm{\widehat{\bw}_{\text{MRC}}-\bw_{\text{MRC}}}_1 = \mathcal{O}_P( \bar{s}^{2}\omega_{T}^{1/2})$.
\end{thm}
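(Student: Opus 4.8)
The plan is to substitute the Factor Nodewise estimator $\widehat{\bTheta}$ from Algorithm \ref{alg3} and the sample mean $\widehat{\boldm}$ into the MRC formula \eqref{e158}, writing $\widehat{\bw}_{\text{MRC}}=(\sigma/\sqrt{\widehat{\theta}})\,\widehat{\balpha}$ with $\widehat{\balpha}\defeq\widehat{\bTheta}\widehat{\boldm}$ and $\widehat{\theta}\defeq\widehat{\boldm}'\widehat{\bTheta}\widehat{\boldm}$, and comparing against $\bw_{\text{MRC}}=(\sigma/\sqrt{\theta})\balpha$, $\balpha=\bTheta\boldm$, $\theta=\boldm'\bTheta\boldm$. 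First I would split the error into an estimation part and a normalization part,
\begin{equation*}
\widehat{\bw}_{\text{MRC}}-\bw_{\text{MRC}}=\frac{\sigma}{\sqrt{\widehat{\theta}}}\big(\widehat{\balpha}-\balpha\big)+\sigma\Big(\frac{1}{\sqrt{\widehat{\theta}}}-\frac{1}{\sqrt{\theta}}\Big)\balpha,
\end{equation*}
so that $\norm{\widehat{\bw}_{\text{MRC}}-\bw_{\text{MRC}}}_1$ is controlled once I bound (i) $\norm{\widehat{\balpha}-\balpha}_1$, (ii) the scalar $\abs{1/\sqrt{\widehat{\theta}}-1/\sqrt{\theta}}$, and (iii) $\norm{\balpha}_1$, after checking that $1/\sqrt{\widehat{\theta}}=\mathcal{O}_P(1)$.

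For (i) I would telescope $\widehat{\balpha}-\balpha=(\widehat{\bTheta}-\bTheta)\boldm+\bTheta(\widehat{\boldm}-\boldm)+(\widehat{\bTheta}-\bTheta)(\widehat{\boldm}-\boldm)$ and control the pieces by combining Theorem \ref{theor3} (which gives $\vertiii{\widehat{\bTheta}-\bTheta}_1=\max_j\norm{\widehat{\bTheta}_j-\bTheta_j}_1=\mathcal{O}_P(\bar{s}^2\omega_T)$ and $\max_j\norm{\widehat{\bTheta}_j-\bTheta_j}_2=\mathcal{O}_P(\sqrt{\bar{s}}\,\omega_T)$) with the standard concentration $\norm{\widehat{\boldm}-\boldm}_\infty=\mathcal{O}_P(\sqrt{\log p/T})=\mathcal{O}_P(\omega_T)$. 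The decisive device is that the factor structure \eqref{e5.1} together with \ref{A2} and \ref{C1} lets me write $\boldm=\bB\bmu_f$ (the idiosyncratic errors being mean zero), so that $\balpha$ and its contractions must be handled through the Woodbury representation \eqref{equa18}, $\bTheta\bB=\bTheta_\varepsilon\bB\lbrack\bTheta_f+\bB'\bTheta_\varepsilon\bB\rbrack^{-1}\bTheta_f$, rather than by brute force: a naive bound such as $\norm{(\widehat{\bTheta}-\bTheta)\boldm}_1\le\vertiii{\widehat{\bTheta}-\bTheta}_1\norm{\boldm}_1$ carries $\norm{\boldm}_1=\mathcal{O}(p)$ and destroys the result, whereas the pervasive denominator $\lbrack\bTheta_f+\bB'\bTheta_\varepsilon\bB\rbrack^{-1}=\mathcal{O}(1/p)$ cancels the growth coming from $\bB$ and keeps $\norm{\balpha}_1$ bounded (up to the sparsity factor $\bar{s}$).

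For (ii)–(iii) I would establish $\widehat{\theta}\to_P\theta$ by telescoping $\widehat{\theta}-\theta=(\widehat{\boldm}-\boldm)'\widehat{\balpha}+\boldm'(\widehat{\bTheta}-\bTheta)\widehat{\boldm}+\balpha'(\widehat{\boldm}-\boldm)$ and bounding each term with Hölder's inequality, Theorem \ref{theor3}, and the same factor reduction $\boldm'(\widehat{\bTheta}-\bTheta)\widehat{\boldm}=\bmu_f'\bB'(\widehat{\bTheta}-\bTheta)\widehat{\boldm}$ that collapses the $p$-dimensional quadratic form onto the bounded $K$-dimensional $\bmu_f$. Since $\theta$ is the (bounded) squared Sharpe ratio, $\widehat{\theta}$ is bounded away from $0$ with probability approaching one, giving $1/\sqrt{\widehat{\theta}}=\mathcal{O}_P(1)$; to obtain the stated exponent on $\omega_T$ I would bound the normalization difference through $\abs{\sqrt{\widehat{\theta}}-\sqrt{\theta}}\le\sqrt{\abs{\widehat{\theta}-\theta}}$, which turns the quadratic-form rate into a term of order $\bar{s}^2\omega_T^{1/2}$ and does not require a deterministic positive lower bound on $\widehat{\theta}$. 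Collecting the pieces, this normalization contribution dominates the estimation part $\mathcal{O}_P(\bar{s}^2\omega_T)$ (recall $\omega_T\le\omega_T^{1/2}$), yielding $\norm{\widehat{\bw}_{\text{MRC}}-\bw_{\text{MRC}}}_1=\mathcal{O}_P(\bar{s}^2\omega_T^{1/2})$.

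The main obstacle is precisely the quadratic form $\widehat{\theta}-\theta$ and the matrix–vector product $(\widehat{\bTheta}-\bTheta)\boldm$: crude operator-norm bounds such as $\norm{(\widehat{\bTheta}-\bTheta)\boldm}_1\le\sqrt{p}\,\vertiii{\widehat{\bTheta}-\bTheta}_2\norm{\boldm}_2$ import stray factors of $p$, so the entire argument hinges on routing every contraction against $\boldm$ through the Woodbury form \eqref{equa18} and the pervasiveness \ref{A1}–\ref{A2}, while the sparsity of $\bTheta_\varepsilon$ (through $\bar{s}$) keeps the column norms of $\widehat{\bTheta}-\bTheta$ under control. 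A secondary subtlety, which I would dispatch before invoking the square-root normalization, is verifying that $\widehat{\theta}>0$ and is bounded below wp $\rightarrow 1$, so that the estimated Sharpe normalization $\sigma/\sqrt{\widehat{\theta}}$ is well behaved.
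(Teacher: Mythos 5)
Your proposal is correct in substance and follows the same skeleton as the paper's proof: the ratio decomposition of $\widehat{\bw}_{\text{MRC}}-\bw_{\text{MRC}}$ into an estimation piece and a normalization piece, the telescoping of $\widehat{\bTheta}\widehat{\boldm}-\bTheta\boldm$ into three cross terms, the concentration bound $\norm{\widehat{\boldm}-\boldm}_{\text{max}}=\mathcal{O}_P(\sqrt{\log p/T})$ (the paper's Lemma \ref{theorA1caner}(a)), and, crucially, the device $\abs{\sqrt{\widehat{\theta}}-\sqrt{\theta}}\leq\sqrt{\abs{\widehat{\theta}-\theta}}$ (Lemma \ref{lemA1caner}(d)), which is exactly how the paper converts $\abs{\widehat{\theta}-\theta}=\mathcal{O}_P(\bar{s}^2\omega_{T})$ into the dominant $\bar{s}^{2}\omega_{T}^{1/2}$ term. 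Where you genuinely diverge is the dimension bookkeeping. The paper never writes $\boldm=\bB\bmu_f$ nor routes the contractions through Woodbury: it works with the $p$-normalized quantities $\theta=\boldm'\bTheta\boldm/p$ and $g=\sqrt{\boldm'\bTheta\boldm}/p$, applies precisely the ``naive'' bounds you reject --- e.g.\ $\norm{(\widehat{\bTheta}-\bTheta)\boldm}_1\leq p\,\vertiii{\widehat{\bTheta}-\bTheta}_1\norm{\boldm}_{\text{max}}$ --- and lets the $1/p$ prefactor absorb the stray dimension; the factor structure enters only once, in Lemma \ref{theorA1caner}(b), where \eqref{equa18} yields $\vertiii{\bTheta}_1=\mathcal{O}(\bar{s})$. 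Your alternative, contracting against $\boldm$ via the (correct) identity $\bTheta\bB=\bTheta_{\varepsilon}\bB\lbrack\bTheta_f+\bB'\bTheta_{\varepsilon}\bB\rbrack^{-1}\bTheta_f$ with $\lbrack\cdot\rbrack^{-1}=\mathcal{O}_P(1/p)$ under \ref{A2}, buys a more structural explanation of why $\norm{\balpha}_1$ and $\theta$ stay bounded, and your explicit attention to $\widehat{\theta}$ being bounded away from zero is in fact more careful than the paper, which treats the denominator $\abs{\widehat{g}}g$ as harmless without comment.

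One caution on your harder step: bounding $(\widehat{\bTheta}-\bTheta)\boldm$ by Woodbury routing is more delicate than the proposal suggests, because $\widehat{\bTheta}$ from Algorithm \ref{alg3} is assembled from the estimated $\widehat{\bB}$, $\widehat{\bTheta}_{\varepsilon}$, $\widehat{\bTheta}_f$, so the difference $(\widehat{\bTheta}-\bTheta)\bB$ does not inherit the clean population identity; you would have to redo the Theorem \ref{theor3} analysis at the level of products with $\bB$, tracking $\widehat{\bB}-\bB$ through \ref{B1}--\ref{B3}, rather than simply citing the column-wise rates $\max_j\norm{\widehat{\bTheta}_j-\bTheta_j}_1=\mathcal{O}_P(\bar{s}^2\omega_{T})$ as stated. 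The paper sidesteps this entirely: with its $1/p$ normalization the naive operator-norm bounds suffice, your ``destroys the result'' objection dissolves, and the remainder of your argument then goes through essentially verbatim.
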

Note that the rate in Theorem \ref{theor3a} depends on the sparsity of $\bTheta_{\varepsilon}$. If, instead, sparsity on $\bTheta$ is imposed, the rate becomes similar to the one derived by \cite{Caner2019}: $\bar{s}(\bTheta)^{3/2}\omega_{T}^{1/2}=o_P(1)$, where $\bar{s}(\bTheta)$ is the maximum vertex degree of $\bTheta$. In their case, if the precision matrix of stock returns is not sparse, consistent estimation of portfolio weights is possible if $(p-1)^{3/2}(\sqrt{\log p/T}+1/\sqrt{p})=o(1)$. However, this excludes high-dimensional cases since $p$ is required to be less than $T^{1/3}$.

\subsection{Asymptotic Properties of De-Biased Portfolio Weights}
We now proceed to examining the properties of sparse MRC portfolio weights for de-biased portfolio, as summarized by the following Theorem:
\begin{thm} \label{theor4}
	Let $\widehat{\bSigma}$ be an estimator of covariance matrix satisfying \ref{B1}, and $\widehat{\bTheta}$ be the estimator of precision obtained using FMB in Algorithm \ref{alg3}.
	Under the assumptions of Theorem \ref{theor3}, consider the linear model \eqref{e4.9} with $\be \sim \mathcal{D}(\bm{0},\sigma_{e}\bI)$, where $\sigma_{e}^{2}=\mathcal{O}(1)$. Consider a suitable choice of the regularization parameters $\lambda \asymp \omega_{T}$ for the Lasso regression in \eqref{ee69} and $\lambda_j \asymp \omega_{T}$ uniformly in $j$ for the Lasso for nodewise regression in \eqref{e21}. Assume $(s_0\vee \bar{s}^2)\Big(\log p/\sqrt{T}+\sqrt{T}/p\Big)=o(1)$. Then
	\begin{align*}
	\sqrt{T}(\widehat{\bw}_{\text{DEBIASED}}-&\bw) = W+\Delta,\\
	&W = \widehat{\bTheta}\bR'\be/\sqrt{T},\\
	&\norm{\Delta}_{\infty} = \mathcal{O}_P\Big( (s_0\vee \bar{s}^2)\Big(\log p/\sqrt{T}+\sqrt{T}/p\Big)\Big)=o_P(1).
	\end{align*}
	Furthermore, if $\be \sim \mathcal{N}(\bm{0},\sigma_{e}^{2}\bI)$, let $\widehat{\bOmega}\defeq \widehat{\bTheta}\widehat{\bSigma}\widehat{\bTheta}'$. Then $W|\bR \sim \mathcal{N}_p(\mathbf{0},\sigma_{e}^{2}\widehat{\bOmega})$ and $\norm{\widehat{\bOmega}-\bTheta}_{\infty}=o_P(1)$.
\end{thm}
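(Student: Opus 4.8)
The plan is to start from the exact expansion already derived in \eqref{ee73}, which after multiplication by $\sqrt{T}$ reads $\sqrt{T}(\widehat{\bw}_{\text{DEBIASED}}-\bw)=\widehat{\bTheta}\bR'\be/\sqrt{T}-\sqrt{T}(\widehat{\bTheta}\widehat{\bSigma}-\bI_p)(\widehat{\bw}-\bw)$, so that $W=\widehat{\bTheta}\bR'\be/\sqrt{T}$ and $\Delta=-\sqrt{T}(\widehat{\bTheta}\widehat{\bSigma}-\bI_p)(\widehat{\bw}-\bw)$. The distributional claim for $W$ is then immediate: since $\widehat{\bTheta}$ and $\widehat{\bSigma}=\bR'\bR/T$ are functions of $\bR$ alone, conditioning on $\bR$ makes $W$ a linear image of the Gaussian vector $\be$, hence $W\mid\bR\sim\mathcal{N}_p(\bm{0},\sigma_e^2\widehat{\bTheta}\widehat{\bSigma}\widehat{\bTheta}')=\mathcal{N}_p(\bm{0},\sigma_e^2\widehat{\bOmega})$, with mean zero because $\E{\be\mid\bR}=\bm{0}$. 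The two substantive tasks are therefore (i) the sup-norm bound on $\Delta$ and (ii) showing $\widehat{\bOmega}$ is consistent for $\bOmega=\bTheta\bSigma\bTheta=\bTheta$.

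For the bound on $\Delta$ I would first collect three ingredients. First, the Lasso oracle inequality for \eqref{ee69}: after verifying a compatibility/restricted-eigenvalue condition for the (spiked) design $\bR$ together with the deviation bound $\norm{\bR'\be/T}_\infty=\mathcal{O}_P(\sqrt{\log p/T})$, one gets $\norm{\widehat{\bw}-\bw}_1=\mathcal{O}_P(s_0\omega_T)$. Second, the Lasso KKT identity \eqref{ee70}, which yields the much sharper residual bound $\norm{\widehat{\bSigma}(\widehat{\bw}-\bw)}_\infty=\norm{\bR'\be/T-\lambda\widehat{\bg}}_\infty=\mathcal{O}_P(\omega_T)$ since $\norm{\widehat{\bg}}_\infty\le1$ and $\lambda\asymp\omega_T$. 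Third, Theorem \ref{theor3}, which supplies $\max_{j}\norm{\widehat{\bTheta}_j-\bTheta_j}_1=\mathcal{O}_P(\bar{s}^2\omega_T)$, together with \ref{B1} giving $\norm{\widehat{\bSigma}-\bSigma}_{\text{max}}=\mathcal{O}_P(\omega_T)$ and the population control $\vertiii{\bTheta}_\infty=\mathcal{O}(1)$ coming from the factor/Woodbury structure in \eqref{equa18}. The crucial step is the decomposition $(\widehat{\bTheta}\widehat{\bSigma}-\bI_p)(\widehat{\bw}-\bw)=\bTheta(\widehat{\bSigma}-\bSigma)(\widehat{\bw}-\bw)+(\widehat{\bTheta}-\bTheta)\widehat{\bSigma}(\widehat{\bw}-\bw)$, using $\bTheta\bSigma=\bI_p$ on the first piece. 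The first piece is bounded by $\vertiii{\bTheta}_\infty\norm{\widehat{\bSigma}-\bSigma}_{\text{max}}\norm{\widehat{\bw}-\bw}_1=\mathcal{O}_P(s_0\omega_T^2)$, while the second, after substituting the KKT residual, is bounded by $\max_j\norm{\widehat{\bTheta}_j-\bTheta_j}_1\cdot\norm{\widehat{\bSigma}(\widehat{\bw}-\bw)}_\infty=\mathcal{O}_P(\bar{s}^2\omega_T^2)$. Multiplying by $\sqrt{T}$ gives $\norm{\Delta}_\infty=\mathcal{O}_P((s_0\vee\bar{s}^2)\sqrt{T}\omega_T^2)$, and since $\sqrt{T}\omega_T^2\asymp\log p/\sqrt{T}+\sqrt{T}/p$, the stated rate and the $o_P(1)$ conclusion follow from the sparsity assumption $(s_0\vee\bar{s}^2)(\log p/\sqrt{T}+\sqrt{T}/p)=o(1)$.

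For $\widehat{\bOmega}$, I would write $\widehat{\bOmega}-\bTheta=(\widehat{\bTheta}-\bTheta)+(\widehat{\bTheta}\widehat{\bSigma}-\bI_p)\widehat{\bTheta}'$ (using symmetry of $\widehat{\bTheta}$ after the symmetrization of Section 3 and $\bTheta'=\bTheta$), or equivalently the three-term telescope $(\widehat{\bTheta}-\bTheta)\widehat{\bSigma}\widehat{\bTheta}'+\bTheta(\widehat{\bSigma}-\bSigma)\widehat{\bTheta}'+(\widehat{\bTheta}-\bTheta)'$. The first term is $o_P(1)$ directly from Theorem \ref{theor3}; the remaining terms are handled exactly as in the $\Delta$ bound, combining $\norm{\widehat{\bTheta}\widehat{\bSigma}-\bI_p}_{\text{max}}=\mathcal{O}_P(\bar{s}^2\omega_T)$ (which itself follows from Theorem \ref{theor3} and \ref{B1}), $\norm{\widehat{\bSigma}-\bSigma}_{\text{max}}=\mathcal{O}_P(\omega_T)$, and the operator-norm bounds $\vertiii{\widehat{\bTheta}}_\infty,\vertiii{\bTheta}_\infty=\mathcal{O}_P(1)$, so that $\norm{\widehat{\bOmega}-\bTheta}_\infty=o_P(1)$ under the same sparsity condition.

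The main obstacle, flagged already in the footnote to \eqref{ee73}, is that $\widehat{\bTheta}$ is the factor-based (FMB) precision estimator whereas $\widehat{\bSigma}=\bR'\bR/T$ is the raw sample covariance of returns, so the clean nodewise KKT identity $\norm{\widehat{\bTheta}\widehat{\bSigma}-\bI_p}_{\text{max}}=\mathcal{O}_P(\lambda)$ of \cite{Buhlmann2014} is \emph{not} available off the shelf; the $\ell_\infty$ control of $\widehat{\bTheta}\widehat{\bSigma}-\bI_p$ must instead be rebuilt from the consistency rates of Theorem \ref{theor3} and \ref{B1}, which is exactly what forces the $\bar{s}^2$ (rather than $\bar{s}$) factor and explains why the sharper KKT-residual bound $\norm{\widehat{\bSigma}(\widehat{\bw}-\bw)}_\infty=\mathcal{O}_P(\omega_T)$ is needed to keep the final rate at $(s_0\vee\bar{s}^2)$ rather than the crude $s_0\bar{s}^2$. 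A secondary difficulty is verifying the compatibility condition for the spiked design $\bR$ and establishing $\vertiii{\bTheta}_\infty=\mathcal{O}(1)$ from the Woodbury representation, both of which rely on Assumptions \ref{A1}--\ref{A2}.
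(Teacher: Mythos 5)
Your skeleton matches the paper's: both start from the expansion \eqref{ee73}, identify $W$ and $\Delta$ identically, obtain $W\mid\bR\sim\mathcal{N}_p(\bm{0},\sigma_e^2\widehat{\bOmega})$ by conditioning on $\bR$, and use the telescoping $\widehat{\bOmega}-\bTheta=(\widehat{\bTheta}\widehat{\bSigma}-\bI_p)\widehat{\bTheta}'+(\widehat{\bTheta}-\bTheta)$. But your central bound on $\Delta$ replaces the paper's key device with an add-subtract decomposition, and in doing so opens two genuine gaps. First, you assert $\vertiii{\bTheta}_{\infty}=\mathcal{O}(1)$ from the Woodbury structure; the paper's own Lemma \ref{theorA1caner}(b) proves only $\vertiii{\bTheta}_{1}=\mathcal{O}(\bar{s})$ (the same for the $\ell_\infty/\ell_\infty$ norm by symmetry), and under the stated assumptions ($\norm{\bTheta_{\varepsilon,j}}_2=\mathcal{O}(1)$ plus row sparsity $\bar{s}$) nothing better is available. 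With the correct rate your first piece $\bTheta(\widehat{\bSigma}-\bSigma)(\widehat{\bw}-\bw)$ becomes $\mathcal{O}_P(\bar{s}\,s_0\,\omega_T^2)$, and $\bar{s}s_0$ is not dominated by $s_0\vee\bar{s}^2$ (take $s_0\asymp\bar{s}^2$, which gives $\bar{s}^3\gg\bar{s}^2$), so the claimed rate fails without a strengthened assumption. Second, you treat the response in \eqref{ee69} as known, but it is $\widehat{y}$, with $\abs{\widehat{y}-y}=\mathcal{O}_P(\bar{s}^2\omega_T)$ by Lemma \ref{lemA1caner}(c) --- precisely the departure from \cite{Buhlmann2014} that the paper emphasizes. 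The Lasso KKT identity then reads $\widehat{\bSigma}(\widehat{\bw}-\bw)=\bR'\be/T+\bR'\biota(\widehat{y}-y)/T-\lambda\widehat{\bg}$, so your ``sharper residual bound'' is $\mathcal{O}_P(\bar{s}^2\omega_T)$, not $\mathcal{O}_P(\omega_T)$, inflating your second piece to $\mathcal{O}_P(\bar{s}^4\omega_T^2)$; likewise the oracle rate is $\norm{\widehat{\bw}-\bw}_1=\mathcal{O}_P\bigl((s_0\vee\bar{s}^2)\omega_T\bigr)$, not $\mathcal{O}_P(s_0\omega_T)$.

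The paper avoids both problems by doing exactly what you declared unavailable: it rebuilds the KKT control on the precision side. From the nodewise KKT conditions for $\widehat{\bTheta}_\varepsilon$ it derives the extended inequality $\norm{\widehat{\bSigma}_\varepsilon\widehat{\bTheta}'_{\varepsilon,j}-\be_j}_\infty\le\lambda_j/\hat{\tau}_j^2$ (with $\be_j$ the $j$-th unit vector) and, combining with the $\ell_1$ rate of Theorem \ref{theor3}, transfers it to the full pair: $\norm{\widehat{\bSigma}\widehat{\bTheta}'_j-\be_j}_\infty\le\lambda_j/\hat{\tau}_j^2=\mathcal{O}_P(\omega_T)$ uniformly in $j$. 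Then $\norm{\Delta}_\infty\le\sqrt{T}\,\norm{\widehat{\bw}-\bw}_1\max_j\lambda_j/\hat{\tau}_j^2$, so the $\widehat{y}$ error enters only once, through the $\ell_1$ rate where it costs exactly $(s_0\vee\bar{s}^2)$, and no $\vertiii{\bTheta}_\infty$ factor ever appears. The same slack infects your $\widehat{\bOmega}$ step: bounding $\norm{(\widehat{\bTheta}\widehat{\bSigma}-\bI_p)\widehat{\bTheta}'}_{\infty}$ via $\norm{\widehat{\bTheta}\widehat{\bSigma}-\bI_p}_{\text{max}}\cdot\max_j\norm{\widehat{\bTheta}_j}_1$ costs $\mathcal{O}_P(\bar{s}^3\omega_T)$, which is not $o_P(1)$ under the theorem's sparsity condition (only $\bar{s}^2\omega_T=o(1)$ is guaranteed), whereas the paper's KKT route gives $\max_j\lambda_j\norm{\widehat{\bTheta}_j}_1/\hat{\tau}_j^2=\mathcal{O}_P(\bar{s}\omega_T)=o_P(1)$. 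To repair your argument you would need either to add $\vertiii{\bTheta}_\infty=\mathcal{O}(1)$ as an assumption and explicitly track the $\bR'\biota(\widehat{y}-y)/T$ term, or to adopt the paper's extended-KKT inequality in place of the decomposition.
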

Some comments are in order. Our Theorem \ref{theor4} is an extension of Theorem 2.4 of \cite{Buhlmann2014} for non-iid case, where the latter is achieved with a help of \cite{CHANG2018}.
 Furthermore, there are several fundamental differences between Theorem \ref{theor4} and Theorem 2.4 of \cite{Buhlmann2014}: first, we apply nodewise regression to estimate sparse precision matrix of factor-adjusted returns, which explains the difference in convergence rates. Concretely, \cite{Buhlmann2014}  have $\omega_{T}=\sqrt{\log p/T}$, whereas we have $\omega_{T}=\sqrt{\log p/T}+1/\sqrt{p}$, where $1/\sqrt{p}$ arises due to the fact that factors need to be estimated. However, we note that since we deal with high-dimensional regime $p\geq T$, this additional term is asymptotically negligible, we only keep it for identification purposes. Second, in contrast with \cite{Buhlmann2014}, the dependent variable in the Lasso regression in \eqref{ee69} is unknown and needs to be estimated. Lemma \ref{lemA1caner} shows that $\widehat{y}$ constructed using the precision matrix estimator from Theorem \ref{theor3} is consistent and shares the same rate as the $\ell_1$-bound in Theorem \ref{theor3}. Third, interestingly, the sparsity assumption on the Lasso regression in \eqref{ee69} is the same as in \cite{Buhlmann2014}: as shown in the Appendix, this condition is still sufficient to ensure that the bias term is asymptotically negligible even when the stock returns follow factor structure with unknown factors. Once we impose Gaussianity of $\be$ in \eqref{e4.9}, we can infer the distribution of portfolio weights. Note that in this case normally distributed errors do not imply that the stock returns are also Gaussian: we did not assume $\varepsilon_t \sim \mathcal{N}_p(\cdot)$ in \eqref{e5.1}. The unknown $\sigma_{e}^{2}$ can be replaced by a consistent estimator.
 Finally, even when Gaussianity of $\be$ is relaxed, we can use the central limit theorem argument to obtain approximate Gaussianity of components of $W|\bR$ of fixed dimension, or moderately growing dimensions (see \cite{Buhlmann2014} for more details), however, in order not to divert the focus of this paper, we leave it for future research.
 \subsection{Asymptotic Properties of Post-Lasso Portfolio Weights}
 To establish the properties of the post-Lasso estimator in Algorithm \ref{algPL}, we focus on MRC weight formulation, since it satisfies the standard post-Lasso assumptions. For GMV and MWC formulations, the procedure described in Algorithm \ref{algPL} is not \enquote{post-Lasso} in the usual sense. Concretely, the latter assumes that both steps in Algorithm \ref{algPL} have the same objective function, which is violated for GMV and MWC. Consequently, we leave rigorous theoretical derivations of these two portfolio formulation for future research. For MRC, we use the post-model selection results established in \cite{chernozhukov2013}. Specifically, we have the following theorem:
 \begin{thm}\label{theor5}
 	Suppose the restricted eigenvalue condition and the restricted sparse eigenvalue condition on the empirical Gram matrix hold (see Condition RE($\bar{c}$) and Condition RSE($m$) of \cite{chernozhukov2013}, p. 529). Let $\widehat{\bw}$ be the post-Lasso weight estimator from Algorithm \ref{algPL}, we have
 	\begin{align*}
 	\norm{\widehat{\bw}-\bw}_1 = \mathcal{O}_P \begin{cases}
 	\sigma_{e} \Big((s_0\omega_{T})\vee (\bar{s}^2\omega_{T})\Big), & \text{in general,}\\
 	\sigma_{e} s_0 \Big(\sqrt{\frac{1}{T}}+\frac{1}{\sqrt{p}}\Big), & \text{if} \ s_0\geq \bar{s}^2 \ \text{and} \ \Xi=\hat{\Xi} \ \textup{wp} \rightarrow 1.\\
 	\end{cases}
 	\end{align*}
 \end{thm}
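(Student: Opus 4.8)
The plan is to recast the problem in the post-model-selection framework of \cite{chernozhukov2013}, while isolating the one ingredient absent from their setup: in the Lasso regression \eqref{ee69} the dependent variable $\widehat{\by}=\widehat{y}\biota$ is itself estimated. The structural fact I would exploit throughout is that the MRC weight is \emph{linear} in the scalar target: the population program $\argmin_{\bw}\E{(y-\bw'\br_t)^2}$ has solution $\bw=y\,\bd$ with $\bd\defeq(\E{\br_t\br_t'})^{-1}\boldm$, and the second (OLS) step of Algorithm \ref{algPL} inherits the same linearity on any fixed support, $\widehat{\bw}=\widehat{y}\,\widehat{\bd}_{\widehat{\Xi}}$, where $\widehat{\bd}_{\widehat{\Xi}}=(\bR_{\widehat{\Xi}}'\bR_{\widehat{\Xi}}/T)^{-1}(\bR_{\widehat{\Xi}}'\biota/T)$ depends on $\bR$ and the selected support alone. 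This motivates the decomposition
\begin{equation*}
\widehat{\bw}-\bw=\underbrace{(\widehat{y}-y)\,\widehat{\bd}_{\widehat{\Xi}}}_{\text{(I): response error}}+\underbrace{y\big(\widehat{\bd}_{\widehat{\Xi}}-\bd\big)}_{\text{(II): selection/estimation error}},
\end{equation*}
which separates the two error sources and lets me attach the rates $\bar{s}^2\omega_T$ and $s_0\omega_T$ to the two pieces.

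For piece (II) I would run the \cite{chernozhukov2013} machinery on the \emph{infeasible} regression with true response $y\biota$. Conditions RE($\bar{c}$) and RSE($m$) deliver the sparsity bound $\text{card}(\widehat{\Xi})=\mathcal{O}_P(s_0)$ on the Lasso-selected model together with the post-Lasso prediction-norm rate $\norm{\bR(\widehat{\bw}-\bw)}_2/\sqrt{T}=\mathcal{O}_P\big(\sigma_e\sqrt{s_0\log p/T}\big)$; the restricted eigenvalue condition then converts this into $\norm{\widehat{\bd}_{\widehat{\Xi}}-\bd}_1=\mathcal{O}_P(\sigma_e s_0\sqrt{\log p/T})$, which is $\mathcal{O}_P(\sigma_e s_0\omega_T)$ because $\omega_T\geq\sqrt{\log p/T}$. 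Controlling the empirical score $\bR'\be/T$ at the $\sqrt{\log p/T}$ rate is where Assumption \ref{A3} (beta-mixing of the idiosyncratic errors) enters, through the same deviation inequalities used for Theorems \ref{theor2}--\ref{theor3}. Crucially, the feasible objective is an exact scalar multiple of the infeasible one, $\text{Lasso}(\widehat{y}\biota,\lambda)=(\widehat{y}/y)\,\text{Lasso}(y\biota,\lambda y/\widehat{y})$, so the two have \emph{identical} support; since $\widehat{y}/y=1+o_P(1)$ the rescaled penalty $\lambda y/\widehat{y}$ is still of order $\omega_T$, and the CHS bounds transfer verbatim to the feasible $\widehat{\Xi}$.

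Piece (I) is where Lemma \ref{lemA1caner} does the work: it yields $\abs{\widehat{y}-y}=\mathcal{O}_P(\bar{s}^2\omega_T)$, the same rate as the $\ell_1$-bound of Theorem \ref{theor3}. Since $\bd$ is proportional to the true weights, whose gross exposure is bounded, $\norm{\widehat{\bd}_{\widehat{\Xi}}}_1=\mathcal{O}_P(1)$, so piece (I) contributes $\mathcal{O}_P(\bar{s}^2\omega_T)$ in $\ell_1$ and the triangle inequality gives the general bound $\mathcal{O}_P\big(\sigma_e((s_0\omega_T)\vee(\bar{s}^2\omega_T))\big)$. For the oracle case I would condition on $\{\widehat{\Xi}=\Xi\}$, which holds with probability tending to one; there the post-Lasso is exactly OLS on the fixed support $S_0$, so no union bound over the $p$ candidate variables is needed. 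Consequently the $\log p$ factor in $\omega_T$—which originates from the $\max_{1\leq j\leq p}$ in the score and in the construction of $\widehat{y}$—is replaced by the parametric $1/\sqrt{T}$ scale on the $s_0$ active coordinates: piece (II) sharpens to $\mathcal{O}_P(\sigma_e s_0/\sqrt{T})$ (via $\norm{\bv}_1\leq\sqrt{s_0}\norm{\bv}_2$ for $s_0$-sparse $\bv$ and bounded restricted eigenvalues), while the only surviving dimension-dependent term in piece (I) is the retained factor-estimation scale $1/\sqrt{p}$; under $s_0\geq\bar{s}^2$ this collapses to $\sigma_e s_0(1/\sqrt{T}+1/\sqrt{p})$.

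The main conceptual obstacle is that the response error cannot be folded into the Lasso penalty the way \cite{chernozhukov2013} fold the noise: the extra score $(\widehat{y}-y)\bR'\biota/T$ has $\ell_\infty$-norm of order $\bar{s}^2\omega_T$, which dwarfs the admissible $\lambda\asymp\omega_T$ once $\bar{s}$ grows. The way around it is the exact scalar-multiple identity above, which peels the entire response error into the $\mathcal{O}_P(1)$-norm direction $\widehat{\bd}_{\widehat{\Xi}}$ of piece (I) and leaves the selection/estimation analysis of piece (II) untouched save for a vanishing penalty rescaling. The remaining technical burden is to verify that RSE($m$) genuinely holds on models of size $m=\mathcal{O}(s_0)$ despite the spiked, strongly cross-correlated design that the pervasive common factors (Assumption \ref{A2}) induce in $\bR$, and to confirm $\norm{\widehat{\bd}_{\widehat{\Xi}}}_1=\mathcal{O}_P(1)$ uniformly over the selected models; these, rather than the otherwise routine CHS bookkeeping, are where the factor structure makes the argument more delicate than the i.i.d.\ template of \cite{chernozhukov2013}.
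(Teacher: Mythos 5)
Your overall route coincides with the paper's: the paper omits the argument altogether, asserting that it ``easily follows'' from Corollary 2 of \cite{chernozhukov2013}, and its surrounding discussion attributes the extra term $(s_0\omega_T)\vee(\bar{s}^2\omega_T)$ to the estimated dependent variable in exactly the way your piece (I) does via Lemma \ref{lemA1caner}. Within that shared skeleton, your exact rescaling identity $\text{Lasso}(\widehat{y}\biota,\lambda)=(\widehat{y}/y)\,\text{Lasso}(y\biota,\lambda y/\widehat{y})$ is a genuine addition: because both the feasible and infeasible responses are constant multiples of $\biota$, the identity is exact, the two programs share the same support, and the penalty perturbation is $1+o_P(1)$ by Lemma \ref{lemA1caner}(c); this is precisely the bookkeeping the paper's one-line citation sweeps under the rug, and it is worth keeping. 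Two smaller remarks: the RE and RSE conditions are \emph{hypotheses} of the theorem, so verifying them under the spiked factor design is not part of this proof (it is a caveat on the theorem's applicability, which you correctly sense but need not resolve here); and $\norm{\widehat{\bw}(\widehat{\Xi})}_1$-type bounds of the form you need for piece (I) do follow from the Belloni--Chernozhukov sparsity bound $\text{card}(\widehat{\Xi})=\mathcal{O}_P(s_0)$ together with bounded restricted eigenvalues, as you indicate.

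The genuine gap is in your oracle case. Conditioning on $\{\widehat{\Xi}=\Xi\}$ improves only piece (II): there the post-Lasso is OLS on a fixed $s_0$-dimensional support and the $\log p$ from the maximal score disappears, giving $\mathcal{O}_P(\sigma_e s_0/\sqrt{T})$. But piece (I) is driven by $\abs{\widehat{y}-y}$, and $\widehat{y}$ is constructed from $\widehat{\theta}=\widehat{\boldm}'\widehat{\bTheta}\widehat{\boldm}/p$ --- a full-cross-section object whose rate in Lemma \ref{lemA1caner}(b) is $\mathcal{O}_P(\bar{s}^2\omega_T)$ with $\omega_T=\sqrt{\log p/T}+1/\sqrt{p}$, inherited from $\norm{\widehat{\boldm}-\boldm}_{\text{max}}$ and the $\ell_1$-bound of Theorem \ref{theor3}. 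Correct selection in the Lasso step of \eqref{ee69} does nothing to sharpen these estimates, so your claim that ``the only surviving dimension-dependent term in piece (I) is the retained factor-estimation scale $1/\sqrt{p}$'' is unsupported: as assembled, your decomposition yields at best $\sigma_e s_0\bigl(\sqrt{\log p/T}+1/\sqrt{p}\bigr)$ in the second regime, not the stated $\sigma_e s_0\bigl(\sqrt{1/T}+1/\sqrt{p}\bigr)$. Closing this would require a selection-free sharpening of $\abs{\widehat{y}-y}$ that removes the $\sqrt{\log p}$ factor, which no lemma in the paper supplies --- and since the paper's own proof is omitted, it never reconciles this term either; flagging the discrepancy would be a legitimate (indeed valuable) observation, but your proposal cannot claim to derive the stated second bound from the pieces it assembles.
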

The proof of Theorem \ref{theor5} easily follows from the proof of Corollary 2 of \cite{chernozhukov2013} and is omitted here. Let us comment on the upper bounds for post-Lasso estimator: first, the term $(s_0\omega_{T})\vee (\bar{s}^2\omega_{T})$ appears since one needs to estimate the dependent variable in equation \eqref{e4.9}, which creates the difference between the bound in \cite{chernozhukov2013} and our Theorem \ref{theor5}. Second, similarly to \cite{chernozhukov2013}, the upper bound undergoes a transition from the oracle rate enjoyed by the standard Lasso to the faster rate that improves on the latter when (1) the precision matrix of the idiosyncratic components is sparse enough and (2) the oracle model has well-separated coefficients. Noticeably, the upper bounds in Theorem \ref{theor5} hold despite the fact that the first-stage Lasso regression in Algorithm \ref{algPL} may fail to correctly select the oracle model $\Xi$ as a subset, that is, $\Xi \notin \hat{\Xi}$. 

Finally, let us compare the rates of non-sparse MRC portfolio weights in Theorem \ref{theor3a}, de-biased weights in Theorem \ref{theor4}, and post-lasso weights in Theorem \ref{theor5}: de-biased estimator exhibits fastest convergence, followed by post-lasso and non-sparse weights. This result is further supported by our simulations presented in the next section.
\section{Monte Carlo}
 We study the consistency for estimating portfolio weights in \eqref{ee20} of (i) sparse portfolios that use the standard Lasso without de-biasing in \eqref{ee69}, (ii) Lasso with de-biasing in \eqref{ee74}, (iii) post-Lasso in Algorithm \ref{algPL}, and (iv) non-sparse portfolios that use FMB from Algorithm \ref{alg3}. Our simulation results are divided into two parts: the first part examines the performance of models (i)-(iv) under the Gaussian setting, and the second part examines the robustness of performance under the elliptical distributions (to be described later). Each part is further subdivided into two cases: with $p<T$ (\textbf{Case 1}) and with $p>T$ (\textbf{Case 2}), in both cases we allow the number of stocks to increase with the sample size, i.e. $p=p_T \rightarrow \infty$ as $T \rightarrow \infty$. In Case 1 we let $p = T^{\delta}$, $\delta = 0.85$ and $T = \lbrack 2^h \rbrack, \ \text{for} \ h=7,7.5,8,\ldots,9.5$, in Case 2 we let $p = 3\cdot T^{\delta}$, $\delta = 0.85$, all else equal.
 
 First, consider the following data generating process for stock returns:
 \begin{align} 
 \underbrace{\br_t}_{p \times 1}=\boldm + \bB \underbrace{\bf_t}_{K\times 1}+ \ \bvarepsilon_t, \quad t=1,\ldots,T \label{e52}
 \end{align}
 where $\boldm_i \sim \mathcal{N}(1,1)$ independently for each $i=1,\ldots,p$, $\bvarepsilon_{t}$ is a $p \times 1$ random vector of idiosyncratic errors following $\mathcal{N}(\bm{0},\bSigma_{\varepsilon})$, with a Toeplitz matrix $\bSigma_{\varepsilon}$ parameterized by $\rho$: that is, $\bSigma_\varepsilon = (\bSigma_\varepsilon)_{ij}$, where $(\bSigma_\varepsilon)_{ij}=\rho^{\abs{i-j}}$, $i,j\in 1,\ldots,p$ which leads to sparse $\bTheta_{\varepsilon}$, $\bf_{t}$ is a $K \times 1$ vector of factors drawn from $\mathcal{N}(\mathbf{0},\bSigma_f = \bI_{K}/10)$, $\bB$ is a $p\times K$ matrix of factor loadings drawn from $\mathcal{N}(\mathbf{0},\bI_{K}/100)$. We set $\rho = 0.5$ and fix the number of factors $K=3$.
 
Let $\bSigma = \bB\bSigma_f\bB'+\bSigma_{\varepsilon}$. To create sparse MRC portfolio weights we use the following procedure: first, we threshold the vector $\bSigma^{-1}\boldm$ to keep the top $p/2$ entries with largest absolute values. This yields sparse vector $\balpha=\bSigma^{-1}\boldm$ defined in \eqref{e158}. We use $\bSigma\balpha$ and $\bSigma$ as the values for the mean and covariance matrix parameters to generate multivariate Gaussian returns in \eqref{e52}. Note that the low rank plus sparse structure of the covariance matrix is preserved under this transformation.

Figure \ref{f1} shows the averaged (over Monte Carlo simulations) errors of the estimators of the weight $\bw_{\text{MRC}}$ versus the sample size $T$ in the logarithmic scale (base 2). As evidenced by Figure \ref{f1}, (1) sparse estimators outperform non-sparse counterparts; (2) using de-biasing or post-Lasso improves the performance compared to the standard Lasso estimator. As expected from Theorems \ref{theor4}-\ref{theor5}, the Lasso, de-biased Lasso and post-Lasso exhibit similar rates, but the two latter estimators enjoy lower estimation error. The ranking remains similar for Case 2, however, as illustrated in Figure \ref{f1}, the performance of all estimators slightly deteriorates.

Gaussian-tail assumption is too restrictive for modeling the behavior of financial returns. Hence, as a second exercise we check the robustness of our sparse portfolio allocation estimators under the elliptical distributions, which we briefly review based on \cite{fan2018elliptical}. Elliptical distribution family generalizes the multivariate normal distribution and multivariate t-distribution. Let $\boldm \in \mathbb{R}^p$ and $\bSigma \in \mathbb{R}^{p\times p}$. A $p$-dimensional random vector $\br$ has an elliptical distribution, denoted by $\br \sim \text{ED}_p(\boldm,\bSigma,\zeta)$, if it has a stochastic representation
\begin{equation}\label{eq6.2}
\br \stackrel{d}{=} \boldm + \zeta \bA \bU,
\end{equation}
where $\bU$ is a random vector uniformly distributed on the unit sphere $\mathcal{S}^{q-1}$ in $\mathbb{R}^q$, $\zeta \geq 0$ is a scalar random variable independent of $\bU$, $\bA \in \mathbb{R}^{p\times q}$ is a deterministic matrix satisfying $\bA\bA'=\bSigma$. As pointed out in \cite{fan2018elliptical}, the representation in \eqref{eq6.2} is not identifiable, hence, we require $\E{\zeta^2}=q$, such that $\text{Cov}(\br) = \bSigma$. We only consider continuous elliptical distributions with $\Pr[\zeta=0]=0$. The advantage of the elliptical distribution for the financial returns is its ability to model heavy-tailed data and the tail dependence between variables.

Having reviewed the elliptical distribution, we proceed to the second part of simulation results. The data generating process is similar to \cite{fan2018elliptical}: let $(\bf_t, \bvarepsilon_{t})$ from \eqref{e52} jointly follow the multivariate t-distribution with the degrees of freedom $\nu$. When $\nu = \infty$, this corresponds to the multivariate normal distribution, smaller values of $\nu$ are associated with thicker tails. We draw $T$ independent samples of $(\bf_t, \bvarepsilon_{t})$ from the multivariate t-distribution with zero mean and covariance matrix $\bSigma = \text{diag}(\bSigma_{f}, \bSigma_{\varepsilon})$, where $\bSigma_{f} = \bI_{K}$. To construct $\bSigma_{\varepsilon}$ we use a Toeplitz structure parameterized by $\rho = 0.5$, which leads to the sparse $\bTheta_{\varepsilon} = \bSigma^{-1}_{\varepsilon}$. The rows of $\bB$ are drawn from $\mathcal{N}(\bm{0}, \bI_{K}/100)$. Figure 2 reports the results for $\nu=4.2$\footnote{The results for larger degrees of freedom do not provide any additional insight, hence we do not report them here. However, they are available upon request.}: the performance of the standard Lasso estimator significantly deteriorates, which is further amplified in the high-dimensional case where it exhibits the worst performance. Noticeably, post-Lasso still achieves the lowest estimation error, followed by de-biased estimator.

\section{Empirical Application}
This section is divided into three main parts. First, we examine the performance of several non-sparse portfolios, including the equal-weighted and Index portfolios (reported as the composite S\&P500 index listed as \textsuperscript{$\wedge$}GSPC). Second, we study the performance of sparse portfolios that are based on de-biasing and post-Lasso. Third, we consider several interesting periods that include different states of the economy: we examine the merit of sparse vs non-sparse portfolios during the periods of economic growth, moderate market decline and severe economic downturns.
\subsection{Data}
We use monthly returns of the components of the S\&P500 index\footnote{The conclusions from using daily data are the same as those for monthly returns, hence we do not report them in the main manuscript text. However, they are available upon request.}. The data on historical S\&P500 constituents and stock returns is fetched from CRSP and Compustat using SAS interface. The full sample has 480 observations on 355 stocks from January 1, 1980 - December 1, 2019. We use January 1, 1980 - December 1, 1994 (180 obs) as a training period and January 1, 1995 - December 1, 2019 (300 obs) as the out-of-sample test period.
We roll the estimation window over the test sample to rebalance the portfolios monthly. At the end of each month, prior to portfolio construction, we remove stocks with less than 15 years of historical stock return data. For sparse portfolio we employ the following strategy to choose the tuning parameter $\lambda$ in \eqref{ee67}: we use the first two thirds of the training data (which we call the training window) to estimate weights and tune the shrinkage intensity $\lambda$ in the remaining one third of the training sample to yield the highest Sharpe Ratio which serves as a validation window. We estimate factors and factor loadings in the training window and validation window combined. The risk-free rate and Fama-French factors are taken from \href{https://mba.tuck.dartmouth.edu/pages/faculty/ken.french/data_library.html}{Kenneth R. French's data library.}

\subsection{Performance Measures}
Similarly to \cite{Caner2019}, we consider four metrics commonly reported in finance literature: the Sharpe Ratio, the portfolio turnover, the average return and risk of a portfolio. We consider two scenarios: with and without transaction costs. Let $T$ denote the total number of observations, the training sample consists of $m$ observations, and the test sample is $n=T-m$. When transaction costs are not taken into account, the out-of-sample average portfolio return, risk and Sharpe Ratio are
\begin{align}
& \hat{\mu}_{\text{test}}=\frac{1}{n}\sum_{t=m}^{T-1}\widehat{\bw}'_{t}\br_{t+1},\\
& \hat{\sigma}_{\text{test}}=\sqrt{\frac{1}{n-1}\sum_{t=m}^{T-1}(\widehat{\bw}'_{t}\br_{t+1}-\hat{\mu}_{\text{test}})^2},\\
& \text{SR} = \hat{\mu}_{\text{test}}/ \hat{\sigma}_{\text{test}}.
\end{align}
We follow \cite{Demiguel2009optimal,Caner2019,Li2015sparse,ban2018machine} to account for transaction costs (tc). In line with the aforementioned papers, we set $c=50 \text{bps}$. Define the excess portfolio at time $t+1$ with transaction costs as 
\begin{align}
r_{t+1,\text{portfolio}} = &\widehat{\bw}'_{t}\br_{t+1}-c(1+\widehat{\bw}'_{t}\br_{t+1})\sum_{j=1}^{p}\abs{\hat{w}_{t+1,j}-\hat{w}_{t,j}^{+}},\\
\text{where}\quad&\hat{w}_{t,j}^{+}=\hat{w}_{t,j}\frac{1+r_{t+1,j}+r^{f}_{t+1}}{1+r_{t+1,\text{portfolio}}+r^{f}_{t+1}},
\end{align} 
where $r_{t+1,j}+r^{f}_{t+1}$ is sum of the excess return of the $j$-th asset and risk-free rate, and $r_{t+1,\text{portfolio}}+r^{f}_{t+1}$ is the sum of the excess return of the portfolio and risk-free rate. The out-of-sample average portfolio return, risk, Sharpe Ratio and turnover are defined accordingly:
\begin{align}
& \hat{\mu}_{\text{test,tc}}=\frac{1}{n}\sum_{t=m}^{T-1}r_{t,\text{portfolio}},\\
& \hat{\sigma}_{\text{test,tc}}=\sqrt{\frac{1}{n-1}\sum_{t=m}^{T-1}(r_{t,\text{portfolio}}-\hat{\mu}_{\text{test,tc}})^2},\\
& \text{SR}_{\text{tc}} =\hat{\mu}_{\text{test,tc}}/ \hat{\sigma}_{\text{test,tc}},\\
&\text{Turnover}=\frac{1}{n}\sum_{t=m}^{T-1}\sum_{j=1}^{p}\abs{\hat{w}_{t+1,j}-\hat{w}_{t,j}^{+}}.
\end{align}
\subsection{Results and Discussion}
The first set of results explores the performance of several non-sparse portfolios: equal-weighted portfolio (EW), Index portfolio (Index), MB from Algorithm \ref{alg1b}, FMB from Algorithm \ref{alg3}, linear shrinkage estimator of covariance that incorporates factor structure through the Sherman-Morrison inversion formula (\cite{Ledoit2004}, further referred to as LW), CLIME (\cite{cai2011constrained}). We consider two scenarios, when the factors are unknown and estimated using the standard PCA (statistical factors), and when the factors are known. For the statistical factors, we  determine the number of factors, $K$, in a standard data-driven way using the information criteria discussed in \cite{Bai2002} and \cite{kapetanios2010testing} among others. For the scenario with known factors we include up to 5 Fama-French factors: FF1 includes the excess return on the market, FF3 includes FF1 plus size factor (Small Minus Big, SMB) and value factor (High Minus Low, HML), and FF5 includes FF3 plus profitability factor (Robust Minus Weak, RMW) and risk factor (Conservative Minus Agressive, CMA). In Table \ref{tab1}, we report monthly portfolio performance for three alternative portfolio allocations in \eqref{e156}, \eqref{e157} and \eqref{e156}. We set a return target $\mu= 0.7974\%$ which is equivalent to $10\%$ yearly return when compounded. The target level of risk for the weight-constrained and risk-constrained Markowitz portfolio (MWC and MRC) is set at $\sigma=0.05$ which is the standard deviation of the monthly excess returns of the S\&P500 index in the first training set.

We now comment on the results which are presented in Table \ref{tab1}: \textbf{(1)} accounting for the factor structure in stock returns improves portfolio performance in terms of the OOS Sharpe Ratio. Specifically, EW, Index, MB and CLIME which ignore factor structure perform worse than FMB and LW. \textbf{(2)} The models that use an improved estimator of covariance or precision matrix outperform EW and Index on the test sample. As a downside, such models have higher Turnover. This implies that superior performance is achieved at the cost of larger variability of portfolio positions over time and, as a consequence, increased risk associated with it.

The second set of results studies the performance of sparse portfolios: we include our proposed methods based on de-biasing and post-Lasso, as well as the approach studied in \cite{Ao2019} (Lasso) without factor investing. For post-Lasso we first use Lasso-based weight estimator in \eqref{ee69} for selecting stocks with absolute value of weights above a small threshold $\epsilon$ (we use $\epsilon=0.0001$), then we form portfolio with the selected stocks using three alternative portfolio allocations in \eqref{e156}-\eqref{e158}. 

Let us comment on the results presented in Table \ref{tab3}: \textbf{(1)}  column one demonstrates that de-biasing leads to significant performance improvement in terms of the return and the OOS Sharpe Ratio. Note that even though the risk of de-biased portfolio is also higher, it still satisfies the risk-constraint. This result emphasizes the importance of correcting for the bias introduced by the $\ell_1$-regularization. \textbf{(2)} Comparing two bias-correction methods, de-biasing and post-Lasso, we find that the latter is characterized by higher return and higher risk. However, increase in portfolio return brought by post-Lasso is, overall, not sufficient to outperform de-biasing approach in terms of the out-of-sample Sharpe Ratio. \textbf{(3)} Sparse portfolios have lower return, risk and turnover compared to non-sparse counterparts in Table \ref{tab1}, however, the OOS Sharpe Ratio is comparable, i.e. we do not see uniform superiority of either method. Therefore, incorporating sparsity allows investors to reduce portfolio risk at the cost of lower return while maintaining the Sharpe Ratio comparable to holding a non-sparse portfolio. 

Tables \ref{tabtry}-\ref{tabtrysparse} compare the performance of non-sparse and sparse (de-biased. ``DL", and post-Lasso, ``PL") portfolios for different time periods in terms of the cumulative excess return (CER) over the period of interest and risk. The first period of interest (1997-98), which will be referred to as ``Period I", corresponds to economic growth since Index exhibited positive CER during this time. The second period of interest, ``Period II", corresponds to moderate market decline since EW and Index had relatively small negative CER. Finally, ``Period III", corresponds to severe economic downturn and significant drop in the performance of EW and Index. We note that the references to the specific crises in Tables \ref{tabtry}-\ref{tabtrysparse} do not intend to limit these economic periods to these time spans. They merely provide the context for the time intervals of interest. Since the performance of MWC portfolios is similar to GMV, we only report MRC and GMV for the ease of presentation.

Let us summarize the findings from Tables \ref{tabtry}-\ref{tabtrysparse}: \textbf{(1)} In Period I non-sparse portfolios that rely on the estimation of covariance or precision matrix outperformed EW and Index in terms of CER for both MRC and GMV. However, in Period II GMV portfolios exhibited slightly negative CER, whereas MRC portfolios had higher risk but positive CER (albeit being lower compared to Period I). Note that in Period III none of the non-sparse portfolios generated positive  CER and portfolio risk increased rapidly. Examining the performance of sparse portfolios in Table \ref{tabtrysparse}, we see that \textbf{(2)} our proposed sparse portfolios produce positive CER during all three periods of interest. Furthermore, the return generated by PL is higher than that by non-sparse portfolios even during Periods I and II. Interestingly, DL produces positive CER without having high risk exposure. This suggests that our de-biased estimator of portfolio weights exhibits minimax properties. We leave the formal theoretical treatment of the latter for the future research. 

\section{Conclusion}
This paper develops an approach to construct sparse portfolios in high dimensions that addresses the shortcomings of the existing sparse portfolio allocation techniques.  We establish the oracle bounds of sparse weight estimators and provide guidance regarding their distribution. From the empirical perspective, we examine the merit of sparse portfolios during different market scenarios. We find that in contrast to non-sparse counterparts, our strategy is robust to recessions and can be used as a hedging vehicle during such times. Our framework makes use of the tool from the network theory called nodewise regression which not only satisfies desirable statistical properties, but also allows us to study whether certain industries could serve as safe havens during recessions. We find that such non-cyclical industries as consumer staples, healthcare, retail and food were driving the returns of the sparse portfolios during both the global financial crisis of 2007-09 and COVID-19 outbreak, whereas insurance sector was the least attractive investment in both periods. Finally, we develop a simple framework that provides clear guidelines how to implement factor investing using the methodology developed in this paper.
\section{Acknowledgements}
I greatly appreciate thoughtful comments and immense support from Tae-Hwy Lee, Jean Helwege, Jang-Ting Guo, Aman Ullah, Matthew Lyle, Varlam Kutateladze and UC Riverside Finance faculty. I also thank seminar participants at the 14th International CFE Conference (virtual), 2021 Southwestern Finance Association Annual Meeting, and Vilnius University.
\cleardoublepage

\phantomsection

\addcontentsline{toc}{section}{References}
\setlength{\baselineskip}{14pt}
\bibliographystyle{apalike}
\bibliography{SereginaSparsePortfolios}

\begin{thebibliography}{}

\bibitem[Ao et~al., 2019]{Ao2019}
Ao, M., Yingying, L., and Zheng, X. (2019).
\newblock Approaching mean-variance efficiency for large portfolios.
\newblock {\em The Review of Financial Studies}, 32(7):2890--2919.

\bibitem[Bai and Ng, 2002]{Bai2002}
Bai, J. and Ng, S. (2002).
\newblock Determining the number of factors in approximate factor models.
\newblock {\em Econometrica}, 70(1):191--221.

\bibitem[Ban et~al., 2018]{ban2018machine}
Ban, G.-Y., El~Karoui, N., and Lim, A.~E. (2018).
\newblock Machine learning and portfolio optimization.
\newblock {\em Management Science}, 64(3):1136--1154.

\bibitem[Barigozzi et~al., 2018]{Brownlees2018EJS}
Barigozzi, M., Brownlees, C., and Lugosi, G. (2018).
\newblock Power-law partial correlation network models.
\newblock {\em Electronic Journal of Statistics}, 12(2):2905--2929.

\bibitem[Belloni and Chernozhukov, 2013]{chernozhukov2013}
Belloni, A. and Chernozhukov, V. (2013).
\newblock Least squares after model selection in high-dimensional sparse
  models.
\newblock {\em Bernoulli}, 19(2):521--547.

\bibitem[Belloni et~al., 2015]{Belloni2015uniform}
Belloni, A., Chernozhukov, V., and Kato, K. (2015).
\newblock Uniform post-selection inference for least absolute deviation
  regression and other z-estimation problems.
\newblock {\em Biometrika}, 102(1):77--94.

\bibitem[Brodie et~al., 2009]{Brodie2009sparse}
Brodie, J., Daubechies, I., De~Mol, C., Giannone, D., and Loris, I. (2009).
\newblock Sparse and stable markowitz portfolios.
\newblock {\em Proceedings of the National Academy of Sciences},
  106(30):12267--12272.

\bibitem[Brownlees et~al., 2018]{Brownlees2018JAE}
Brownlees, C., Nualart, E., and Sun, Y. (2018).
\newblock Realized networks.
\newblock {\em Journal of Applied Econometrics}, 33(7):986--1006.

\bibitem[Caccioli et~al., 2016]{caccioli2016liquidity}
Caccioli, F., Kondor, I., Marsili, M., and Still, S. (2016).
\newblock Liquidity risk and instabilities in portfolio optimization.
\newblock {\em International Journal of Theoretical and Applied Finance},
  19(05):1650035.

\bibitem[Cai et~al., 2011]{cai2011constrained}
Cai, T., Liu, W., and Luo, X. (2011).
\newblock A constrained l1-minimization approach to sparse precision matrix
  estimation.
\newblock {\em Journal of the American Statistical Association},
  106(494):594--607.

\bibitem[Callot et~al., 2019]{Caner2019}
Callot, L., Caner, M., \"Onder, A.~O., and Ulaşan, E. (2019).
\newblock A nodewise regression approach to estimating large portfolios.
\newblock {\em Journal of Business \& Economic Statistics}, 0(0):1--12.

\bibitem[Callot et~al., 2017]{Callot2017}
Callot, L. A.~F., Kock, A.~B., and Medeiros, M.~C. (2017).
\newblock Modeling and forecasting large realized covariance matrices and
  portfolio choice.
\newblock {\em Journal of Applied Econometrics}, 32(1):140--158.

\bibitem[Caner and Kock, 2018]{Caner2018}
Caner, M. and Kock, A.~B. (2018).
\newblock Asymptotically honest confidence regions for high dimensional
  parameters by the desparsified conservative lasso.
\newblock {\em Journal of Econometrics}, 203(1):143 -- 168.

\bibitem[Chang et~al., 2018]{CHANG2018}
Chang, J., Qiu, Y., Yao, Q., and Zou, T. (2018).
\newblock Confidence regions for entries of a large precision matrix.
\newblock {\em Journal of Econometrics}, 206(1):57--82.

\bibitem[DeMiguel et~al., 2009]{Demiguel2009optimal}
DeMiguel, V., Garlappi, L., and Uppal, R. (2009).
\newblock Optimal versus naive diversification: How inefficient is the 1/n
  portfolio strategy?
\newblock {\em The Review of Financial studies}, 22(5):1915--1953.

\bibitem[Fama and French, 1993]{Fama3Factor}
Fama, E.~F. and French, K.~R. (1993).
\newblock Common risk factors in the returns on stocks and bonds.
\newblock {\em Journal of Financial Economics}, 33(1):3--56.

\bibitem[Fama and French, 2015]{Fama5Factor}
Fama, E.~F. and French, K.~R. (2015).
\newblock A five-factor asset pricing model.
\newblock {\em Journal of Financial Economics}, 116(1):1 -- 22.

\bibitem[Fan et~al., 2011]{Fan2011}
Fan, J., Liao, Y., and Mincheva, M. (2011).
\newblock High-dimensional covariance matrix estimation in approximate factor
  models.
\newblock {\em The Annals of Statistics}, 39(6):3320--3356.

\bibitem[Fan et~al., 2013]{fan2013POET}
Fan, J., Liao, Y., and Mincheva, M. (2013).
\newblock Large covariance estimation by thresholding principal orthogonal
  complements.
\newblock {\em Journal of the Royal Statistical Society: Series B (Statistical
  Methodology)}, 75(4):603--680.

\bibitem[Fan et~al., 2018]{fan2018elliptical}
Fan, J., Liu, H., and Wang, W. (2018).
\newblock Large covariance estimation through elliptical factor models.
\newblock {\em The Annals of Statistics}, 46(4):1383--1414.

\bibitem[Fan et~al., 2019]{Fan2019Sparse}
Fan, J., Weng, H., and Zhou, Y. (2019).
\newblock Optimal estimation of functionals of high-dimensional mean and
  covariance matrix.
\newblock {\em arXiv:1908.07460}.

\bibitem[Friedman et~al., 2008]{GLASSO}
Friedman, J., Hastie, T., and Tibshirani, R. (2008).
\newblock {Sparse inverse covariance estimation with the graphical lasso}.
\newblock {\em Biostatistics}, 9(3):432--441.

\bibitem[Hautsch et~al., 2012]{Hautsch2012}
Hautsch, N., Kyj, L.~M., and Oomen, R. C.~A. (2012).
\newblock A blocking and regularization approach to high-dimensional realized
  covariance estimation.
\newblock {\em Journal of Applied Econometrics}, 27(4):625--645.

\bibitem[Javanmard and Montanari, 2014a]{Javanmard2014confidence}
Javanmard, A. and Montanari, A. (2014a).
\newblock Confidence intervals and hypothesis testing for high-dimensional
  regression.
\newblock {\em The Journal of Machine Learning Research}, 15(1):2869--2909.

\bibitem[Javanmard and Montanari, 2014b]{Javanmard2014hypothesis}
Javanmard, A. and Montanari, A. (2014b).
\newblock Hypothesis testing in high-dimensional regression under the gaussian
  random design model: Asymptotic theory.
\newblock {\em IEEE Transactions on Information Theory}, 60(10):6522--6554.

\bibitem[Javanmard et~al., 2018]{Javanmard2018debiasing}
Javanmard, A., Montanari, A., et~al. (2018).
\newblock Debiasing the lasso: Optimal sample size for gaussian designs.
\newblock {\em The Annals of Statistics}, 46(6A):2593--2622.

\bibitem[Kan and Zhou, 2007]{Kan2007optimal}
Kan, R. and Zhou, G. (2007).
\newblock Optimal portfolio choice with parameter uncertainty.
\newblock {\em Journal of Financial and Quantitative Analysis}, 42(3):621--656.

\bibitem[Kapetanios, 2010]{kapetanios2010testing}
Kapetanios, G. (2010).
\newblock A testing procedure for determining the number of factors in
  approximate factor models with large datasets.
\newblock {\em Journal of Business \& Economic Statistics}, 28(3):397--409.

\bibitem[Koike, 2020]{koike2019biased}
Koike, Y. (2020).
\newblock De-biased graphical lasso for high-frequency data.
\newblock {\em Entropy}, 22(4):456.

\bibitem[Ledoit and Wolf, 2004]{Ledoit2004}
Ledoit, O. and Wolf, M. (2004).
\newblock A well-conditioned estimator for large-dimensional covariance
  matrices.
\newblock {\em Journal of Multivariate Analysis}, 88(2):365 -- 411.

\bibitem[Ledoit and Wolf, 2017]{Ledoit2017}
Ledoit, O. and Wolf, M. (2017).
\newblock {Nonlinear Shrinkage of the Covariance Matrix for Portfolio
  Selection: Markowitz Meets Goldilocks}.
\newblock {\em The Review of Financial Studies}, 30(12):4349--4388.

\bibitem[Li, 2015]{Li2015sparse}
Li, J. (2015).
\newblock Sparse and stable portfolio selection with parameter uncertainty.
\newblock {\em Journal of Business \& Economic Statistics}, 33(3):381--392.

\bibitem[Maller and Turkington, 2003]{maller2003new}
Maller, R.~A. and Turkington, D.~A. (2003).
\newblock New light on the portfolio allocation problem.
\newblock {\em Mathematical Methods of Operations Research}, 56(3):501--511.

\bibitem[Meinshausen and Bühlmann, 2006]{meinshausen2006}
Meinshausen, N. and Bühlmann, P. (2006).
\newblock High-dimensional graphs and variable selection with the lasso.
\newblock {\em The Annals of Statistics}, 34(3):1436--1462.

\bibitem[Ross, 1976]{APTRoss}
Ross, S.~A. (1976).
\newblock The arbitrage theory of capital asset pricing.
\newblock {\em Journal of Economic Theory}, 13(3):341 -- 360.

\bibitem[Shaw et~al., 2008]{Shaw2008}
Shaw, D.~X., Liu, S., and Kopman, L. (2008).
\newblock Lagrangian relaxation procedure for cardinality-constrained portfolio
  optimization.
\newblock {\em Optimization Methods and Software}, 23(3):411--420.

\bibitem[Stock and Watson, 2002]{Stock2002}
Stock, J.~H. and Watson, M.~W. (2002).
\newblock Forecasting using principal components from a large number of
  predictors.
\newblock {\em Journal of the American Statistical Association},
  97(460):1167--1179.

\bibitem[Tidmore et~al., 2019]{Tidmore2019}
Tidmore, C., Kinniry, F.~M., Renzi-Ricci, G., and Cilla, E. (2019).
\newblock How to increase the odds of owning the few stocks that drive returns.
\newblock {\em The Journal of Investing}.

\bibitem[van~de Geer et~al., 2014]{Buhlmann2014}
van~de Geer, S., Buhlmann, P., Ritov, Y., and Dezeure, R. (2014).
\newblock On asymptotically optimal confidence regions and tests for
  high-dimensional models.
\newblock {\em The Annals of Statistics}, 42(3):1166--1202.

\bibitem[Zhang and Zhang, 2014]{Zhang2014}
Zhang, C.-H. and Zhang, S.~S. (2014).
\newblock Confidence intervals for low dimensional parameters in high
  dimensional linear models.
\newblock {\em Journal of the Royal Statistical Society: Series B (Statistical
  Methodology)}, 76(1):217--242.

\end{thebibliography}
\cleardoublepage
\begin{sidewaysfigure}
	\begin{subfigure}{0.55\hsize}\centering
			\phantomsection
		\addcontentsline{toc}{section}{Figures}
		\includegraphics[width=1.08\hsize]{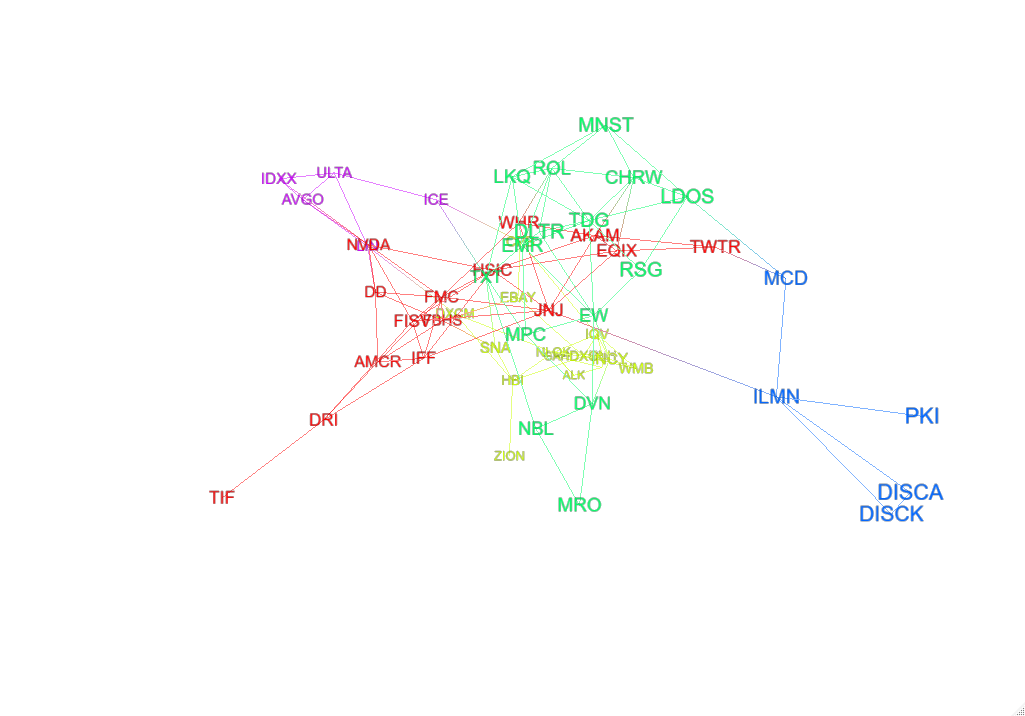}
		\caption{Stocks selected by Post-Lasso in August, 2019}
		\label{fig:sub1}
	\end{subfigure}%
\begin{subfigure}{0.55\hsize}\centering
		\includegraphics[width=1.08\hsize]{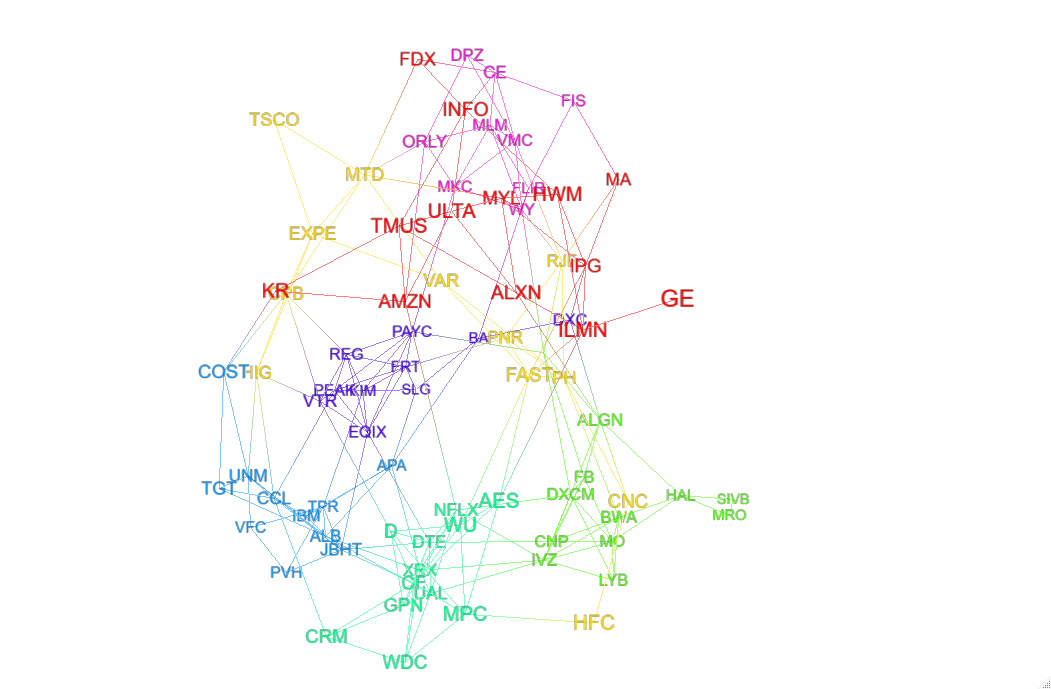}
		\caption{Stocks selected by Post-Lasso in May, 2020}
		\label{fig:sub2}
	\end{subfigure}
	\caption{\textit{Stocks selected by Post-Lasso strategy from Table \ref{table1}.}}
	\label{network}
\end{sidewaysfigure}

\begin{figure}[!htbp]
	\centering
	\includegraphics[width=0.98\textwidth]{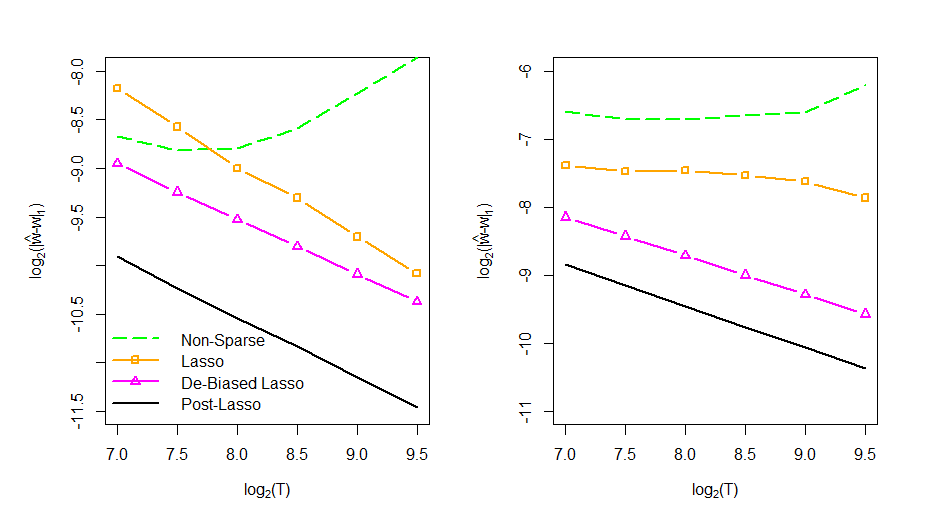}
	\bigskip
	\caption{\linespread{1.2} \selectfont\small \textit{Averaged errors of the estimators of $\bw_{\text{MRC}}$ for Case 1 on logarithmic scale (left): $p = T^{0.85}$, $K = 3$ and for Case 2 on logarithmic scale (right): $p =  3\cdot T^{0.85}$, $K = 3$.}}
	\label{f1}
\end{figure}
\begin{figure}[!htbp]
	\centering
	\includegraphics[width=0.98\textwidth]{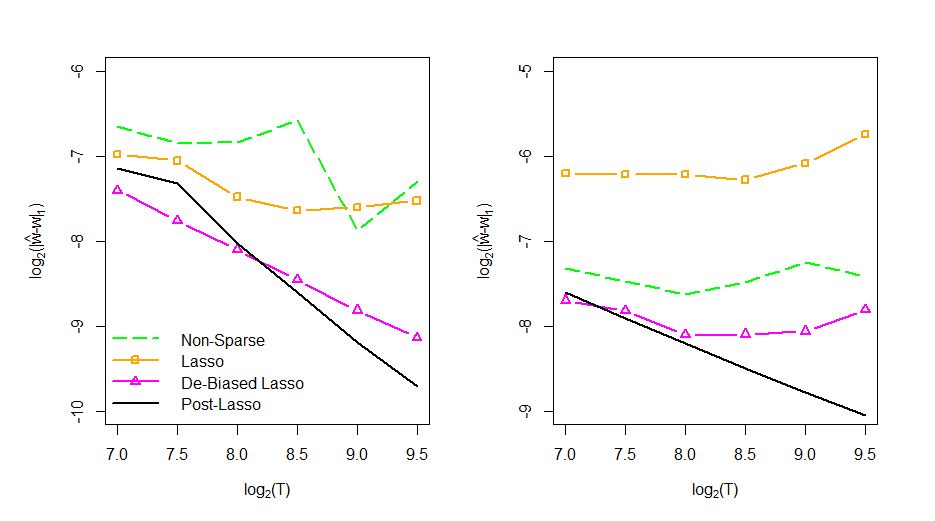}
	\bigskip
	\caption{\linespread{1.2} \selectfont\small \textit{Elliptical Distribution ($\nu=4.2$): Averaged errors of the estimators of $\bw_{\text{MRC}}$ for Case 1 on logarithmic scale (left): $p = T^{0.85}$, $K = 3$ and for Case 2 on logarithmic scale (right): $p =  3\cdot T^{0.85}$, $K = 3$.}}
	\label{f2}
\end{figure}
\begin{landscape}
	\begin{table}[]
		\caption{\textit{Monthly portfolio returns, risk, Sharpe Ratio and turnover. Transaction costs are set to 50 basis points, targeted risk is set at $\sigma=0.05$ (which is the standard deviation of the monthly excess returns on S\&P 500 index from 1980 to 1995, the first training period), monthly targeted return is $0.7974\%$ which is equivalent to $10\%$ yearly return when compounded. In-sample: January 1, 1980 - December 31, 1995 (180 obs), Out-of-sample: January 1, 1995 - December 31, 2019 (300 obs).}}
		\centering
		\label{tab1}
		\resizebox{0.8\textwidth}{!}{%
			\begin{tabular}{@{}ccccccccccccc@{}}
				\toprule
				& \multicolumn{4}{c}{Markowitz (risk-constrained)} & \multicolumn{4}{c}{Markowitz (weight-constrained)} & \multicolumn{4}{c}{Global Minimum-Variance} \\ \midrule
				& Return & Risk & SR & Turnover & Return & Risk & SR & Turnover & Return & Risk & SR & Turnover \\ \midrule
				\textbf{Without TC} &  &  &  & \multicolumn{1}{c}{} &  &  &  & \multicolumn{1}{c}{} &  &  &  &  \\
				EW & 0.0081 & 0.0520 & 0.1553 & \multicolumn{1}{c}{-} & 0.0081 & 0.0520 & 0.1553 & \multicolumn{1}{c}{-} & 0.0081 & 0.0520 & 0.1553 & - \\
				Index & 0.0063 & 0.0458 & 0.1389 & \multicolumn{1}{c}{-} & 0.0063 & 0.0458 &  0.1389 & \multicolumn{1}{c}{-} & 0.0063 & 0.0458 &  0.1389 & - \\
				MB & 0.0539 & 0.2522 & 0.2138 & \multicolumn{1}{c}{-} & 0.0070 & 0.0021 & 0.1539 & \multicolumn{1}{c}{-} & 0.0082 & 0.0020 & 0.1860 & - \\
				FMB (PC) & 0.0287 & 0.1049 & 0.2743 & \multicolumn{1}{c}{-} & 0.0069 & 0.0346 & 0.1968 & \multicolumn{1}{c}{-} & 0.0076 & 0.0346 & 0.2211 & - \\
				CLIME & 0.0372 & 0.2337 & 0.1593 & \multicolumn{1}{c}{-} & 0.0067 & 0.0471 & 0.1434
				& \multicolumn{1}{c}{-} & 0.0076 & 0.0466 & 0.1643 & - \\			
				LW & 0.0296 & 0.1049 & 0.2817 & \multicolumn{1}{c}{-} & 0.0059 & 0.0353 & 0.1662
				& \multicolumn{1}{c}{-} & 0.0063 & 0.0353 & 0.1774 & - \\
				FMB (FF1) & 0.0497 & 0.2200 & 0.2258 & \multicolumn{1}{c}{-} & 0.0071 & 0.0447 & 0.1582 & \multicolumn{1}{c}{-} & 0.0083 & 0.0436 & 0.1921 & - \\
				FMB (FF3) & 0.0384 & 0.1319 & 0.2908 & \multicolumn{1}{c}{-} & 0.0067 & 0.0387 & 0.1754 & \multicolumn{1}{c}{-} & 0.0080 & 0.0361 & 0.2223 & - \\
				FMB (FF5) & 0.0373 & 0.1277 & 0.2921 & \multicolumn{1}{c}{-} & 0.0068 & 0.0374 & 0.1788 & \multicolumn{1}{c}{-} & 0.0081 & 0.0361 & 0.2250 & - \\
				\midrule
				\textbf{With TC} &  &  &  & \multicolumn{1}{c}{} &  &  &  & \multicolumn{1}{c}{} &  &  &  &  \\
				EW & 0.0080 & 0.0520 & 0.1538 & \multicolumn{1}{c}{0.0630} & 0.0080 & 0.0027 & 0.1538 & \multicolumn{1}{c}{0.0630} & 0.0080 & 0.0027 & 0.1538 & 0.0630 \\
				MB & 0.0512 & 0.0637 & 0.2027 & \multicolumn{1}{c}{2.9458} & 0.0067 & 0.0021 & 0.1461 & \multicolumn{1}{c}{0.3223} & 0.0080 & 0.0020 & 0.1804 & 0.2152 \\
				FMB (PC) & 0.0248 & 0.1049 & 0.2368 & \multicolumn{1}{c}{3.7190} & 0.0059 & 0.0346 & 0.1687 & \multicolumn{1}{c}{0.9872} & 0.0067 & 0.0346 & 0.1929 & 0.9686 \\
				CLIME & 0.0334 & 0.2334 & 0.1429 & \multicolumn{1}{c}{4.9174} & 0.0062 & 0.0471 & 0.1307 & \multicolumn{1}{c}{0.5945} & 0.0071 & 0.0466 & 0.1522 &0.5528 \\	
				LW & 0.0237 & 0.1052 & 0.2257 & \multicolumn{1}{c}{5.5889} & 0.0043 & 0.0353 &
				0.1231 & \multicolumn{1}{c}{1.5166} &0.0048 & 0.0354 &  0.1343 & 1.5123\\
				FMB (FF1) & 0.0470 & 0.2202 & 0.2136 & \multicolumn{1}{c}{2.7245} & 0.0067 & 0.0447 & 0.1498 & \multicolumn{1}{c}{0.3489} & 0.0080 & 0.0436 & 0.1857 & 0.2486 \\
				FMB (FF3) & 0.0356 & 0.1319 & 0.2694 & \multicolumn{1}{c}{2.4670} & 0.0062 & 0.0387 & 0.1622 & \multicolumn{1}{c}{0.4728} & 0.0076 & 0.0361 & 0.2106 & 0.3920 \\
				FMB (FF5) & 0.0345 & 0.1277 & 0.2699 & \multicolumn{1}{c}{2.4853} & 0.0063 & 0.0387 & 0.1653 & \multicolumn{1}{c}{0.4847} & 0.0076 & 0.0361 & 0.2129 & 0.4057 \\
				\bottomrule
					\phantomsection
				\addcontentsline{toc}{section}{Tables}
			\end{tabular}%
		}
	\end{table}
	
\end{landscape}
\begin{landscape}
	\begin{table}[]
		\centering
		\caption{\textit{Sparse portfolio (FMB is used for de-biasing): monthly portfolio returns, risk, Sharpe Ratio and turnover. Transaction costs are set to 50 basis points, targeted risk is set at $\sigma=0.05$ (which is the standard deviation of the monthly excess returns on S\&P 500 index from 1980 to 1995, the first training period), monthly targeted return is $0.7974\%$ which is equivalent to $10\%$ yearly return when compounded. Factor Nodewise-regression estimator of precision matrix is used for de-biasing. Selection threshold for Post-Lasso is $\epsilon=0.0001$. In-sample: January 1, 1980 - December 31, 1995 (180 obs), Out-of-sample: January 1, 1995 - December 31, 2019 (300 obs).}}
		\label{tab3}
		\resizebox{\textwidth}{!}{%
			\begin{tabular}{@{}ccccccccccccccccc@{}}
				\toprule
				& \multicolumn{4}{c}{De-Biasing} & \multicolumn{12}{c}{Post-Lasso} \\ \midrule
				& \multicolumn{4}{c}{Markowitz (RC)} & \multicolumn{4}{c}{Markowitz (RC)} & \multicolumn{4}{c}{Markowitz (WC)} & \multicolumn{4}{c}{GMV} \\ \midrule
				& Return & Risk & SR & Turnover & Return & Risk & SR & Turnover & Return & Risk & SR & Turnover & Return & Risk & SR & Turnover \\ \midrule
	\textbf{Without TC} &  &  &  & \multicolumn{1}{c}{} &  & \multicolumn{1}{c}{} &  & \multicolumn{1}{c}{} &  & \multicolumn{1}{c}{} &  & \multicolumn{1}{c}{} &  & \multicolumn{1}{c}{} &  &  \\
	Lasso (PC0) & 0.0007 & 0.0048 & 0.1406 & \multicolumn{1}{c}{-} &  & \multicolumn{1}{c}{} &  & \multicolumn{1}{c}{} &  & \multicolumn{1}{c}{} &  & \multicolumn{1}{c}{} &  & \multicolumn{1}{c}{} &  &  \\
	De-biased Lasso (PC0) & 0.0023 & 0.0100 & 0.2266 & \multicolumn{1}{c}{-} & 0.0287 & 0.1217 & 0.2362 & \multicolumn{1}{c}{-} & -0.0174 & 0.4987 & -0.0350 & \multicolumn{1}{c}{-} & -0.0187 & 0.4941 & -0.0379 & - \\
	Lasso (PC) & 0.0006 & 0.0052 & 0.1122 & \multicolumn{1}{c}{-} &  &  &  & \multicolumn{1}{c}{} &  &  &  & \multicolumn{1}{c}{} &  &  &  &  \\
	De-biased Lasso (PC) & 0.0067 & 0.0265 & 0.2542 & \multicolumn{1}{c}{-} & 0.0290 & 0.1005 & 0.2882 & \multicolumn{1}{c}{-} & 0.0075 & 0.0624 & 0.1205 & \multicolumn{1}{c}{-} & 0.0087 & 0.0458 & 0.1901 & - \\
	Lasso (FF1) & 0.0007 & 0.0039 & 0.1902 & \multicolumn{1}{c}{-} &  &  &  & \multicolumn{1}{c}{} &  &  &  & \multicolumn{1}{c}{} &  &  &  &  \\
	De-biased Lasso (FF1) & 0.0109 & 0.0346 & 0.3213 & \multicolumn{1}{c}{-} & 0.0207 & 0.1192 & 0.1738 & \multicolumn{1}{c}{-} & 0.0031 & 0.2468 & 0.0124 & \multicolumn{1}{c}{-} & -0.0222 & 0.6047 & -0.0367 & - \\
	Lasso (FF3) & 0.0004 & 0.0040 & 0.1113 & \multicolumn{1}{c}{-} &  &  &  & \multicolumn{1}{c}{} &  &  &  & \multicolumn{1}{c}{} &  &  &  &  \\
	De-biased Lasso (FF3) & 0.0072 & 0.0265 & 0.2721 & \multicolumn{1}{c}{-} & 0.0157 & 0.1245 & 0.1263 & \multicolumn{1}{c}{-} & 0.0136 & 0.1153 & 0.1182 & \multicolumn{1}{c}{-} & -0.0226 & 0.6054 & -0.0373 & - \\
	Lasso (FF5) & 0.0002 & 0.0042 & 0.0577 & \multicolumn{1}{c}{-} &  &  &  & \multicolumn{1}{c}{} &  &  &  & \multicolumn{1}{c}{} &  &  &  &  \\
	De-biased Lasso (FF5) & 0.0073 & 0.0300 & 0.2467 & \multicolumn{1}{c}{-} & 0.0212 & 0.1127 & 0.1879 & \multicolumn{1}{c}{-} & 0.0093 & 0.0693 & 0.1342 & \multicolumn{1}{c}{-} & 0.0094 & 0.0980 & 0.0959 & - \\ \midrule
	\textbf{With TC} &  &  &  & \multicolumn{1}{c}{} &  & \multicolumn{1}{c}{} &  & \multicolumn{1}{c}{} &  & \multicolumn{1}{c}{} &  & \multicolumn{1}{c}{} &  & \multicolumn{1}{c}{} &  &  \\
	Lasso (PC0) & 0.0006 & 0.0049 & 0.1189 & \multicolumn{1}{c}{0.0719} &  & \multicolumn{1}{c}{} &  & \multicolumn{1}{c}{} &  & \multicolumn{1}{c}{} &  & \multicolumn{1}{c}{} &  & \multicolumn{1}{c}{} &  &  \\
	De-biased Lasso (PC0) & 0.0020 & 0.0100 & 0.1953 & \multicolumn{1}{c}{0.7952} & 0.0262 & 0.1212 & 0.2155 & \multicolumn{1}{c}{2.1249} & -0.0191 & 0.4990 & -0.0383 & \multicolumn{1}{c}{1.5373} & -0.0199 & 0.4945 & -0.0402 & 1.0737 \\
	Lasso (PC) & 0.0004 & 0.0052 & 0.0845 & \multicolumn{1}{c}{0.1136} &  &  &  & \multicolumn{1}{c}{} &  &  &  & \multicolumn{1}{c}{} &  &  &  &  \\
	De-biased Lasso (PC) & 0.0055 & 0.0265 & 0.2061 & \multicolumn{1}{c}{1.2113} & 0.0268 & 0.1005 & 0.2668 & \multicolumn{1}{c}{2.1756} & 0.0059 & 0.0624 & 0.0940 & \multicolumn{1}{c}{1.5777} & 0.0076 & 0.0458 & 0.1652 & 1.1026 \\
	Lasso (FF1) & 0.0006 & 0.0038 & 0.1654 & \multicolumn{1}{c}{0.0789} &  &  &  & \multicolumn{1}{c}{} &  &  &  & \multicolumn{1}{c}{} &  &  &  &  \\
	De-biased Lasso (FF1) & 0.0100 & 0.0346 & 0.2949 & \multicolumn{1}{c}{0.8298} & 0.0186 & 0.1192 & 0.1559 & \multicolumn{1}{c}{2.1589} & 0.0013 & 0.2458 & 0.0052 & \multicolumn{1}{c}{1.7374} & -0.0234 & 0.6056 & -0.0386 & 1.1113 \\
	Lasso (FF3) & 0.0003 & 0.0040 & 0.0852 & \multicolumn{1}{c}{0.0785} &  &  &  & \multicolumn{1}{c}{} &  &  &  & \multicolumn{1}{c}{} &  &  &  &  \\
	De-biased Lasso (FF3) & 0.0062 & 0.0265 & 0.2352 & \multicolumn{1}{c}{0.9142} & 0.0134 & 0.1245 & 0.1077 & \multicolumn{1}{c}{2.2245} & 0.0120 & 0.1149 & 0.1046 & \multicolumn{1}{c}{1.6208} & -0.0236 & 0.6058 & -0.0390 & 1.0482 \\
	Lasso (FF5) & 0.0001 & 0.0042 & 0.0310 & \multicolumn{1}{c}{0.0861} &  &  &  & \multicolumn{1}{c}{} &  &  &  & \multicolumn{1}{c}{} &  &  &  &  \\
	De-biased Lasso (FF5) & 0.0062 & 0.0300 & 0.2124 & 0.9507 & 0.0184 & 0.1122 & 0.1639 & 2.2542 & 0.0076 & 0.0693 & 0.1098 & 1.6033 & 0.0083 & 0.0980 & 0.0844 & 1.0944 \\ \bottomrule
			\end{tabular}%
		}
	\end{table}
\end{landscape}

\begin{landscape}
	\begin{table}[]
		\centering
		\caption{\textit{Cumulative excess return (CER) and risk of non-sparse portfolios using monthly data.}
		}
		\label{tabtry}
		\resizebox{0.8\textwidth}{!}{%
			\begin{tabular}{@{}m{4cm}m{2cm}m{2cm}m{2cm}m{2cm}m{2cm}m{2cm}m{2cm}@{}}
			\toprule
			& \multicolumn{2}{c}{\begin{tabular}[c]{@{}c@{}}Asian \& Rus. \\ Fin. Crisis\\  (1997-1998)\end{tabular}} & \multicolumn{2}{c}{\begin{tabular}[c]{@{}c@{}}Argen. Great Depr. \\ \& dot-com bubble \\ (1999-2002)\end{tabular}} & \multicolumn{2}{c}{\begin{tabular}[c]{@{}c@{}}Fin. Crisis \\ (2007-2009)\end{tabular}}\\ \midrule
			& \textbf{CER} & \textbf{Risk} & \textbf{CER} & \textbf{Risk} & \textbf{CER} & \textbf{Risk}\\ \midrule
			EW & 0.2712 & \multicolumn{1}{c}{0.0547} & -0.0322 & \multicolumn{1}{c}{0.0519} &-0.4987& \multicolumn{1}{c}{0.1203} \\
			Index & 0.3222 & \multicolumn{1}{c}{0.0508} & -0.1698 & \multicolumn{1}{c}{0.0539}& -0.4924 & \multicolumn{1}{c}{0.0929}\\ \midrule
			& \multicolumn{6}{c}{\textbf{Markowitz Risk-Constrained (MRC)}} \\ \midrule
			MB & 2.1662 & \multicolumn{1}{c}{0.3381} & -0.1140 & \multicolumn{1}{c}{0.2916} & -3.0688 & \multicolumn{1}{c}{0.5101} \\
			CLIME & 1.3285 & \multicolumn{1}{c}{0.0892} & 0.4241 & \multicolumn{1}{c}{0.1297} & -3.0470 & \multicolumn{1}{c}{0.4735} \\
			LW & 0.9134 & \multicolumn{1}{c}{0.1021} & 0.3677 & \multicolumn{1}{c}{0.1412} & -0.3196 & \multicolumn{1}{c}{0.2751} \\
			FMB (PC) & 1.3153 & \multicolumn{1}{c}{0.0883} & 0.5016 & \multicolumn{1}{c}{0.1286} & -0.1312 & \multicolumn{1}{c}{0.1219}  \\
			FMB (FF1) & 2.0379 & \multicolumn{1}{c}{0.3029} & 0.0861 & \multicolumn{1}{c}{0.2660} & -2.7247 & \multicolumn{1}{c}{0.4301} \\ 	\midrule
			& \multicolumn{6}{c}{\textbf{Global Minimum-Variance Portfolio (GMV)}} \\ \midrule
			MB &  0.2791 & \multicolumn{1}{c}{0.0496} & -0.0470 & \multicolumn{1}{c}{0.0476} & -0.4637 & \multicolumn{1}{c}{0.1015} \\
			CLIME & 0.3960 & \multicolumn{1}{c}{0.0374} & -0.1224 & \multicolumn{1}{c}{0.0510} & -0.4588 & \multicolumn{1}{c}{0.0987} \\
			LW & 0.3127 & \multicolumn{1}{c}{0.0415} & -0.0952 & \multicolumn{1}{c}{0.0483} & -0.4013 & \multicolumn{1}{c}{0.0693} \\
			FMB (PC) & 0.4117 & \multicolumn{1}{c}{0.0364} & -0.1227 & \multicolumn{1}{c}{0.0505} & -0.3444 & \multicolumn{1}{c}{0.0393} \\
			FMB (FF1) & 0.2784 & \multicolumn{1}{c}{0.0487} & -0.0396 & \multicolumn{1}{c}{0.0468} & -0.4570 & \multicolumn{1}{c}{0.0986}\\	\bottomrule
		\end{tabular}%
		}
		
	\end{table}
\end{landscape}

\begin{landscape}
	\begin{table}[]
		\centering
		\caption{\textit{Cumulative excess return (CER) and risk of sparse portfolios using monthly data.}
		}
		\label{tabtrysparse}
		\resizebox{0.8\textwidth}{!}{%
			\begin{tabular}{@{}m{3cm}m{2cm}m{2cm}m{2cm}m{2cm}m{2cm}m{2cm}m{2cm}@{}}
			\toprule
			& \multicolumn{2}{c}{\begin{tabular}[c]{@{}c@{}}Asian \& Rus. \\ Fin. Crisis\\  (1997-1998)\end{tabular}} & \multicolumn{2}{c}{\begin{tabular}[c]{@{}c@{}}Argen. Great Depr. \\ \& dot-com bubble \\ (1999-2002)\end{tabular}} & \multicolumn{2}{c}{\begin{tabular}[c]{@{}c@{}}Fin. Crisis \\ (2007-2009)\end{tabular}} \\ \midrule
			& \textbf{CER} & \textbf{Risk} & \textbf{CER} & \textbf{Risk} & \textbf{CER} & \textbf{Risk}\\ \midrule	
			EW & 0.2712 & \multicolumn{1}{c}{0.0547} & -0.0322 & \multicolumn{1}{c}{0.0519} &-0.4987& \multicolumn{1}{c}{0.1203} \\
			Index & 0.3222 & \multicolumn{1}{c}{0.0508}& -0.1698 & \multicolumn{1}{c}{0.0539}& -0.4924 & \multicolumn{1}{c}{0.0929}\\ \midrule
			& \multicolumn{6}{c}{\textbf{Debiased MRC}} \\ \midrule	
			DL(PC) &0.2962 & \multicolumn{1}{c}{0.0261} & 0.1567 & \multicolumn{1}{c}{0.0217} & 0.1129 & \multicolumn{1}{c}{0.0408} \\
			DL(FF1) & 0.4149 & \multicolumn{1}{c}{0.0277} & 0.1681 & \multicolumn{1}{c}{0.0240} & -0.0258 & \multicolumn{1}{c}{0.0230} \\
			DL(FF3) & 0.2123 & \multicolumn{1}{c}{0.0142} & 0.1782 & \multicolumn{1}{c}{0.0186} & -0.0406 & \multicolumn{1}{c}{0.0202} \\ \midrule
			& \multicolumn{6}{c}{\textbf{Post-Lasso MRC}} \\ \midrule
			PL(PC) & 3.0881 & \multicolumn{1}{c}{0.2211} & 1.7153 & \multicolumn{1}{c}{0.1281} & 2.6131 & \multicolumn{1}{c}{0.1862}  \\
			PL(FF1) & 2.3433 & \multicolumn{1}{c}{0.1568} & 1.4470 & \multicolumn{1}{c}{0.1828} & 2.8639 & \multicolumn{1}{c}{0.2404}  \\
			PL(FF3) & 0.6691 & \multicolumn{1}{c}{0.1887} & -0.1561 & \multicolumn{1}{c}{0.1799} & -0.9998 & \multicolumn{1}{c}{0.1410}\\ \midrule
			& \multicolumn{6}{c}{\textbf{Post-Lasso GMV}} \\ \midrule				
			PL(PC) & 0.4403 & \multicolumn{1}{c}{0.0593} & 0.8150 & \multicolumn{1}{c}{0.0955} & -0.3694 & \multicolumn{1}{c}{0.1243}  \\
			PL(FF1) & 0.3385 & \multicolumn{1}{c}{0.0616} & 0.8151 & \multicolumn{1}{c}{0.0877} & -0.5545 & \multicolumn{1}{c}{0.1213}  \\
			PL(FF3) & 0.0711 & \multicolumn{1}{c}{0.0713} & 0.1458 & \multicolumn{1}{c}{0.1061} & 0.0295 & \multicolumn{1}{c}{0.0694}  \\ \bottomrule
		\end{tabular}%
		}
		
	\end{table}
\end{landscape}

\phantomsection
\renewcommand{\appendixpagename}{Appendix}
\renewcommand\appendixtocname{Appendix}
\begin{appendices}
	\renewcommand{\thesubsection}{A.\arabic{subsection}}
	\renewcommand{\theequation}{A.\arabic{equation}}
In this Appendix we collected the proofs of Theorems 2-5.
\subsection{Proof of Theorem \ref{theor2}}
The first part of Theorem \ref{theor2} was proved in \cite{fan2018elliptical} (see their proof of Theorem 2.1) under the assumptions \ref{A1}-\ref{A3}, \ref{B1}-\ref{B3} and $\log p=o(T)$. To prove the convergence rates for the precision matrix of the factor-adjusted returns, we follow \cite{CHANG2018}, \cite{Caner2018} and \cite{Caner2019}. Using the facts that $\max_{i\leq p} (1/T)\sum_{t=1}^{T} \abs{\hat{\varepsilon}_{it}-\varepsilon_{it}}=\mathcal{O}_P(\omega_{T}^2)$ and $\max_{i,t}\abs{\hat{\varepsilon}_{it}-\varepsilon_{it}}=\mathcal{O}_P(\omega_{T})=o_P(1)$, we get
\begin{align}\label{Ap1}
\max_{1\leq j\leq p} \norm{\widehat{\bgamma}_j-\bgamma_j}_1=\mathcal{O}_P(\bar{s}\omega_{T}),
\end{align}
where $\widehat{\bgamma}_j$ was defined in \eqref{e21}. The proof of \ref{Ap1} is similar to the proof of the equation (23) of \cite{CHANG2018}, with $\omega_{T}=\sqrt{\log p/T}$ for their case. Similarly to \cite{Caner2019}, consider the following linear model:
\begin{align} \label{Ap2}
&\widehat{\bvarepsilon}_{j}=\widehat{\bE}_{-j}\bgamma_j+\boldeta_j, \ \text{for} \ j=1,\ldots,p,\\
&\E{\boldeta'_{j}\widehat{\bE}_{-j}} = 0.\nonumber
\end{align}
\cite{Buhlmann2014} and \cite{CHANG2018} showed that
\begin{align}\label{Ap3}
\max_{1\leq j\leq p} \norm{\boldeta'_{j}\widehat{\bE}_{-j}}_{\infty}/T=\mathcal{O}_P(\omega_{T}).
\end{align}
Let $\tau_{j}^{2}\defeq \E{\boldeta'_{j}\boldeta_{j}}$, then we have
\begin{align}\label{Ap4}
\max_{1\leq j\leq p} \norm{ \boldeta'_{j}\boldeta_{j}/T-\tau_{j}^{2} } = \mathcal{O}_P(\omega_{T}).
\end{align}

Note that the rate in \eqref{Ap4} is the same as in Lemma 1 of \cite{CHANG2018} with $\omega_{T}=\sqrt{\log p/T}$ for their case. However, the rate in \eqref{Ap4} is different from the one derived in \cite{Buhlmann2014} since we allow time-dependence between factor-adjusted returns.\\

Recall that $\hat{\tau}_{j}^{2}=\norm{\widehat{\bvarepsilon}_{j}-\widehat{\bE}_{-j}\widehat{\bgamma}_j}_{2}^{2}/T+\lambda_j\norm{\widehat{\bgamma}_j}_1$. Using triangle inequality, we have:
\begin{align*}
\max_{1\leq j\leq p} \abs{\hat{\tau}_{j}^{2}-\tau_{j}^{2}} &\leq \underbrace{ \max_{1\leq j\leq p} \abs{\boldeta'_{j}\boldeta_{j}/T-\tau_{j}^{2}} }_{\text{\rom{1}}} + \underbrace{ \max_{1\leq j\leq p} \abs{\boldeta'_{j}\widehat{\bE}_{-j}(\widehat{\bgamma}_j-\bgamma_j)/T  } }_{\text{\rom{2}}}\\
&+ \underbrace{ \max_{1\leq j\leq p} \abs{\boldeta'_{j}\widehat{\bE}_{-j}\bgamma_j/T} }_{\text{\rom{3}}} + \underbrace{ \max_{1\leq j\leq p} \bgamma'_{j} \widehat{\bE}'_{-j}\widehat{\bE}_{-j}(\widehat{\bgamma}_j-\bgamma_j)/T}_{\text{\rom{4}}}.
\end{align*}

The first term was bounded in \ref{Ap4}, we now bound the remaining terms:
\begin{align*}
\text{\rom{2}}\leq \max_{1\leq j\leq p} \norm{\boldeta'_{j}\widehat{\bE}_{-j}/T}_{\infty} \max_{1\leq j\leq p} \norm{\widehat{\bgamma}_j-\bgamma_j}_1=\mathcal{O}_P(\bar{s}\omega_{T}^{2}),
\end{align*}
where we used \ref{Ap1} and \ref{Ap3}. For \rom{3} we have
\begin{align*}
\text{\rom{3}}\leq \max_{1\leq j\leq p} \norm{\boldeta'_{j}\widehat{\bE}_{-j}/T}_{\infty} \max_{1\leq j\leq p} \norm{\bgamma_j}_1=\mathcal{O}_P(\sqrt{\bar{s}}\omega_{T}),
\end{align*}
where we used \ref{Ap3} and the fact that $\norm{\bgamma_j}_1\leq \sqrt{s_j}\norm{\bgamma_j}_2=\mathcal{O}(\sqrt{s_j})$.
To bound the last term, we use KKT conditions in node-wise regression:
\begin{align*}
\max_{1\leq j\leq p} \norm{ \widehat{\bE}'_{-j}\widehat{\bE}_{-j}(\widehat{\bgamma}_j-\bgamma_j)/T}_{\infty} \leq \max_{1\leq j\leq p} \norm{\widehat{\bE}'_{-j}\boldeta_{j}/T}_{\infty} + \max_{1\leq j\leq p} \lambda_j = \mathcal{O}_P(\omega_{T}),
\end{align*}
where we used \ref{Ap3} and $\lambda_j \asymp \omega_{T}$. It follows that
\begin{align*}
\text{\rom{4}} =  \mathcal{O}_P(\omega_{T})  \max_{1\leq j\leq p} \norm{\bgamma_j}_1 = \mathcal{O}_P(\sqrt{\bar{s}}\omega_{T}).
\end{align*}
Therefore, we now have shown that
\begin{align}
\max_{1\leq j\leq p} \abs{\hat{\tau}_{j}^{2}-\tau_{j}^{2}} = \mathcal{O}_P(\sqrt{\bar{s}}\omega_{T}).
\end{align}
Using the fact that $1/\tau_{j}^{2}=\mathcal{O}(1)$, we also have
\begin{align}
1/\hat{\tau}_{j}^{2}-1/\tau_{j}^{2} = \mathcal{O}_P(\sqrt{\bar{s}}\omega_{T}).
\end{align}
Finally, using the analysis in (B.51)-(B.53) of \cite{Caner2018}, we get
\begin{align}
\max_{1\leq j \leq p} \norm{\widehat{\bTheta}_{\varepsilon,j}-\bTheta_{\varepsilon,j}}_1=\mathcal{O}_P(s_T\omega_{T}).
\end{align}
To prove the second rate for the precision of the factor-adjusted returns, we note that
\begin{align}\label{Ap8}
\max_{1\leq j\leq p} \norm{\widehat{\bgamma}_j-\bgamma_j}_2=\mathcal{O}_P(\sqrt{\bar{s}}\omega_{T}),
\end{align}
which was obtained in \cite{CHANG2018} (see their Lemma 2). We can write
\begin{align}
\max_{1\leq j \leq p} \norm{\widehat{\bTheta}_{\varepsilon,j}-\bTheta_{\varepsilon,j}}_2\leq \max_{1\leq j \leq p} \lbrack \norm{\widehat{\bgamma}_j-\bgamma_j}_2/\hat{\tau}_{j}^{2} + \norm{\bgamma_j}_2  1/\hat{\tau}_{j}^{2}-1/\tau_{j}^{2}   \rbrack = \mathcal{O}_P(\sqrt{\bar{s}}\omega_{T}).
\end{align}

\subsection{Proof of Theorem \ref{theor3}}
Let $\widehat{\bJ}=\widehat{\bLambda}^{1/2}\widehat{\bGamma}'\widehat{\bTheta}_{\varepsilon}\widehat{\bGamma}\widehat{\bLambda}^{1/2}$ and $\widetilde{\bJ}=\widetilde{\bLambda}^{1/2}\widetilde{\bGamma}'\bTheta_{\varepsilon}\widetilde{\bGamma}\widetilde{\bLambda}^{1/2}$. Also, define 
\begin{equation*}
\Delta_{\text{inv}} = \widehat{\bTheta}_{\varepsilon}\widehat{\bGamma}\widehat{\bLambda}^{1/2}(\bI_{K}+\widehat{\bJ})^{-1}\widehat{\bLambda}^{1/2}\widehat{\bGamma}'\widehat{\bTheta}_{\varepsilon}-\bTheta_{\varepsilon}\widetilde{\bGamma}\widetilde{\bLambda}^{1/2} (\bI_{K}+\widetilde{\bJ})^{-1} \widetilde{\bLambda}^{1/2} \widetilde{\bGamma}' \bTheta_{\varepsilon}.
\end{equation*}
Using Sherman-Morrison-Woodbury formulas in \ref{equa18}, we have
\begin{align} \label{Ap10}
\vertiii{\widehat{\bTheta}-\bTheta}_1&\leq \vertiii{\widehat{\bTheta}_{\varepsilon}-\bTheta_{\varepsilon}}_1+\vertiii{\Delta_{\text{inv}}}_1.
\end{align}
As pointed out by \cite{fan2018elliptical}, $\vertiii{\Delta_{\text{inv}}}_1$ can be bounded by the following three terms:
\begin{align*}
&\vertiii{(\widehat{\bTheta}_{\varepsilon}-\bTheta_{\varepsilon}) \widetilde{\bGamma}\widetilde{\bLambda}^{1/2}(\bI_{K}+\widetilde{\bJ})^{-1} \widetilde{\bLambda}^{1/2} \widetilde{\bGamma}'\bTheta_{\varepsilon}}_1 = \mathcal{O}_P(\bar{s}\omega_{T}\cdot p\cdot \frac{1}{p}\cdot \sqrt{\bar{s}}),\\
&\vertiii{\bTheta_{\varepsilon}(\widehat{\bGamma}\widehat{\bLambda}^{1/2} - \widetilde{\bGamma}\widetilde{\bLambda}^{1/2} ) (\bI_{K}+\widetilde{\bJ})^{-1} \widetilde{\bLambda}^{1/2} \widetilde{\bGamma}'\bTheta_{\varepsilon}}_1 = \mathcal{O}_P(\sqrt{\bar{s}}\cdot p\omega_{T}\cdot \frac{1}{p}\cdot \sqrt{\bar{s}}),\\
& \vertiii{\bTheta_{\varepsilon}\widetilde{\bLambda}^{1/2} \widetilde{\bGamma}'((\bI_{K}+\widehat{\bJ})^{-1} - (\bI_{K}+\widetilde{\bJ})^{-1})\widetilde{\bGamma}'\bTheta_{\varepsilon} }_1 = \mathcal{O}_P(\sqrt{\bar{s}}\cdot \frac{1}{p}\cdot p\bar{s}\omega_{T}\sqrt{\bar{s}}  ).
\end{align*}
To derive the above rates we used \ref{B1}-\ref{B3}, Theorem \ref{theor2} and the fact that $\norm{\widehat{\bGamma}\widehat{\bLambda}\widehat{\bGamma}'-\bB\bB'}_{\text{F}}=\mathcal{O}_P(p\omega_{T})$. The second rate in Theorem \ref{theor3} can be easily obtained using the technique described above for the $l_2$-norm.

\subsection{Lemmas for Theorems \ref{theor3a}-\ref{theor4}}
\begin{lem}\label{theorA1caner}
	Under the assumptions of Theorem \ref{theor3},
	\begin{enumerate}[label=(\alph*)]
		\item $\norm{\widehat{\boldm}-\boldm}_{\text{max}}=\mathcal{O}_P(\sqrt{\log p/T})$, where $\boldm$ is the unconditional mean of stock returns defined in Subsection 3.2, and $\widehat{\boldm}$ is the sample mean.
		\item $\vertiii{\bTheta}_1=\mathcal{O}(\bar{s})$.	
	\end{enumerate}
\end{lem}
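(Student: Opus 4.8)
The plan is to treat the two parts separately, since part (a) is a concentration statement about the sample mean while part (b) is a deterministic norm bound on the population precision matrix.

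For part (a), I would obtain the rate by a union bound combined with an exponential tail inequality for the coordinatewise sample means. Writing the $i$-th coordinate of $\widehat{\boldm}-\boldm$ as $\frac{1}{T}\sum_{t=1}^{T}(r_{it}-m_i)$, a mean-zero average, the task reduces to controlling $\max_{i\leq p}\abs{\frac{1}{T}\sum_{t=1}^{T}(r_{it}-m_i)}$. Under the factor model each $r_{it}-m_i$ decomposes into a $K$-dimensional factor part (with $K$ fixed) and the idiosyncratic part $\varepsilon_{it}$, which is $\beta$-mixing by Assumption \ref{A3} and carries sub-exponential tails in the sub-Gaussian case and the controlled tails guaranteed by the truncation arguments of \cite{fan2018elliptical} in the elliptical case. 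A Bernstein-type inequality for $\beta$-mixing sequences then gives $\Prob{}{\abs{\frac{1}{T}\sum_{t=1}^{T}(r_{it}-m_i)}>C\sqrt{\log p/T}}\leq p^{-cC^2}$ uniformly in $i$; a union bound over the $p$ coordinates and a large enough $C$ yields the claim. This is simply the first-moment analogue of the second-moment bound \ref{B1} already invoked in the proof of Theorem \ref{theor2}.

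For part (b), I would exploit the Sherman--Morrison--Woodbury representation $\bTheta=\bTheta_{\varepsilon}-\bTheta_{\varepsilon}\bB\bG\bB'\bTheta_{\varepsilon}$ with $\bG\defeq[\bTheta_f+\bB'\bTheta_{\varepsilon}\bB]^{-1}$, and bound the two pieces of $\vertiii{\bTheta}_1$ separately using submultiplicativity of the $\ell_1/\ell_1$-operator norm. The first piece is immediate: each column of $\bTheta_{\varepsilon}$ has at most $\bar{s}$ nonzeros, so by Cauchy--Schwarz $\vertiii{\bTheta_{\varepsilon}}_1=\max_j\norm{\bTheta_{\varepsilon,j}}_1\leq\sqrt{\bar{s}}\,\max_j\norm{\bTheta_{\varepsilon,j}}_2=\mathcal{O}(\sqrt{\bar{s}})$. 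For the low-rank correction, write it as $\bM\bG\bM'$ with $\bM\defeq\bTheta_{\varepsilon}\bB$ and use $\vertiii{\bM\bG\bM'}_1\leq\vertiii{\bM}_1\,\vertiii{\bG}_1\,\vertiii{\bM'}_1$. Here $\vertiii{\bM}_1=\mathcal{O}(p\sqrt{\bar{s}})$ (a column of $\bM$ is $\bTheta_{\varepsilon}$ times a column of $\bB$, whose $\ell_1$-norm is $\mathcal{O}(p)$ under $\norm{\bB}_{\text{max}}=\mathcal{O}(1)$), while $\vertiii{\bM'}_1=\mathcal{O}(\sqrt{\bar{s}})$ (a row of $\bM$ pairs a row of $\bTheta_{\varepsilon}$ against only $K$ bounded loadings). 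The decisive step is $\vertiii{\bG}_1=\mathcal{O}(1/p)$: by pervasiveness (Assumption \ref{A2}) $\Lambda_{\text{min}}(\bB'\bB)\gtrsim p$, and since $\bTheta_{\varepsilon}$ has eigenvalues bounded below we get $\Lambda_{\text{min}}(\bB'\bTheta_{\varepsilon}\bB)\gtrsim p$, so $\vertiii{\bG}_2=\mathcal{O}(1/p)$, which for the fixed-dimension $K\times K$ matrix $\bG$ transfers to $\vertiii{\bG}_1$. Multiplying the three factors gives $\mathcal{O}(p\sqrt{\bar{s}}\cdot p^{-1}\cdot\sqrt{\bar{s}})=\mathcal{O}(\bar{s})$, which dominates the $\mathcal{O}(\sqrt{\bar{s}})$ from the first piece and proves $\vertiii{\bTheta}_1=\mathcal{O}(\bar{s})$.

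The main obstacle I anticipate is the bookkeeping in part (b): the naive factorwise bound $\vertiii{\bM}_1=\mathcal{O}(p\sqrt{\bar{s}})$ blows up with $p$, so the argument only closes because pervasiveness forces the $K\times K$ inverse $\bG$ to shrink at exactly the rate $1/p$, cancelling the spurious factor of $p$. Securing this cancellation requires routing the estimate through operator norms (to access the eigenvalue lower bound from \ref{A2}) rather than entrywise, and tracking carefully which factor carries the $\sqrt{\bar{s}}$ sparsity and which carries the $p$ dimensionality. Part (a) is routine by comparison, the only subtlety being that the elliptical case still affords a sufficiently fast tail after truncation, which is inherited from \cite{fan2018elliptical}.
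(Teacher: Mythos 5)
Your proposal is correct and follows essentially the same route as the paper: for part (b) the paper uses exactly your Sherman--Morrison--Woodbury decomposition with the same bookkeeping $\vertiii{\bTheta}_1\leq\vertiii{\bTheta_{\varepsilon}}_1+\vertiii{\bTheta_{\varepsilon}\bB\lbrack\cdot\rbrack^{-1}\bB'\bTheta_{\varepsilon}}_1=\mathcal{O}(\sqrt{\bar{s}})+\mathcal{O}(\sqrt{\bar{s}}\cdot p\cdot\frac{1}{p}\cdot\sqrt{\bar{s}})$, where pervasiveness supplies the decisive $1/p$ factor from the $K\times K$ inverse, and you have merely spelled out the factorwise bounds the paper leaves implicit. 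For part (a) the paper simply cites Lemma 1 of \cite{CHANG2018}, and your union-bound-plus-Bernstein sketch for $\beta$-mixing data is precisely the standard argument underlying that cited result.
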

\begin{proof}
	~
	\begin{enumerate}[label=(\alph*)]
		\item The proof of Part (a) is provided in \cite{CHANG2018} (Lemma 1).
		\item To prove Part (b) we use Sherman-Morrison-Woodbury formula in \ref{equa18}:
		\begin{align}\label{A6new}
		\vertiii{\bTheta}_1 &\leq \vertiii{\bTheta_{\varepsilon}}_1+\vertiii{\bTheta_{\varepsilon}\bB\lbrack\bI_{K}+\bB'\bTheta_{\varepsilon}\bB \rbrack^{-1}\bB'\bTheta_{\varepsilon}}_1 \nonumber\\
		&=\mathcal{O}(\sqrt{\bar{s}})+\mathcal{O}(\sqrt{\bar{s}}\cdot p \cdot \frac{1}{p}  \cdot \sqrt{\bar{s}}) = \mathcal{O}(\bar{s}).
		\end{align}
		The last equality in \ref{A6new} is obtained under the assumptions of Theorem \ref{theor4}. This result is important in several aspects: it shows that the sparsity of the precision matrix of stock returns is controlled by the sparsity in the precision of the idiosyncratic returns. Hence, one does not need to impose an unrealistic sparsity assumption on the precision of returns a priori when the latter follow a factor structure - sparsity of the precision once the common movements have been taken into account would suffice.
	\end{enumerate}
\end{proof}
\begin{lem}\label{lemA1caner}
	Define $\theta = \boldm'\bTheta\boldm/p$ and $g = \sqrt{\boldm'\bTheta\boldm}/p$. Also, let $\widehat{\theta}=\widehat{\boldm}'\widehat{\bTheta}\widehat{\boldm}/p$ and $\widehat{g}=\sqrt{\widehat{\boldm}'\widehat{\bTheta}\widehat{\boldm}}/p$. Under the assumptions of Theorem \ref{theor3}:
	\begin{enumerate}[label=(\alph*)]
		\item $\theta=\mathcal{O}(1)$.
		\item  $\abs{\widehat{\theta}-\theta}=\mathcal{O}_P(\bar{s}^2\omega_{T} )=o_P(1)$.
		\item $\abs{\widehat{y}-y}=\mathcal{O}_P(\bar{s}^2\omega_{T} )=o_P(1)$, where $y$ was defined in \eqref{e2.17}.
		\item $\abs{\widehat{g}-g}=\mathcal{O}_P\Big(\lbrack\bar{s}^2\omega_{T}\rbrack^{1/2}\Big)=o_P(1)$.
	\end{enumerate}
\end{lem}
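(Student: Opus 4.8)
The plan is to treat $\widehat{\theta}=\widehat{\boldm}'\widehat{\bTheta}\widehat{\boldm}/p$ as a perturbed quadratic form, bound its error via Theorem~\ref{theor3} and Lemma~\ref{theorA1caner}, and then propagate that error through the smooth maps $\theta\mapsto y$ and $\theta\mapsto g$. For part (a), I would exploit the Woodbury structure \eqref{equa18}: since $\bTheta=\bTheta_{\varepsilon}-\bTheta_{\varepsilon}\bB[\bTheta_f+\bB'\bTheta_{\varepsilon}\bB]^{-1}\bB'\bTheta_{\varepsilon}$ with the correction term positive semidefinite, we have $\bTheta\preceq\bTheta_{\varepsilon}$, so $\boldm'\bTheta\boldm\leq\boldm'\bTheta_{\varepsilon}\boldm\leq\Lambda_{\text{max}}(\bTheta_{\varepsilon})\norm{\boldm}_{2}^{2}$. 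Using the standard approximate-factor-model regularity $\Lambda_{\text{max}}(\bTheta_{\varepsilon})=\mathcal{O}(1)$ (equivalently $\Lambda_{\text{min}}(\bSigma_{\varepsilon})$ bounded away from zero) together with the boundedness $\norm{\boldm}_{\text{max}}=\mathcal{O}(1)$, whence $\norm{\boldm}_{2}^{2}=\mathcal{O}(p)$, gives $\theta=\boldm'\bTheta\boldm/p=\mathcal{O}(1)$. For parts (c)--(d) I would additionally invoke that $\theta$ is bounded away from zero, a non-degeneracy condition on the (scaled) squared Sharpe ratio.

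For part (b), I would telescope
\[
\widehat{\boldm}'\widehat{\bTheta}\widehat{\boldm}-\boldm'\bTheta\boldm = (\widehat{\boldm}-\boldm)'\widehat{\bTheta}\widehat{\boldm}+\boldm'(\widehat{\bTheta}-\bTheta)\widehat{\boldm}+\boldm'\bTheta(\widehat{\boldm}-\boldm),
\]
and divide by $p$. The middle term is binding: by H\"older and the $\ell_1$ operator norm, $\abs{\boldm'(\widehat{\bTheta}-\bTheta)\widehat{\boldm}}\leq\norm{\boldm}_{\infty}\vertiii{\widehat{\bTheta}-\bTheta}_{1}\norm{\widehat{\boldm}}_{1}$, and since $\vertiii{\widehat{\bTheta}-\bTheta}_{1}=\max_{j}\norm{\widehat{\bTheta}_{j}-\bTheta_{j}}_{1}=\mathcal{O}_{P}(\bar{s}^{2}\omega_{T})$ by Theorem~\ref{theor3}, with $\norm{\boldm}_{\infty}=\mathcal{O}(1)$ and $\norm{\widehat{\boldm}}_{1}\leq\norm{\boldm}_{1}+p\norm{\widehat{\boldm}-\boldm}_{\text{max}}=\mathcal{O}_{P}(p)$, this contributes $\mathcal{O}_{P}(\bar{s}^{2}\omega_{T})$ after dividing by $p$. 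The two outer terms are smaller: Lemma~\ref{theorA1caner}(a) gives $\norm{\widehat{\boldm}-\boldm}_{\text{max}}=\mathcal{O}_{P}(\sqrt{\log p/T})$, and combined with $\vertiii{\bTheta}_{1}=\mathcal{O}(\bar{s})$ (Lemma~\ref{theorA1caner}(b)) and $\vertiii{\widehat{\bTheta}}_{1}\leq\vertiii{\bTheta}_{1}+\vertiii{\widehat{\bTheta}-\bTheta}_{1}=\mathcal{O}_{P}(\bar{s})$, each is $\mathcal{O}_{P}(\bar{s}\omega_{T})$, dominated by $\bar{s}^{2}\omega_{T}$. Hence $\abs{\widehat{\theta}-\theta}=\mathcal{O}_{P}(\bar{s}^{2}\omega_{T})=o_{P}(1)$, the last step by the sparsity assumption $\bar{s}^{2}\omega_{T}=o(1)$.

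Parts (c) and (d) then follow by passing the scalar error through a smooth function. For (c), write $y=h(\theta)$, $\widehat{y}=h(\widehat{\theta})$ with $h(x)=\sigma(x^{-1/2}+x^{1/2})$; since $\theta\asymp 1$ and $\widehat{\theta}-\theta=o_{P}(1)$, the argument remains in a compact subinterval of $(0,\infty)$ with probability approaching one, on which $\abs{h'}=\mathcal{O}(1)$, so a mean-value expansion yields $\abs{\widehat{y}-y}\leq\sup\abs{h'}\cdot\abs{\widehat{\theta}-\theta}=\mathcal{O}_{P}(\bar{s}^{2}\omega_{T})=o_{P}(1)$. For (d), rather than differentiating the square root near zero I would use the elementary inequality $\abs{\sqrt{a}-\sqrt{b}}\leq\sqrt{\abs{a-b}}$ for $a,b\geq 0$, applied to $\widehat{g}=\sqrt{\widehat{\theta}}$ and $g=\sqrt{\theta}$, giving $\abs{\widehat{g}-g}\leq\sqrt{\abs{\widehat{\theta}-\theta}}=\mathcal{O}_{P}\big([\bar{s}^{2}\omega_{T}]^{1/2}\big)=o_{P}(1)$; this explains why the rate is the square root of the rate in (b) and why no lower bound on $\theta$ is required here.

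The main obstacle is part (b): one must check that the cross terms involving $\widehat{\boldm}-\boldm$ are of strictly smaller order, and, crucially, that the matrix norm to be paired against $\norm{\boldm}_{\infty}$ is the $\ell_1$ operator norm, so that Theorem~\ref{theor3}'s column-wise $\ell_1$ rate $\bar{s}^{2}\omega_{T}$ (not the faster $\ell_2$ rate) is the governing one. A secondary subtlety worth flagging is that (c) genuinely relies on $\theta$ being bounded away from zero, whereas the square-root inequality makes (d) robust to a vanishing $\theta$.
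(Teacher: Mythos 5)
Your proposal is correct and follows essentially the same route as the paper's proof: the same pairing of $\norm{\widehat{\boldm}-\boldm}_{\text{max}}$ with the column-wise $\ell_1$ rate $\vertiii{\widehat{\bTheta}-\bTheta}_1=\mathcal{O}_P(\bar{s}^2\omega_{T})$ from Theorem \ref{theor3} for the quadratic form (your three-term telescoping is just a condensed version of the paper's exact five-term expansion, with the same dominant term $\abs{\boldm'(\widehat{\bTheta}-\bTheta)\boldm}/p$ and the same domination of the cross terms via Lemma \ref{theorA1caner}), and part (d) via the identical elementary inequality $\abs{\sqrt{a}-\sqrt{b}}\leq\sqrt{\abs{a-b}}$. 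If anything you are more careful than the paper on part (c), which it dismisses as ``trivially follows from Part (b)'' even though $y=\sigma(1+\theta)/\sqrt{\theta}$ genuinely requires $\theta$ bounded away from zero, exactly the non-degeneracy condition you flag.
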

\begin{proof}
	~
	\begin{enumerate}[label=(\alph*)]
		\item Part (a) is trivial and follows directly from $\vertiii{\bTheta}_2=\mathcal{O}(1)$.
		\item First, rewrite the expression of interest:
		\begin{align} \label{A6}
		\widehat{\theta}-\theta &= \lbrack (\widehat{\boldm}-\boldm)'(\widehat{\bTheta}-\bTheta) (\widehat{\boldm}-\boldm)  \rbrack/p + \lbrack (\widehat{\boldm}-\boldm)'\bTheta (\widehat{\boldm}-\boldm)  \rbrack/p \nonumber\\
		&+\lbrack 2(\widehat{\boldm}-\boldm)'\bTheta\boldm  \rbrack/p + \lbrack 2 \boldm'(\widehat{\bTheta}-\bTheta) (\widehat{\boldm}-\boldm)  \rbrack/p \nonumber\\
		&+ \lbrack\boldm'(\widehat{\bTheta}-\bTheta) \boldm  \rbrack/p.
		\end{align}
		We now bound each of the terms in \ref{A6} using the expressions derived in \cite{Caner2019} (see their Proof of Lemma A.3), Lemma \ref{theorA1caner} and the fact that $\log p/T=o(1)$.
		\begin{align}\label{A7}
		\abs{ (\widehat{\boldm}-\boldm)'(\widehat{\bTheta}-\bTheta) (\widehat{\boldm}-\boldm)  }/p &\leq \norm{\widehat{\boldm}-\boldm }_{\text{max}}^2\vertiii{\widehat{\bTheta}-\bTheta }_1 \nonumber\\ &= \mathcal{O}_P\Big(\frac{\log p}{T} \cdot \bar{s}^2\omega_{T} \Big)
		\end{align}
		\begin{align}
		\abs{ (\widehat{\boldm}-\boldm)'\bTheta (\widehat{\boldm}-\boldm)  }/p \leq \norm{\widehat{\boldm}-\boldm }_{\text{max}}^2\vertiii{\bTheta}_1 = \mathcal{O}_P\Big(\frac{\log p}{T} \cdot \bar{s} \Big).
		\end{align}
		\begin{align}
		\abs{ (\widehat{\boldm}-\boldm)'\bTheta\boldm }/p \leq \norm{\widehat{\boldm}-\boldm }_{\text{max}}\vertiii{\bTheta}_1 = \mathcal{O}_P\Big(\sqrt{\frac{\log p}{T}} \cdot \bar{s} \Big).
		\end{align}
		\begin{align}
		\abs{\boldm'(\widehat{\bTheta}-\bTheta) (\widehat{\boldm}-\boldm)}/p &\leq \norm{\widehat{\boldm}-\boldm }_{\text{max}}\vertiii{\widehat{\bTheta}-\bTheta }_1 \nonumber\\ &= \mathcal{O}_P\Big(\sqrt{\frac{\log p}{T}} \cdot \bar{s}^2\omega_{T} \Big).
		\end{align}
		\begin{align}
		\abs{\boldm'(\widehat{\bTheta}-\bTheta) \boldm  }/p \leq \vertiii{\widehat{\bTheta}-\bTheta }_1 = \mathcal{O}_P\Big(\bar{s}^2\omega_{T} \Big).
		\end{align}
		\item Part (c) trivially follows from Part (b).
		\item This is a direct consequence of Part (b) and the fact that $\sqrt{\widehat{\theta}-\theta}\geq \sqrt{\widehat{\theta}}-\sqrt{\theta}$.
	\end{enumerate}	
\end{proof}
\subsection{Proof of Theorem \ref{theor3a}}
Using the definition of MRC weight in \eqref{e158}, we can rewrite
\begin{align*}
&\norm{\widehat{\bw}_{\text{MRC}}-\bw_{\text{MRC}}}_1 \leq \frac{\frac{g}{p}\Big[\norm{ (\widehat{\bTheta}-\bTheta) (\widehat{\boldm}-\boldm) }_1+  \norm{ (\widehat{\bTheta}-\bTheta) \boldm }_1 + \norm{ \bTheta (\widehat{\boldm}-\boldm) }_1  \Big] + \abs{\widehat{g}-g}\norm{\bTheta\boldm}_1}{\abs{ \widehat{g}}g}\\
&\leq \frac{\frac{g}{p}\Big[p\vertiii{ \widehat{\bTheta}-\bTheta }_1\norm{ (\widehat{\boldm}-\boldm) }_{\text{max}}+ p\vertiii{ \widehat{\bTheta}-\bTheta }_1 \norm{ \boldm }_{\text{max}} + p\vertiii{  \bTheta }_1\norm{ (\widehat{\boldm}-\boldm) }_{\text{max}}  \Big] + p\abs{\widehat{g}-g}\vertiii{ \bTheta }_1\norm{\boldm}_{\text{max}}}{\abs{ \widehat{g}}g}\\
&=\mathcal{O}_P \Big( \bar{s}^2\omega_{T} \cdot \sqrt{\frac{\log p}{T}}  \Big) +  \mathcal{O}_P \Big(\bar{s}^2\omega_{T} \Big)+ \mathcal{O}_P\Big(\bar{s}\cdot \sqrt{\frac{\log p}{T}} \Big) +  \mathcal{O}_P \Big(\lbrack \bar{s}^2\omega_{T} \rbrack^{1/2}\cdot \bar{s} \Big) = o_P(1),
\end{align*}
where we used Lemmas 1-2 to obtain the rates.
\subsection{Proof of Theorem \ref{theor4}}
The KKT conditions for the nodewise Lasso in \eqref{e21} imply that
\begin{align*}
\hat{\tau}_{j}^{2}=(\widehat{\bvarepsilon}_{j}-\widehat{\bE}_{-j}\widehat{\bgamma}_{j})'\widehat{\bvarepsilon}_{j}/T, \ \text{hence,} \quad \widehat{\bvarepsilon}'_{j}\widehat{\bE}\widehat{\bTheta}'_{\varepsilon,j}/T=1.
\end{align*}
As shown in \cite{Buhlmann2014}, these KKT conditions also imply that
\begin{align}
\norm{\widehat{\bE}'_{-j}\widehat{\bE}\widehat{\bTheta}_{\varepsilon,j}}_{\infty}/T \leq \lambda_j/\hat{\tau}_{j}^{2}.
\end{align}
Therefore, the estimator of precision matrix needs to satisfy the following \enquote{extended KKT} condition:
\begin{align}\label{Ap19}
\norm{\widehat{\bSigma}_{\varepsilon}\widehat{\bTheta}'_{\varepsilon,j}-\be_j}_{\infty} \leq \lambda_j/\hat{\tau}_{j}^{2},
\end{align}
where $\be_j$ is the $j$-th unit column vector. Combining the rate in $\ell_1$ norm in Theorem \ref{theor3} and \eqref{Ap19}, we have:
\begin{align}\label{Ap19B}
\norm{\widehat{\bSigma}\widehat{\bTheta}'_{j}-\be_j}_{\infty} \leq \lambda_j/\hat{\tau}_{j}^{2},
\end{align}
Using the definition of $\Delta$ in \eqref{ee73}, it is straightforward to see that
\begin{align}\label{Ap20}
\norm{\Delta}_{\infty}/\sqrt{T} = \norm{ (\widehat{\bTheta}\widehat{\bSigma}-\bI_p)(\widehat{\bw}-\bw)}_{\infty}\leq \norm{ \widehat{\bTheta}\widehat{\bSigma} - \bI_p }_{\infty} \norm{\widehat{\bw}-\bw}_1.
\end{align}
Therefore, combining \eqref{Ap19B} and \eqref{Ap20}, we have 
\begin{align}
\norm{\Delta}_{\infty} \leq \sqrt{T} \norm{\widehat{\bw}-\bw}_1\max_{j}\lambda_j/\hat{\tau}_{j}^{2}&=\mathcal{O}_P\Big(\sqrt{T}\cdot (s_0\vee \bar{s}^2)\omega_{T}\cdot \omega_{T}\Big)\\ &= \mathcal{O}_P\Big((s_0\vee \bar{s}^2)\Big(\log p/\sqrt{T}+\sqrt{T}/p\Big)\Big)=o_P(1).
\end{align}
Finally, we show that  $\norm{\widehat{\bOmega}-\bTheta}_{\infty}=o_P(1)$. Using Theorem \ref{theor3} and Lemma \ref{theorA1caner} we have $\norm{\widehat{\bTheta}_{j}}_1=\mathcal{O}_P(s_j)$. Also, 
\begin{align}\label{Ap22}
\widehat{\bOmega}=\widehat{\bTheta}\widehat{\bSigma}\widehat{\bTheta}' = (\widehat{\bTheta}\widehat{\bSigma}-\bI_p)\widehat{\bTheta}'+\widehat{\bTheta}'.
\end{align}
And using \ref{Ap19B} and \ref{Ap20} together with $\max_j\lambda_js_{j}^{2}=o_P(1)$:
\begin{align}\label{Ap23}
\norm{(\widehat{\bTheta}\widehat{\bSigma}-\bI_p)\widehat{\bTheta}'}_{\infty}\leq \max_j\lambda_j \norm{\widehat{\bTheta}_{j}}_1/\hat{\tau}_{j}^{2}=o_P(1).
\end{align}
It follows that
\begin{align}\label{Ap24}
\norm{\widehat{\bTheta}-\bTheta}_{\infty} \leq \max_j\norm{\widehat{\bTheta}_j-\bTheta_j}_2\leq \max_j\lambda_j\sqrt{s_j}=o_P(1).
\end{align}
Combining \ref{Ap22}-\ref{Ap24} completes the proof.
\cleardoublepage
\end{appendices}
\cleardoublepage
\end{document}